\newcites{latex}{References}
\newcolumntype{Y}{>{\centering\arraybackslash}X}
\newtheorem{lemma}{Lemma}
\newtheorem{theorem}{Theorem}
\newtheorem{corollary}{Corollary}
\newtheorem{definition}{Definition}
\def\K{{\cal K}}
\def\diag{\hbox{diag}}
\def\wh{\widehat}
\def\wt{\widetilde}
\def\diag{\hbox{diag}}
\newcommand{\mn}{{\mathrm{N}}}
\def\Beta{\mathrm{Beta}}
\def\Dir{\mathrm{Dir}}
\def\DP{\mathrm{DP}}
\newcommand{\DL}{\mathrm{DL}}
\newcommand{\NIW}{{\mathrm{NIW}}}
\def\Exp{\mathrm{Exp}}
\def\Ga{\mathrm{Ga}}
\newcommand\Lamb{Lamb }
\def\P_25_ICML{{\it Proceedings of the 25th international conference on Machine learning}}
\def\bse{\begin{eqnarray*}}
\def\ese{\end{eqnarray*}}
\def\be{\begin{eqnarray}}
\def\ee{\end{eqnarray}}
\def\bq{\begin{equation}}
\def\eq{\end{equation}}
\def\wh{\widehat}
\def\trans{^{\rm T}}
\newcommand{\transp}{\rm T}
\def\bone{{\mathbf 1}}
\def\b1e{\bm{e}}
\def\b1f{\bm{f}}
\def\by{\bm{y}}
\def\simind{\stackrel{\mbox{\scriptsize{ind}}}{\sim}}
\def\simiid{\stackrel{\mbox{\scriptsize{iid}}}{\sim}}
\newcommand{\half}{\sfrac{1}{2}}
\newcommand{\de}{\mathrm{d}}
\newcommand{\kn}{k_{n}}
\newcommand{\boldeta}{\mathcal{ \eta}}
\newcommand{\boldzeta}{ \zeta}
\newcommand{\Je}{\mathcal{J} }
\newcommand{\truncpi}{\tilde{\Pi}_p} 
\newcommand{\R}{\mathbb{R}}
\newcommand{\ke}{\mathcal{K}}
\newcommand{\p}{\mathbb{P}_0^p}
\newcommand{\ptheta}{\mathbb{P}_\vartheta^p}
\newcommand{\smin}{s_{\min}}
\newcommand{\smax}{s_{\max}}
\renewcommand\footnoterule{\kern-3pt \hrule \textwidth 2in \kern 2.6pt}
\def\boxit#1{\vbox{\hrule\hbox{\vrule\kern6pt \vbox{\kern6pt \textcolor{blue}{#1}\kern6pt}\kern6pt\vrule}\hrule}}
\def\authorfootnote#1{{\let\thefootnote\relax\footnotetext{#1}}}
\begin{document}
\thispagestyle{empty}
\baselineskip=28pt

\newcommand{\papertitle}{\LARGE{\bf Escaping the Curse of Dimensionality in\\
		\vskip-1ex
		 Bayesian Model-based Clustering}}
\begin{center}

\papertitle
		 \end{center}
\baselineskip=12pt
\vskip 12pt 

\newcommand{\authors}{\begin{center}
		Noirrit Kiran Chandra$^{a}$ (noirritchandra@gmail.com)\\
		Antonio Canale$^{b}$ (canale@stat.unipd.it)\\
		David B. Dunson$^{c}$ (dunson@duke.edu)\\

		\vskip 7mm
		$^{a}$Department of Mathematical Sciences, \\
		The University of Texas at Dallas, Richardson, TX, USA 
		\vskip 5pt
		$^{b}$Dipartimento di Scienze Statistiche\\
		 Universit\`a degli Studi di Padova, Padova, Italy\\
		\vskip 5pt
		$^{c}$Departments of Statistical Science and Mathematics\\
		 Duke University, Durham, NC, USA
\end{center}}
\authors

\vskip 12pt 
\begin{center}
{\Large{\bf Abstract}} 
\end{center}
\baselineskip=14pt
Bayesian mixture models are widely used for clustering of high-dimensional data with appropriate uncertainty quantification.
However, as the dimension of the observations increases, posterior inference often tends to favor too many or too few clusters. 
This article explains this behavior by studying the random partition posterior in a non-standard setting with a fixed sample size and increasing data dimensionality.  
We provide conditions under which the finite sample posterior tends to either assign every observation to a different cluster or all observations to the same cluster as the dimension grows. 
Interestingly, the conditions do not depend on the choice of clustering prior, as long as all possible partitions of observations into clusters have positive prior probabilities, and hold irrespective of the true data-generating model.
We then propose a class of latent mixtures for Bayesian clustering (Lamb) on a set of low-dimensional latent variables inducing a partition on the observed data.
The model is amenable to scalable posterior inference and we show that it can avoid the pitfalls of high-dimensionality under mild assumptions.
The proposed approach is shown to have good performance in simulation studies and an application to inferring cell types based on scRNAseq.

\vskip 20pt 
\baselineskip=12pt
\noindent\underline{\bf Key Words}: 
Big data; 
Clustering; 
Dirichlet process; 
Exchangeable partition probability function; 
High dimensional; 
Latent variables; 
Mixture model.

\par\medskip\noindent
\underline{\bf Short/Running Title}: Bayesian High-dimensional Clustering

\par\medskip\noindent
\underline{\bf Corresponding Author}: Noirrit Kiran Chandra (noirritchandra@gmail.com)

\pagenumbering{arabic}
\setcounter{page}{0}
\newlength{\gnat}
\setlength{\gnat}{26pt}
\baselineskip=\gnat
\vskip 4cm

\newpage

\section{Introduction}
\label{intro} 
High-dimensional data $y_i = (y_{i1}, \dots, y_{ip})\trans$ for $i = 1, \dots, n$, with $p \gg n$, have become commonplace, and there is routinely interest in clustering observations $\{1,\ldots,n\}$ into groups.  
As an illustrative application, we consider  single-cell RNA sequencing {(scRNASeq)} data; clustering of the cells based on their high-dimensional gene expression profiles produces potential cell types and provides information on heterogeneous cell populations of potential utility in disentangling carcinogenic processes.
RNAseq data is an exemplary setting in which $p$ is massive and clustering is crucial due to interest in inferring cell types.   Although there are a variety of alternatives in the literature \citep[see][for a review]{kiselev2019challenges}, we are particularly motivated to consider a Bayesian approach due to the potential for propagating uncertainty in inferring cell types. 
Additionally hierarchical Bayes models allow for borrowing of information in a principled manner in complicated scenarios.

Bayesian clustering is typically based on mixture models of the form: 
\begin{eqnarray}
	{y_{i} \stackrel{iid}{\sim} f,\quad f(y) = \sum_{h=1}^{k} \pi_{h} \mathcal{K}( y; \theta_{h})}, \label{eq:mix}
\end{eqnarray}
where $f(\cdot)$ is the marginal density of the data,
$k$ is the number of components, 
$\pi = (\pi_1,\ldots,\pi_k)\trans$ are probability weights, $\mathcal{K}(y; \theta_{h})$ is the density of the data within component $h$, {and the number of clusters in data $y_1,\ldots,y_n$ corresponds to the number of occupied components $\kn \le k$}. 
When $p$ is large and $y_i \in \R^p$, a typical approach chooses $\mathcal{K}(y; \theta_h)$ as a multivariate Gaussian density with a constrained and parsimonious  covariance \citep[see][for a review]{bouveyron2014model}. 
Examples include matrices that are
diagonal \citep{banfield1993model}, block diagonal \citep{galimberti2013using} or have a factor analytic representation \citep{ghahramani1996algorithm}.  

To avoid sensitivity to a pre-specified $k$, one can place a prior on $k$ to induce a mixture of finite mixture model \citep{miller2018, fruhwirth2021}.  
Alternatively, a Bayesian nonparametric approach lets $k=\infty$, which allows $k_n$ to increase without a bound as $n$ increases.  
Under a Dirichlet process \citep{ferguson1973bayesian} $\kn$ increases at a log rate in $n$, while for a Pitman-Yor process \citep{pitman1997two} the rate is a power law.

Notably, Bayesian approaches can be used to intrinsically regularize the model complexity, as discussed by  \citet{ockham_razor} exploiting the idea of a `Bayesian Ockham razor'. While in many circumstances relying on the Bayesian Ockham razor is sufficient to choose the appropriate compromise between extremes, e.g.  too many or too few clusters, in what follows we will argue that this is not the case in high-dimensional clustering.\label{pg:more_ockam} Indeed, when $p$ is very large, the posterior distribution of  $k_n$ can concentrate on large values \citep{celeuxetal:com_sol}; often the posterior mode of $\kn$ is even  equal to $n$ so that each subject is assigned to its own singleton cluster. 
\label{ch:smallpexample1}Consider, for example, the right panel of Figure \ref{fig:simpleboxplots} in the supplementary materials, which displays the distribution of the mean number of clusters in 100 replicates of a simple simulation example where we generate  samples of size $n=10$ from a $p=20$ variate normal distribution with mean zero and identity covariance. 
The boxplot, obtained running a standard Dirichlet process mixture, clearly shows how $\kn$ is concentrated near $n$ even for this moderate value of $p$.\label{pg:pg2}
\citet{celeuxetal:com_sol} conjectured that this aberrant behavior is mainly due to slow mixing of Markov chain Monte Carlo samplers.   
\citet{fruhwirth2006finite} combat this problem with a specific prior elicitation criterion; this can be successful for $p \approx 100$, but calibration of hyperparameters is a delicate issue and scaling to $p>1,000$ is problematic.  
Alternatively, one may attempt to cluster in lower dimensions via 
variable selection in clustering \citep{tadesse2005bayesian, kim2006variable} or by introducing both global and  variable-specific clustering indices for each subject, so that only certain variables inform global cluster allocation \citep{dunson2009nonparametric}.

However, we find these approaches complicated and to not address the fundamental question of what is causing the poor performance of Bayesian clustering for large $p$.  To fill this gap, we provide theory showing that, as $p \to \infty$ with $n$ fixed, the posterior can assign probability one to a trivial clustering - either with $\kn =1$ and all subjects in one cluster or with $\kn=n$ and every subject in a different cluster.  We further show that the conditions under which these degenerate limiting behaviors occur are satisfied for seemingly standard priors and multivariate Gaussian kernels. 
In a related result for classification, \citet{bickel2004} showed that when $p$ increases at a faster rate than $n$, the Fisher's linear discriminant rule is equivalent to randomly assigning future observations to the existing classes. 

Our result has no relationship with the literature studying  the posterior behavior of $k_n$  as $n \to \infty$ for nonparametric Bayes procedures \citep{miller2014,cai2020finite,ascolani2022clustering}.  Indeed, our result holds for finite $n$ regardless of the true data generating model, and has fundamentally different implications---in particular, that one needs to be extremely careful in specifying the kernel $\mathcal{K}(y; \theta)$ and prior for $\theta$ in the large $p$ context. Otherwise, the true posterior can produce degenerate clustering results that have nothing to do with true structure in the data. 

A key question is whether it is possible to define models that can circumvent this pitfall? We show that the answer is yes if 
clustering is conducted on the level of low-dimensional latent variables $\eta_i$ underlying $y_i$.  When the dimension of $\eta_i$ is small relative to $p$,
$y_i$ provides abundant information about the lower-dimensional $\eta_i$ even in low signal-to-noise settings in which each individual $y_{ij}$ contributes very little information on its own. Hence, the curse of dimensionality can be turned into a blessing.  This motivates a novel notion of a Bayesian oracle for clustering.  The oracle has knowledge of the latent $\eta_i$s and defines a Bayesian mixture model for clustering based on the $\eta_i$s; the resulting oracle clustering posterior is thus free of the curse of dimensionality.    We propose a particular latent mixture model structure, which can be shown to satisfy this oracle property and additionally leads to straightforward computation.

The article is organized as follows.
Section \ref{sec:pitfall} gives details on the limiting behavior of usual clustering methods based on (\ref{eq:mix}).  
Section \ref{sec:lamb} introduces our mixture model on the latent variable level with prior specifications and posterior computation strategies.
In Section \ref{sec:properties}, we introduce a Bayesian oracle clustering rule and show that our model achieves this oracle property as the dimension grows to infinity.
Section \ref{sec:simstudy} shows simulation studies illustrating how our proposed model learns the latent space with increasing dimensions and 
compares our method with some popular clustering methods.
{Section \ref{sec:realdata} considers an application to scRNASeq data, and Section \ref{sec:discussion} discusses the results.} 
Proofs of the main results are included in the Appendix, while additional simulation results, theorems and proofs are reported in the supplementary materials.

\section{Limiting Behavior of High-Dimensional Bayesian Clustering}
\label{sec:pitfall}

Under a general Bayesian framework, model \eqref{eq:mix} {becomes}  
\begin{equation}
	{y_{i} \sim f, \quad f(y) = \sum_{h\geq 1}\pi_h \K{y}{\theta_{h}} , \quad \theta_{h} \simiid P_0, \quad \{\pi_{h}\} \sim Q_{0}} ,  \label{eq:NPB}
\end{equation}
where {$\{\pi_h\} \sim Q_0 $ denotes a suitable prior for the mixture weights. Examples include stick-breaking \citep{sethuraman1994} constructions or a $k$-dimensional Dirichlet distribution with the dimension $k$ given a prior 
	following a mixture of finite mixtures (MFMs) approach.} 

Let $c_i \in \{1, \dots, \infty\}$ denote the cluster label for subject $i$ (for $i=1,\ldots,n$), with $\kn = \#\{c_{1},\ldots,c_{n}\}$ denoting the number of clusters represented in the sample. Conditionally on $c_i = h$, we can write  $y_i \mid c_i=h \sim \K{y_i}{\theta_{h}}$.  Assume that $n_j$ is the size of the $j$th cluster with $\sum_{j=1}^{k} n_j = n$. The posterior
probability of observing the partition $\Psi$ induced by the clusters $c_1,\ldots,c_n$ conditionally on the data $\by = \{y_{1}, \dots, y_{n}\}$ is \begin{equation}
	\Pi(\Psi \mid \by) = \frac{\Pi(\Psi)\times \prod_{h\geq 1} \int \prod_{i:c_i=h} \K{y_i}{\theta} \de  P_{0}(\theta)}{
		\sum_{\Psi' \in {\mathscr P}} {\Pi}(\Psi') 
		\times \prod_{h\geq 1} \int \prod_{i:c_{i}=h} \K{y_i}{\theta} \de P_0(\theta)},
	\label{eq:marginalposterior}
\end{equation}
where ${\mathscr P}$ is the space of all possible partitions of $n$ data points into clusters.
The numerator of \eqref{eq:marginalposterior} is the product of the prior probability of $\Psi$ multiplied by a product of the marginal likelihoods of the observations within each cluster.
The denominator is a normalizing constant consisting of 
an enormous sum over $\mathscr{P}$.
{Assuming exchangeability, the prior probability of any partition of $n$ subjects into $k_n$ groups depends only on $n_1,\ldots,n_{k_n}$ and $k_n$ through an exchangeable partition probability function (EPPF).  The latter is available in closed form for popular choices of $Q_{0}$, including the Dirichlet process, Pitman-Yor process and certain MFMs.}

The posterior \eqref{eq:marginalposterior} forms the basis for Bayesian inferences on clusterings in the data, while providing a characterization of uncertainty.  We are particularly interested in how this posterior behaves in the case in which  $y_i = (y_{i1},\ldots,y_{ip})\trans$ are high-dimensional so that $p$ is very large.  To study this behavior theoretically, we consider the limiting case as $p \to \infty$ while keeping $n$ fixed.  
This setting is quite appropriate in our motivating applications to genomics, as there is essentially no limit to the number of variables one can measure on each study subject, while the number of study subjects is often small to moderate.  

In such settings with enormous $p$ and modest $n$, we would like the true posterior distribution in \eqref{eq:marginalposterior} to provide a realistic characterization of clusters in the data.  
However,  this is commonly not the case and as $p$ increases the posterior distribution can have one of two trivial degenerate limits. 
In particular, depending on the choice of kernel density $\ke(\cdot; \theta)$  and the base measure $P_{0}$ for the $\theta_{h}$'s, the posterior assigns probability one to either the $k_n=1$ clustering that places all subjects in the same cluster or the $k_{n}=n$ clustering that places all subjects in different clusters. We derive sufficient conditions behind such aberrant behaviors as
formalized in the following theorem.

\begin{theorem}
	Let $y_1, \dots, y_n$ denote $p$-variate random vectors with joint probability measure $\p$.
	Let $\Psi$ denote the partition induced by the cluster labels $c_1,\ldots,c_n$, and let $c_1',\ldots,c_n'$ denote a new set of cluster labels obtained from $c_1,\ldots,c_n$ by merging an arbitrary pair of clusters, with $\Psi'$ the related partition. 
	Assume $Q_{0}(\pi_h>0\text{ for all }h=1,\dots,n)>0$. 
	If 
	\begin{equation*}
		\limsup_{p\rightarrow\infty} \frac{
			\prod_{h\geq1}\int  \prod_{i: c_{i}=h}  \K{y_{i}}{\theta} \de {P_{0}}(\theta)}{
			\prod_{h\geq1}\int  \prod_{i: c'_{i}=h}  \K{y_{i}}{\theta}
			\de {P_{0}} (\theta)}  = 0
	\end{equation*}
	in $\p$-probability,
	then $\lim_{p\rightarrow\infty} \Pi(c_{1}= \cdots = c_{n} \mid \by)=1$ in  $\p$-probability.
	Else if  
	\begin{equation*}
		\liminf_{p\rightarrow\infty} \frac{
			\prod_{h\geq1}\int  \prod_{i: c_{i}=h}  \K{y_{i}}{\theta} \de  {P_{0}} (\theta)}{
			\prod_{h\geq1} \int  \prod_{i: c'_{i}=h}  \K{y_{i}}{\theta}
			\de   {P_{0}} (\theta)}  = \infty
	\end{equation*}
	in  $\p$-probability, then	$\lim_{p\rightarrow\infty} \Pi(c_{1}\neq \cdots \neq c_{n} \mid \by )=1$ in  $\p$-probability.
	\label{th:negative_result}
\end{theorem}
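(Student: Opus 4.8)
The plan is to work directly with the closed-form posterior \eqref{eq:marginalposterior} and reduce everything to controlling ratios of within-cluster marginal likelihoods under pairwise merges. The key structural observation is that any two partitions $\Psi_1, \Psi_2 \in \mathscr{P}$ can be connected by a finite sequence of elementary moves, each of which either merges a pair of clusters or splits a cluster into two; since $n$ is fixed, the number of such moves needed is bounded (at most $n-1$ merges to reach the all-in-one partition, and symmetrically for splits). So I would first fix an arbitrary partition $\Psi$ with $2 \le k_n \le n-1$ and compare it, via \eqref{eq:marginalposterior}, to the finest partition $\Psi_n$ (all $c_i$ distinct) in the second case, or to the coarsest partition $\Psi_1$ (all $c_i$ equal) in the first case. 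Writing $\Pi(\Psi\mid\mathbf{y})/\Pi(\Psi_n\mid\mathbf{y})$, the normalizing constants cancel, leaving the ratio of prior (EPPF) terms times the ratio of the marginal-likelihood products.

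The prior ratio is harmless: under the assumption $Q_0(\pi_h > 0 \text{ for all } h=1,\dots,n) > 0$, every partition of the $n$ subjects receives strictly positive prior probability (the EPPF is positive on all of $\mathscr{P}$), so the prior ratio is a fixed positive finite constant not depending on $p$. The entire argument therefore hinges on the marginal-likelihood ratio. Here I would telescope: express the passage from $\Psi$ to $\Psi_n$ as a composition of at most $n-1$ single-cluster splits, equivalently express $\Psi_n \to \Psi$ as at most $n-1$ merges of a pair of clusters. At each elementary step the ratio of the two marginal-likelihood products is exactly of the form appearing in the theorem's hypothesis, $\prod_{h\ge1}\int\prod_{i:c_i=h}\mathcal{K}(y_i;\theta)\,dP_0(\theta)$ over the same quantity for the merged labels $c_i'$, because all clusters untouched by that particular merge contribute identical factors to numerator and denominator and cancel. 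In the first regime this single-step ratio tends to $0$ in $f_0$-probability; in the second it tends to $\infty$. A finite product (telescope of length $\le n-1$) of terms each converging to $0$ (resp. $\infty$) in probability still converges to $0$ (resp. $\infty$) in probability, since $n$ is fixed.

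Putting it together in the second regime: for every partition $\Psi \ne \Psi_n$ we get $\Pi(\Psi\mid\mathbf{y})/\Pi(\Psi_n\mid\mathbf{y}) \to 0$ in $f_0$-probability. Since $\mathscr{P}$ is a finite set (of size the Bell number $B_n$, independent of $p$), we may sum over the finitely many $\Psi\ne\Psi_n$, and $\Pi(c_1\ne\cdots\ne c_n\mid\mathbf{y}) = \Pi(\Psi_n\mid\mathbf{y}) = 1/\bigl(1+\sum_{\Psi\ne\Psi_n}\Pi(\Psi\mid\mathbf{y})/\Pi(\Psi_n\mid\mathbf{y})\bigr) \to 1$ in $f_0$-probability. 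The first regime is entirely symmetric, comparing every $\Psi\ne\Psi_1$ to the coarsest $\Psi_1$ and using that each elementary merge ratio goes to $0$, so that $\Pi(\Psi_1\mid\mathbf{y})$ dominates. The one point requiring a little care — and the main obstacle — is the handoff from "in $f_0$-probability" convergence of each elementary ratio to convergence of the finite product and then of the finite sum: convergence in probability is preserved under continuous maps and finite products/sums, but one should state it cleanly (e.g.\ pass to the reciprocal ratio in the $\infty$ case so that one is dealing with convergence to a finite limit $0$, then invoke the continuous mapping theorem and closure of convergence in probability under finite sums). Everything else is bookkeeping over the fixed finite lattice of partitions.
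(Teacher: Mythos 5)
Your proposal is correct and follows essentially the same route as the paper: compare each partition to the extremal one via the ratio $\Pi(\Psi\mid\mathbf{y})/\Pi(\Psi'\mid\mathbf{y})$, factor it into a bounded EPPF ratio times the marginal-likelihood ratio, and iterate over a chain of pairwise merges (the paper's "by induction" is exactly your telescoping), finishing with a sum over the finite partition lattice. Your write-up is in fact more explicit than the paper's terse argument, particularly on the preservation of convergence in probability under finite products and sums.
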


The condition on $Q_{0}$ is equivalent to saying $\kn$ has positive prior mass on $1,\dots,n$, which is extremely mild and holds for essentially any prior in the literature, including the Dirichlet process, Pitman-Yor process and suitable MFMs that do not pre-specify $k < n$.\label{pg:Q_0_cond}
Changing the condition to $Q_{0}(\pi_h>0\text{ for all }h=1,\dots,k)>0$ with $k<n$, i.e. using a finite mixture model, leads to similar results. Specifically, if the first condition in Theorem \ref{th:negative_result} holds, then also for finite mixtures we will have a single occupied cluster comprising all samples.
If the opposite condition holds, instead, then all of the $k$ mixture components will be occupied. Both results are trivial modifications of the proof of Theorem \ref{th:negative_result}.\label{pg:finite_case}

Theorem \ref{th:negative_result} has disturbing implications in terms of the behavior of posterior distributions for Bayesian clustering in large $p$ settings. 
Notably, the theorem is stated for very general kernel density $\ke$ and base measure $P_{0}$, and the behavior is controlled by the induced marginal likelihoods obtained in integrating out the kernel parameter $\theta$ with respect to $P_0$.
Clearly it is the joint effect of $\ke$ and $P_{0}$ that leads to the two limiting results and thus it is not immediate to convert the statement of the theorem to simple conditions on $\ke$ and $P_{0}$. \label{pg:intuition_on_funcs}
However,  as we will discuss in detail, we can argue that these conditions are related to the two extreme situations of complex over-parametrized models 
having insufficiently informative priors and simpler models equipped with more informative priors.
To be more precise, consider the important and widely used special case corresponding to a location-scale mixture of multivariate Gaussian kernels: \begin{equation}
	y_{i} \simiid f, \quad f(y)  =  \sum_{h\geq 1} \pi_h \mn_p(y; \mu_{h}, \Sigma_{h}),\quad 
	(\mu_{h},\Sigma_{h}) \simiid P_{0},
	\label{eq:neg_model}
\end{equation}
where $\mn_{p}( \mu,\Sigma)$ denotes the $p$-dimensional multivariate normal density with mean $\mu$ and covariance matrix $\Sigma$.
We give two practical examples of Theorem \ref{th:negative_result} in
Corollary \ref{cor:neg_result2} and \ref{cor:neg_result}. 	
Let $ \lambda_{\min}(A)$ and  $ \lambda_{\max}(A)$ be the smallest and largest eigenvalues of a positive definite matrix $A$ and  $Y=[y_{1},\dots,y_{n}]\trans$ be the complete $n\times p$ data matrix.  Assume, for the true data generating distribution on the data $\by$,
\begin{enumerate}[label={(A\arabic*)}]
	\setcounter{enumi}{-1} 
	\item\label{cond_neg} $\liminf_{p\to\infty} \lambda_{\min}(Y Y\trans)/p>0$ in $\p$-probability and $\norm{y_i}^2\leq Kp$ for some $K>0$ in $\p$-probability. %
\end{enumerate}

Condition \ref{cond_neg}  is extremely mild ensuring that the data are non-atomic and is satisfied for any continuous distribution with finite second order moments.
Letting $\mathrm{IW}(\nu,\Lambda)$ denote an inverse-Wishart distribution with degrees of freedom $\nu$ and scale matrix $\Lambda$, we have the following:

\begin{corollary}
	\label{coroll1}
	Assume that the model \eqref{eq:neg_model} is used to cluster $\by$
	with $\Sigma_{h}\simiid \mathrm{IW}(\nu_{0},\Lambda_{0})$ and $\mu_{h} \mid \Sigma_{h} \simind \mn_{p}(\mu_{0},\kappa_{0}^{-1} \Sigma_{h})$,
	with $\norm{\mu_{0}}^2=O(p)$, $\kappa_{0}=O(1)$, $\nu_{0}=p+c$ for some fixed constant $c\geq 0$,  $\norm{\Lambda_{0}}_{2}=O(1)$ and $\norm{\Lambda_{0}}_{2}/\lambda_{\min}(\Lambda_{0})=O(1)$. 
	Under \ref{cond_neg} on the data $\by$, 
	${\Pi}(c_{1} = \cdots = c_{n} \mid \by )\rightarrow 1$ in  $\p$-probability.	
	\label{cor:neg_result2}
\end{corollary}

\begin{corollary}
	\label{coroll2}
	Assume that the model \eqref{eq:neg_model} is used to cluster $\by$
	with $\Sigma_h= \Sigma$ across all clusters, and let $\Sigma\sim \mathrm{IW}(\nu_{0},\Lambda_{0})$ and
	$\mu_{h}\mid \Sigma \simiid \mn_{p}(\mu_{0},\kappa_0^{-1} \Sigma)$,
	with $\norm{\mu_{0}}^2=O(p)$, $\kappa_{0}=O(1)$, $\nu_{0}>p-1$ such that $\lim_{p\to \infty}\nu_{0}/p>1$, and $\norm{\Lambda_{0}}_{2}=O(1)$ with $\norm{\Lambda_{0}}_{2}/\lambda_{\min}(\Lambda_{0})=O(1)$. 
	Under \ref{cond_neg} on the data $\by$,  ${\Pi}(c_{1} \neq  \cdots \neq c_{n} \mid  \by)\rightarrow1$ in  $\p$-probability.
	\label{cor:neg_result}
\end{corollary}

\label{cor_assumptions}
Bayesian model-based clustering routinely uses these setups for the kernel parameters and priors 
\citep{fruhwirth2019handbook}.
The conditions on $\mu_{0}$ and $\kappa_{0}$ ensure that the Euclidean norm of the prior mean grows with $p$ in the same order as the data $\{y_{i}\}$,
and the conditions on the scale matrix $\Lambda_{0}$ imply that the second moments of the location components are a priori bounded away from 0 while being finite; similar assumptions appear in \citet{yao2022bayesian} in a study on high-dimensional Gaussian location mixture models.
In terms of the degrees of freedom parameter $\nu_{0}$, 
in Corollary \ref{cor:neg_result2} the ratio $\nu_{0}/p$ is 1 in the limit inducing a heavy tailed prior predictive distribution, whereas in Corollary \ref{cor:neg_result} a thinner tailed prior predictive is induced.
Corollaries \ref{cor:neg_result2} and \ref{cor:neg_result} show that, for mixtures of Gaussians, we can obtain directly opposite aberrant limiting behavior of the posterior depending on the kernel and prior for the kernel parameters but not on the clustering prior $Q_{0}$.

Corollary \ref{cor:neg_result2} considers the case in which we allow flexible cluster-specific means and dispersion matrices, under typical conjugate multivariate normal $\mathrm{IW}$ priors.  \label{pg:parametrization}
This case can be viewed as a complex over-parametrized model
as $p$ increases and to combat this complexity the Bayesian Ockham razor \citep{ockham_razor} automatically assigns probability one to grouping all $n$ individuals into the same cluster effectively simplifying the model. At the other extreme, covered by Corollary \ref{cor:neg_result}, we assume an under-parametrized relatively simplistic model structure in which all the mixture components have a common covariance.  In this case, due perhaps to the relatively concentrated prior predictive distribution, there is not enough penalty for introducing new clusters, and all individuals are assigned to their own singleton cluster.
These results hold regardless of the true data-generating model, and in particular the true clustering structure.


These theoretical results demonstrate that in high dimensions it is crucial to choose a good compromise between parsimony and flexibility in Bayesian model-based clustering.  
Otherwise, the true posterior distribution of clusterings in the data can have effectively no relationship whatsoever with true clustering structure in the data. 
Although we focus on the limiting case as $p \to \infty$, we conjecture that this  behavior can `kick in' quickly as $p$ increases, based on intuition built through our proofs and through comprehensive simulation experiments. 

\section{Latent Factor Mixture}
\label{sec:lamb}
To overcome the problems discussed in Section~\ref{sec:pitfall}, we propose a general class of latent factor mixture models defined as
\vskip -4ex
\begin{equation}
y_{i}  \sim  f(y_i; \eta_i, \psi), \quad
\eta_{i} \sim  
\sum_{h=1}^{\infty} \pi_h \mathcal{K}( \eta_i; \theta_h), \label{eq:latent}
\end{equation}
where $\eta_i = (\eta_{i1},\ldots,\eta_{id})\trans$ are $d$-dimensional latent variables, $d< n$ is fixed and not growing with $p$, $f(\cdot; \eta_i, \psi)$ is the density of the observed data conditional on the latent variables and measurement parameters $\psi$ and $\mathcal{K}(\cdot; \theta)$ is a $d$-dimensional kernel density. 

Under \eqref{eq:latent}, the high dimensional data being collected are assumed to provide error-prone measurements of an unobserved lower-dimensional set of latent variables $\eta_{i}$ on subject $i$.  
As a canonical example, we focus on a linear Gaussian measurement model with a mixture of Gaussians for the latent factors: 
\vskip -4ex
\begin{equation}
y_{i} \sim \mn_{p}( \Lambda \eta_{i}, \Sigma),\quad
\eta_{i} \sim \sum_{h=1}^\infty \pi_{h} \mn_{d}(\mu_{h}, \Delta_{h}), \quad \{\pi_{h}\}\sim Q_{0},
\label{eq:milf}
\end{equation}
\vskip -1ex \noindent
where $\Sigma=\diag(\sigma_{1}^{2},\dots,\sigma_{p}^{2})$ is a $p\times p$  diagonal matrix, 
and $\Lambda$ is a $p \times d$ matrix of factor loadings. 
The key idea is to incorporate all the cluster-specific parameters at the latent data level instead of the observed data level to favor parsimony.  The latent variables are supported on a lower-dimensional hyperplane, and we map from this hyperplane to the observed data level through multiplication by a factor loadings matrix and then adding Gaussian noise. We could further simplify the model by assuming $\Sigma =\sigma^{2} I_p$ instead of $\Sigma$ diagonal;\label{pg:diagsigma} we find it appealing to allow the different $y_{ij}$s to have varying measurement error variances and hence focus mainly on the unconstrained diagonal case. We refer to model \eqref{eq:milf} as a LAtent Mixture for Bayesian (Lamb) clustering.
The model is highly flexible at the latent variable level, allowing differences across clusters in the mean through $\mu_h$ and the shape, size, and orientation through $\Delta_{h}$.

With different motivations, \citet{galimberti2009, baek2010, montanari2010}  proposed similar latent factor mixture models as \eqref{eq:milf} albeit with additional constraints. Moreover, they fixed the number of clusters, used EM algorithms for model fitting  and assessed goodness-of-fit via information criteria.\label{pg:refs}

The proposed Lamb model has fundamentally different implications from the popular mixture of factor analyzers of \citet{ghahramani1996algorithm}, which defines a mixture of multivariate Gaussians at the $p$-dimensional observed data level having cluster-specific means and covariance matrices, with the dimension of the covariances reduced via a factor model.  
In contrast, we are effectively learning a common affine space within which we can define a simple location-scale mixture of  Gaussians. 
Our approach not only massively reduces the effective number of parameters for large $p$, but also provides a successful compromise between the two extreme cases of Section \ref{sec:pitfall}.\label{pg:implication}

\subsection{Prior Specifications}
\label{subsec:prior_specifications}
In order to accommodate very high-dimensional data, with $p \gg n$, it is important to reduce the effective number of parameters in the $p \times d$ loadings matrix $\Lambda$.  There is a rich literature on sparse factor modeling using a variety of shrinkage or sparsity priors for $\Lambda$; for example, refer to \citet{bhattacharya2011sparse}
and the references therein. 
Although a wide variety of shrinkage priors for $\Lambda$ are appropriate, we focus on a Dirichlet-Laplace  prior \citep{dir_laplace}, as it is convenient both computationally and theoretically. 
On a $p$-dimensional vector $\theta$, the Dirichlet-Laplace prior with parameter $a$, denoted by $\DL(a)$, can be specified in the following hierarchical manner
\vskip -3ex
\begin{equation}
\theta_{j} \mid \phi,\tau \simind \mn(0,\psi_{j}\phi_{j}^{2}\tau^{2}),~~\psi_{j} \simiid \Exp(1/2),~~ \phi\sim \Dir(a,\ldots,a),~~ \tau \sim \Ga(pa,1/2), 
\label{eq:dir_laplace_prior}
\end{equation}
\vskip -1ex \noindent
where $\theta_{j}$ is the $j$-th element of $\theta$, $\phi$ is a vector of the same length as $\theta$, $\Exp(a)$ is an
exponential distribution with mean $1/a$, $\Dir(a_{1}, \dots, a_{p} )$ is the $p$-dimensional Dirichlet distribution, and $\Ga(a,b)$ is the gamma distribution with mean $a/b$ and variance $a/b^{2}$.
To impose shrinkage uniformly on its elements a priori, we let $\mbox{vec}(\Lambda)\sim \DL(a)$ where $\mbox{vec}(\Lambda)$ denotes the vectorization of $\Lambda$. \label{pg:vect}
We then choose inverse-gamma priors for the residual variances, 	
$\sigma_{j}^{-2} \simiid \Ga(a_{\sigma},b_{\sigma})$.

For the prior $Q_{0}$ on the cluster weights $\{\pi_{h}\}$, for convenience in computation, we use a stick-breaking prior \citep{ishwaran2001gibbs} derived from a Dirichlet process, which has concentration parameter $\alpha$ impacting the induced prior on the number of clusters.  
To allow greater data adaptivity, we choose a $\Ga(a_{\alpha},b_{\alpha})$ prior for $\alpha$.  
We assign the cluster-specific means and covariances $\{ \mu_h, \Delta_h \}$ independent multivariate normal inverse-Wishart priors with location $\mu_{0}$, precision parameter $\kappa_{0}$, inverse scale matrix $\Delta_{0}$ and degrees of freedom $\nu_{0}$.
Our hierarchical Bayesian model for the $\eta_i$s can be equivalently represented as 
\vskip -2.5ex
\begin{equation}
\eta_{i}\mid \mu_{i},\Delta_{i} \simind \mn_{d} (\mu_{i},\Delta_{i}), \quad \mu_{i},\Delta_{i}\mid G\simiid G, \quad G\sim \DP(\alpha,G_{0}), \quad \alpha \sim \Ga(a_{\alpha},b_{\alpha}),\label{eq:usage_model}
\end{equation}
\vskip -.75ex \noindent
where $G_{0}=\NIW(\mu_{0},\Delta_{0},\kappa_{0},\nu_{0})$.
The gamma prior on the concentration parameter $\alpha$ is commonly adopted in many applications motivated by \citet{escobar1995bayesian}. The role of this hyperprior and the elicitation of its hyperparameters has been carefully studied by \citet{fruhwirth2019here}, and \citet{ascolani2022clustering}
recently showed the prior to have a crucial impact on consistency in estimating the number of clusters.

In practice, the latent variable dimension $d$ is unknown. 
Potentially we could put a prior on $d$ and implement a reversible-jump type \citep{rjmcmc} Markov chain Monte Carlo (MCMC) algorithm, which may lead to inefficient and expensive computation. 
Instead we adopt a principal component analysis (PCA) based empirical Bayes type approach \citep{bai2008factorest} to set $d$ to a large value learned from the data and let the prior shrink the extra columns on $\Lambda$.\label{pg:est_d}
We use the augmented implicitly restarted Lanczos bidiagonalization algorithm \citep{irlba} to obtain approximate singular values and eigenvectors, and choose the smallest $\wh{d}$ 
explaining at least $95\%$ of the variability in the data.  
This strategy substantially simplifies the computation.
The left and right singular values are used to initialize the $\Lambda$ and $\eta_{i}$'s in our MCMC implementation.
We initialize our cluster membership indicators using $k$-means.

For all the simulation experiments of the next section and the application, we choose $\mu_{0}=0$ and $\Delta_{0}=\xi^{2} I_{d}$ for a scalar $\xi^{2}>0$.
To specify weakly informative priors,
we set $\xi^{2}=20$, $\kappa_{0}=0.001$, $\nu_{0}=\wh{d}+50$, $a_{\alpha}=b_{\alpha}=0.1$ as the hyper-parameters of the DP mixture prior;
$a_{\sigma}=1$, $b_{\sigma}=0.3$ as the hyper-parameters of the prior on the residual variances.
We set $a=0.5$ as the Dirichlet-Laplace parameter following the recommendation of \citet{dir_laplace}.

\subsection{Posterior Sampling}
\label{sec:posteriorcomputation}

For posterior computation we {use a} Gibbs sampler defined by the following steps.
\begin{description}[leftmargin=0in]
\item[Step 1] Letting $\lambda_j\trans$ denote the $j$th row of $\Lambda$, $\eta=[\eta_{1},\dots,\eta_{n}]\trans$, $D_j = \tau^2\mbox{diag}(\psi_{j1}\phi_{j1}^2, \dots, \psi_{jd}\phi_{jd}^2)$ and $y^{(j)} = (y_{1j},\dots,y_{nj})\trans$, for $j=1, \dots, p$ sample 
\vskip-2ex
\begin{equation*}
(\lambda_j \mid -)  \sim \mn_{d}\left\{ (D_j^{-1} + \sigma_j^{-2} \eta\trans\eta)^{-1}\eta\trans \sigma_j^{-2}y^{(j)}, \,\,(D_j^{-1} + \sigma_j^{-2} \eta\trans\eta)^{-1} \right\}.
\end{equation*}

\item[Step 2] Update the 
$\Delta_{h}$'s from  the inverse-Wishart distributions 
$\mathrm{IW}\left( \wh{\psi}_{h}, \wh{\nu}_{h} \right)$ where 
\vskip-2ex
\begin{equation*}
\bar{\eta}_h=\textstyle{\frac{1}{n_h}\sum_{i:c_i=h}\eta_{i}}, \quad \wh{\nu}_{h}=\nu_{0}+n_{h}, 
\end{equation*}
\vskip-2ex
\begin{equation*}
\textstyle{\wh{ \psi}_{h}=\xi^2 I_{d}+\sum_{i:c_{i}=h}(\eta_{i}-\bar{\eta}_{h})(\eta_{i}-\bar{\eta}_{h})\trans+ \frac{\kappa_{0} n_{h}}{\kappa_{0} +n_{h}} \bar{\eta}_{h} \bar{\eta}_{h}\trans.}
\end{equation*}
\vskip-1.5ex			
Due to conjugacy, the location parameters $\mu_{h}$'s can be integrated out of the model.

\item[Step 3] Sample the latent factors, for $i = 1,\dots, n$, from 
\begin{equation*}
(\eta_i \mid -) \sim \mn_{d}\left\{\Omega_h\rho_h,\Omega_h +\Omega_h (\wh{\kappa}_{h,-i}\Delta_h)^{-1}\Omega_h \right\},
\end{equation*}
where 
$ n_{h,-i}=\sum_{j\neq i}\mathbbm{1} (c_{j}=h)$, $\wh{\kappa}_{h,-i}=\kappa_{0}+n_{h,-i}$,
$\bar{\eta}_{h,-i}=\frac{1}{n_{h,-i}}\sum_{j:c_j=h,j\neq i}\eta_{i}$, $\wh{\mu}_{h,-i}=\frac{n_{h,-i}\bar{\eta}_{h,-i} }{n_{h,-i}+\kappa_{0}}$,
$\rho_{h} = \Lambda\trans \Sigma^{-1} Y_i + \Delta_h^{-1} \wh{\mu}_{h,-i}$ and $\Omega_h^{-1} = \Lambda\trans \Sigma^{-1} \Lambda + \Delta_h^{-1}.$

\item[Step 4] Sample the cluster indicator variables  $c_1,\dots,c_n$ with  probabilities
\begin{equation}
\Pi(c_{i}=h\mid -)\propto \begin{cases}
n_{h,-i}\int \mn_d(\eta_i;  \mu_{h}, \Delta_h ) \de\Pi(\mu_h, \Delta_h\mid c_{-i}, \eta_{-i}) \text{ for } h\in c_{-i},\\
\alpha \int \mn_d(\eta_i;  \mu_h, \Delta_h ) \de\Pi(\mu_h, \Delta_h ) \text{ for }h \notin c_{-i}.
\end{cases}\label{eq:gibbs_eta}
\end{equation}
where $\eta_{-i}=\{\eta_{j}:j \neq i\}$ and $c_{-i}=\{c_{j}:j \neq i\}$. 
Due to conjugacy the above integrals are analytically available.

\item[Step 5] Let $r$ be the number of unique $c_{i}$'s.
Following \citet{west1992hyperparameter}, 
first generate $\varphi\sim \Beta(\alpha+1,n)$, 
evaluate $\pi/(1-\pi)=(a_{\alpha}+r-1)/\left\{n(b_{\alpha}-\log\varphi)\right\}$ and generate
\begin{equation*}
\alpha \mid \varphi, r \sim \begin{cases}
\Ga(\alpha+r,b_{\alpha}-\log\varphi)\text{ with probability }\pi, \\
\Ga (\alpha+r-1,b_{\alpha}-\log\varphi )\text{ with probability }1-\pi.
\end{cases}
\end{equation*}

\item[Step 6] For $j=1, \dots, p$ sample $\sigma_j^2$ from $\Ga \left\{a_{\sigma} + n/2, b_{\sigma} +  \sum_{i=1}^n (y_{ij} - \lambda\trans_j \eta_{i})^{2}/2\right\}$.	

\item[Step 7] 
Update the hyper-parameters of the Dirichlet-Laplace prior  through: 
\begin{enumerate}[label=(\roman*)]
\item For $j=1, \dots, p$ and $h=1,\dots d$ sample $\wt{\psi}_{jh}$ independently from an inverse-Gaussian $\mbox{iG}(\tau\phi_{jh}/\abs{\lambda_{jh}},1)$ distribution  and set $\psi_{jh}=1/\wt{\psi}_{jh}$.
\item Sample the full conditional posterior distribution of $\tau$  from a generalized inverse Gaussian $\mbox{giG}\{dp(1-a),1, 2\sum_{j,h}\abs{\lambda_{jh} }/\phi_{jh} \}$ distribution.
\item To sample $\phi \mid \Lambda$, draw $T_{jh}$ independently with $T_{jh} \sim \mbox{giG}(a - 1, 1, 2\abs{\lambda_{jh}} )$ and set $\phi_{jh}  = T_{jh}/T$ with $T=\sum_{jh} T_{jh}$.		
\end{enumerate}
\end{description}

This simple Gibbs sampler sometimes gets stuck in local modes;  a key bottleneck is the exploration Step 4.  
Therefore, we adopt the split-merge MCMC procedure proposed by 
\citet{split_merge}; the authors note that the Gibbs sampler is useful in moving singleton samples between clusters while the split-merge algorithm makes major changes. 
Hence, we randomly switch between Gibbs and split-merge updates. 
The split-merge algorithm makes smart proposals by performing restricted Gibbs scans of the same form as in \eqref{eq:gibbs_eta}.

From the posterior samples of $c_{i}$'s, we compute summaries following  \citet{wade2018bayesian}. 	
Our point estimate is the partition visited by the MCMC sampler that minimizes the posterior expectation of the Binder loss \citep{binderloss} exploiting the posterior similarity matrix obtained from the different sampled partitions.

The sampling algorithm can be easily modified for other priors on $\Lambda$ having a conditionally Gaussian representation, with Step 7 modified accordingly.  For example, we could use horseshoe \citep{carvalho2009handling}, increasing shrinkage priors \citep{bhattacharya2011sparse,legramanti2019bayesian,schiavon}, or the fast factor analysis prior \citep{rovckova2016fast}.
Similarly, alternative priors for $\{ \pi_h \}$, such as  \citet{pitman1997two} or  \citet{miller2018}, can be adopted with minor modifications in Steps 4 and 5.

\section{Properties of the Latent Mixture for Bayesian Clustering Method}
\label{sec:properties}
\subsection{Bayes Oracle Clustering Rule}
\label{subsec:properties}
{We first define a Bayes oracle clustering rule where the observed data follow the distribution in model \eqref{eq:latent}, that is, the high dimensional $y_{i}$'s provide error-prone measurements on unobserved lower-dimensional latent variables $\eta_{i}$'s on subject $i$, and
we assume the oracle has knowledge of the exact values of the latent variables $\{\eta_{0i}\}$, where $\eta_{0i}$'s are $d$-dimensional latent vectors.
Given this knowledge, the oracle can define any Bayesian  mixture model to induce a posterior clustering of the data, which is not affected by the high-dimensionality of the problem.  
This leads to the distribution over the space of partitions in the following definition.

\begin{definition}
\label{def:oracle}
Let ${\boldeta}_0 = \{\eta_{01},\ldots,\eta_{0n}\}$ be the true values of the unobserved latent variables corresponding to each data point.
The following mixture model is assumed to cluster $\eta_{0}$
\begin{equation*}
\eta_{0i}\sim {\sum_{h=1}^{\infty} \pi_h \mathcal{K}( \eta_{0i}; \theta_h), \quad \{\pi_{h}\}\sim Q_{0}, \quad \theta_{h}\simiid G_{0}}.
\end{equation*}
Then the oracle probability of clustering is defined as
\begin{equation}
\Pi(\Psi \mid  {\boldeta}_{0} ) = 
\frac{{\Pi}(\Psi) \times 
	\int\prod_{h\geq 1}  \prod_{i:c_{i}=h} \K{\eta_{0i}}{\theta_h} \de G_{0}(\theta_{h})}{
	\sum_{\Psi' \in {\mathscr P}} {\Pi}(\Psi') 
	\times \int \prod_{h\geq 1} \prod_{i:c_{i}'=h} \K{\eta_{0i}}{\theta_{h}} \de G_{0}(\theta_{h})}.  \label{eq:marginalposterior_eta}
\end{equation}
\end{definition}

Probability \eqref{eq:marginalposterior_eta} expresses the oracles' uncertainty in clustering if the clustering model could have been applied on the true latent factors.  This is a gold standard in being free of the curse of dimensionality through using the oracles' knowledge of the true  latent variables, but we make no claims about the relationship between the oracle posterior and any `true' clustering.
Under the framework of Section \ref{sec:lamb}, the high-dimensional measurements on each subject provide information on these latent variables, with the clustering done on the latent variable level.  
Ideally, we would get closer to the oracle partition probability under the proposed method as $p$ increases, turning the curse of dimensionality into a blessing.  We show that this is indeed the case in Section \ref{subsec:suff_cond}.

{To this end,} we assume the oracle uses a location mixture of Gaussians with a common covariance matrix. 
We assume the following mixture distribution on $\eta_{0i}$'s, independent non-informative Jeffreys prior for the common covariance and arbitrary
prior $Q_{0}$ on the mixture probabilities:
\begin{equation}
\eta_{i} \simiid \sum_{h=1}^\infty \pi_h \mn_d(\mu_h, \Delta),\quad
\mu_{h}\mid \Delta\simiid \mn_d(0,\kappa_0^{-1}\Delta),\quad
\Delta \propto \abs{\Delta}^{-\frac{d+1}{2}},\quad
\{\pi_h \}\sim Q_{0}.
\label{eq:prior_on_eta}
\end{equation}	
For $d < n$, the oracle rule is well defined for the Jeffreys prior on $\Delta$.
Note that the marginal Jeffrey's prior is free of any hyperparameter. 

\label{ref:loc_scale_invariance}


\subsection{Assumptions on Data and Prior Specifications}
\label{subsec:assumptions}
In this section, we show that the posterior probability on the space of partitions induced by the proposed model converges to the oracle probability as $p\to \infty$ in expectation under appropriate conditions on the data generating process and the prior.   
We assume that the residual error variances $\sigma_{j}^{2}$'s are the same having true common value $\sigma_{0}^{2}$ for all $j=1,\dots,p$.
Our result is based on the following assumptions on $\p$, the true data-generating distribution of $y_{1},\dots, y_{n}$:
\begin{enumerate}[label={(C\arabic*)}]
\item\label{ass0} $y_{i}\simind \mn_{p}(\Lambda_{0}\eta_{0i}, \sigma_{0}^{2} I_{p})$, for each  $i=1,\ldots,n$; 
\item \label{ass1}\label{ass2}{$\lim_{p\rightarrow\infty} \norm{\frac{1}{p}\Lambda_{0}\trans \Lambda_{0}-M}_{2}=0 $ where $M$ is a $d\times d$ positive-definite matrix}; 
\item \label{ass3} $\sigma^{2}_L<\sigma^{2}_0<\sigma^{2}_U$ where $\sigma^{2}_L$ and $\sigma^{2}_U$ are known constants;
\item \label{ass4} $\norm{\eta_{0i}}=O(1)$ for each $i=1,\ldots,n$.
\end{enumerate}
Condition \ref{ass0} corresponds to the conditional likelihood of $y_{i}$ given $\eta_{i}$ being correctly specified and the data containing increasing information on the latent factors as $p$ increases. This increasing information assumption is extremely mild; indeed, each individual $y_{ij}$ can be very noisy and provide minimal information about $\eta_i$ and there will still be a build up of information across $j=1,\ldots,p$ as long as the additional variables are not completely uncorrelated with the target latent factors. In fact, we have a build up of information even when a proportion of the factor loadings are exactly zero, the factor loadings are very small relative to the residual variance, and the residuals are heavy-tailed.
We illustrate this empirically with a simple simulation study in Section \ref{sm subsec:degen_clustering} of the supplementary materials.
Condition \ref{ass1} ensures that $\Lambda_{0}$ is not \textit{ill-conditioned} and its spectral norm does not increase too fast with respect to $p$ since the highest and lowest eigenvalues of $\Lambda_{0}\trans\Lambda_{0}$ grow in $O(p)$.
\label{pg:ass_C2}
Related but much stronger conditions appear in the factor modeling \citep{FAN_factor,fan_factor2} and massive covariance estimation literature \citep{pati2014}. 
We allow the columns of $\Lambda_{0}$ to be non-orthogonal with varying average squared values which is expected in high-dimensional studies.  Condition \ref{ass3} bounds the variance of the observed $y_{i}$s and  \ref{ass4} is a weak assumption ensuring that the latent variables do not depend on $n$ or $p$. 
Additionally, we assume that the latent dimension $d$ is known.

Although we use a stick-breaking prior on the mixture probabilities $\{\pi_{h}\}$ in Section \ref{subsec:prior_specifications},
we derive our results for an arbitrary prior $Q_{0}$ for wider applicability. We assume the inverse-gamma prior on residual variance $\sigma^{2}$ to be restricted to the compact set $[\sigma_{L}^{2},\sigma_{U}^{2}]$.

\subsection{Main Results}
\label{subsec:suff_cond}
In Lemma \ref{lemma:suff_cond} we derive sufficient conditions for the posterior probability on the space of partitions to converge to the oracle probability for $p\to \infty$. 
\begin{lemma}
\label{lemma:suff_cond}
Let ${ \boldeta}=[\eta_{1},\dots, \eta_{n} ]\trans$,
${ \zeta}^{(p)}=\left[\zeta^{(p)}_{1},\dots,\zeta^{(p)}_{n}\right]\trans=(\sqrt{p\log p})^{-1}(\Lambda\trans\Lambda)^{\half}{\boldeta}$ and, 	for any $\delta>0$,
$B_{p,\delta}=\bigcap_{i=1}^{n} \{\Lambda, \eta_i: (\sqrt{p\log p})^{-1}  \norm{ \Lambda\eta_i-\Lambda_0\eta_{0i}}<\delta \}$.
Assume  for any $\delta>0$ 
\begin{equation}
\Pi (\bar{B}_{p,\delta} \mid  \by)\rightarrow 0 \quad \p\text{-a.s.} \label{eq:suff_eq}
\end{equation}
where $\bar{B}_{p,\delta}$ is the complement of ${B}_{p,\delta}$.
Let $E(\cdot\mid\by)$ denote expectation with respect to the posterior distribution of the parameters given data $\by$ and $\Pi(\Psi \mid { \zeta^{(p)}})$ be the conditional probability of partition $\Psi$ with ${\eta}_0$ replaced by ${\zeta}^{(p)}$ in \eqref{eq:marginalposterior_eta}.
Then, 
$
\lim_{p\rightarrow\infty} E\left\{\Pi(\Psi \mid { \zeta^{(p)}}) \mid \by \right\} =  \Pi(\Psi \mid {{\eta}_{0}})
$. 
\end{lemma}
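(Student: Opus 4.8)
The plan is to decompose the target expectation according to the event $B_{p,\delta}$ and show that the contribution from $\bar B_{p,\delta}$ vanishes, while on $B_{p,\delta}$ the posterior partition probability $\Pi(\Psi\mid\bzeta^{(p)})$ is uniformly close to the oracle probability $\Pi(\Psi\mid\bm\eta_0)$. First I would write
\[
E\!\left[\Pi(\Psi\mid\bzeta^{(p)})\mid\by\right]
= E\!\left[\Pi(\Psi\mid\bzeta^{(p)})\,\I_{B_{p,\delta}}\mid\by\right]
+ E\!\left[\Pi(\Psi\mid\bzeta^{(p)})\,\I_{\bar B_{p,\delta}}\mid\by\right].
\]
Since $\Pi(\Psi\mid\bzeta^{(p)})\le 1$, the second term is bounded by $\Pi(\bar B_{p,\delta}\mid\by)$, which tends to $0$ $\p$-a.s.\ by assumption \eqref{eq:suff_eq}; likewise the first term differs from $E[\Pi(\Psi\mid\bzeta^{(p)})\mid\by]$ by at most that same quantity, so it suffices to control $\Pi(\Psi\mid\bzeta^{(p)})$ on the event $B_{p,\delta}$ and then let $\delta\downarrow 0$ after $p\to\infty$.

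The heart of the argument is a continuity/stability estimate for the map $\bm z \mapsto \Pi(\Psi\mid\bm z)$, where $\Pi(\Psi\mid\bm z)$ is the partition posterior \eqref{eq:marginalposterior_eta} with $\eta_{0i}$ replaced by $z_i$ and kernel/base measure as fixed in Remark~\ref{remark:oracle}. Because $\Delta$ has the Jeffreys prior $\abs{\Delta}^{-(d+1)/2}$, the within-cluster marginal integral $\int\prod_{h\ge1}\prod_{i:c_i=h}\mn_d(z_i;\mu_h,\Delta)\,\de G_0(\theta)$ is available in closed form as a product of multivariate-$t$-type normalizing constants depending on the $z_i$ only through cluster sizes and within-cluster sums of squares; for $d<n$ these are finite, strictly positive, and—crucially—jointly continuous in $(z_1,\dots,z_n)$ on a neighborhood of $\bm\eta_0$. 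On $B_{p,\delta}$ we have $\tfrac1{\sqrt p}\norm{\Lambda\eta_i-\Lambda_0\eta_{0i}}<\delta$ for all $i$; combined with \ref{ass1} ($\tfrac1p\Lambda_0^T\Lambda_0\to M\succ0$) and \ref{ass4} ($\norm{\eta_{0i}}=O(1)$), this forces $\norm{\zeta^{(p)}_i-M^{1/2}\eta_{0i}}$ to be small — more precisely $O(\delta)$ plus an $o(1)$ term coming from replacing $\tfrac1p\Lambda^T\Lambda$ by $M$ and from $\tfrac1{\sqrt p}(\Lambda^T\Lambda)^{1/2}$ acting on $\eta_i$ versus $\eta_{0i}$. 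Since the oracle partition probability is invariant under the common linear transformation $\eta_{0i}\mapsto M^{1/2}\eta_{0i}$ (the Jeffreys prior on $\Delta$ together with the conjugate Gaussian base measure makes \eqref{eq:marginalposterior_eta} affine-equivariant — absorbing $M^{1/2}$ into $\Delta$ and $\mu_h$), we get $\Pi(\Psi\mid M^{1/2}\bm\eta_0)=\Pi(\Psi\mid\bm\eta_0)$, and the continuity estimate yields $\sup_{B_{p,\delta}}\abs{\Pi(\Psi\mid\bzeta^{(p)})-\Pi(\Psi\mid\bm\eta_0)}\to 0$ as first $p\to\infty$ and then $\delta\to 0$. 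Taking expectations and using boundedness of all quantities by $1$ completes the proof.

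The main obstacle I anticipate is making the continuity estimate genuinely uniform in the correct sense. The closed-form within-cluster marginals involve $\abs{S_h(\bm z)+\text{const}}$ raised to negative powers, where $S_h$ is a within-cluster scatter matrix; one must argue these determinants stay bounded away from $0$ and $\infty$ over the relevant region, which is where $d<n$ (so that no cluster can have a degenerate scatter with probability-one) and the $O_p$-type control on the $z_i$ both matter — on $B_{p,\delta}$ the $\zeta^{(p)}_i$ lie in a bounded neighborhood of the fixed points $M^{1/2}\eta_{0i}$, but one has to be careful that the denominator sum over all partitions $\Psi'\in\mathscr P$ (a finite sum, since $n$ is fixed) is bounded below, ensuring the normalization does not blow up. A secondary technical point is handling the $o(1)$ discrepancy between $\tfrac1p\Lambda^T\Lambda$ and $M$: on $B_{p,\delta}$ we do not directly control $\Lambda$, so one routes through $\norm{\Lambda\eta_i-\Lambda_0\eta_{0i}}$ and $\norm{\Lambda_0\eta_{0i}}^2 = \eta_{0i}^T(\Lambda_0^T\Lambda_0)\eta_{0i} \sim p\,\eta_{0i}^T M\eta_{0i}$ to pin down $\norm{\zeta^{(p)}_i}$ and $\norm{\zeta^{(p)}_i-\zeta^{(p)}_j}$ up to $O(\delta)+o(1)$, which is exactly the input the continuity estimate needs. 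Everything else — the split by $B_{p,\delta}$, the bound by $\Pi(\bar B_{p,\delta}\mid\by)\to0$, and the $\delta\downarrow0$ limit — is routine.
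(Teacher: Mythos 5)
Your proposal is correct and follows essentially the same route as the paper: split the posterior expectation over $B_{p,\delta}$ and $\bar{B}_{p,\delta}$, kill the second term via \eqref{eq:suff_eq} and boundedness, and on $B_{p,\delta}$ use the closed-form (hence continuous) dependence of \eqref{eq:marginalposterior_eta} on its arguments together with the invariance of the oracle probability under a common full-rank linear map of the $\eta_{0i}$. The only cosmetic difference is that the paper compares $\boldsymbol{\zeta}^{(p)}$ to $\boldsymbol{\zeta}_0^{(p)}=\frac{1}{\sqrt{p}}(\Lambda_0^T\Lambda_0)^{1/2}\boldsymbol{\eta}_0$ rather than to $M^{1/2}\boldsymbol{\eta}_0$, which makes the invariance identity exact at every $p$ and removes the extra $o(1)$ bookkeeping you flag as a secondary concern.
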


In the following theorem, we show that condition \eqref{eq:suff_eq} holds for \Lamb and hence we avoid the large $p$ pitfall.
The proof is in the supplementary materials.
\begin{theorem}
\label{th:main_result}
Let $B_{p,\delta}$ be as defined in Lemma \ref{lemma:suff_cond} and $\bar{B}_{p,\delta}$ be its complement set.
Then, under \ref{ass0}-\ref{ass4} and  model \eqref{eq:milf}, $\Pi (\bar{B}_{p,\delta} \mid  \by )\rightarrow 0$ $ \p$-a.s. for any $\delta>0$.
\end{theorem}

Theorem \ref{th:main_result} implies that our model learns the latent factors more accurately with increasing $p$.
In addition to the proof of Theorem \ref{th:main_result}, this result is further illustrated empirically  via a simple simulation experiment reported in Section \ref{subsec:simulation_latent} of the supplementary materials.


The oracle has a slightly simpler model specification than \eqref{eq:usage_model} assuming  common covariances across components. This simplification is done to make the associated theory more tractable, but the simplified location mixture case is rich enough to provide a nice test case for assessing how the proposed approach can escape the curse of dimensionality.

\label{ref:diff_limit}

As conditions \ref{ass0}-\ref{ass4} imply \ref{cond_neg},
the clustering models in Corollaries \ref{cor:neg_result2} and \ref{cor:neg_result} would still lead to the two extreme partitions. 
The Lamb model, in learning the low-dimensional latent space with increasing dimensions, escapes these pitfalls.



\section{Simulation Study}
\label{sec:simstudy}

\label{sec:simulation}
We perform a simulation study to analyze the performance of \Lamb in clustering high dimensional data. 
The sampler introduced in Section \ref{sec:posteriorcomputation} is available from the GitHub page of the first author.  
We compare with a Dirichlet process mixture of Gaussian model with diagonal covariance matrix  implemented in \texttt{R} package \texttt{BNPmix} \citep{bnpmix}\label{pg:failed_other_packages}, a nonparametric mixture of infinite factor analyzers implemented in \texttt{R} package \texttt{IMIFA} \citep{imifa}, and a pragmatic two-stage approach (PCA-KM) that performs an approximate sparse  principal component analysis of the high dimensional data to reduce dimensionality from $p$ to $\wh{d}$---with $\wh{d}$ the minimum number of components explaining at least $95\%$ of the variability as discussed in Section \ref{subsec:prior_specifications}---and then applies $k$-means on the principal components, with $k$ chosen by maximizing the average silhouette width \citep{rousseeuw1987silhouettes}. This same approach is used to choose $\wh{d}$ in implementing Lamb.

\begin{figure}[H]
	\centering
	\includegraphics[width=.95\textwidth]{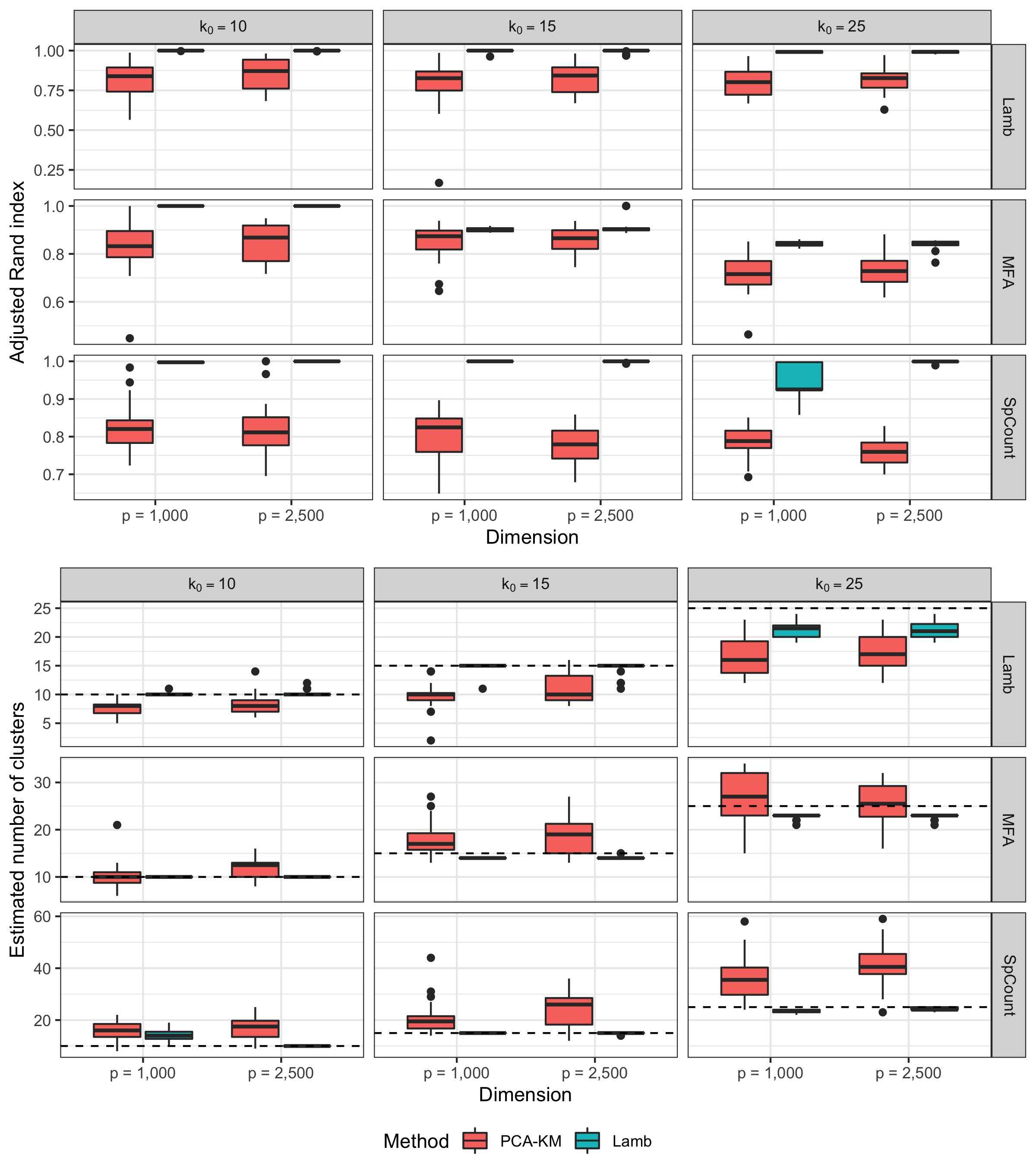}
	\vskip-3ex
	\caption{Comparison between our proposed Lamb and the two-stage PCA-KM approach:
		Distributions of the adjusted Rand indices (upper plot) and estimated number of clusters (lower plot) in 20 replicated experiments. 
		Horizontal dashed lines denote the true number of clusters. 
		The simulation scenarios, reported in each row, are labeled as Lamb for the model of Section \ref{sec:lamb}, MFA for mixture of factor analyzers and SpCount for the $\log$ transformed zero inflated sparse Poisson counts.}
	\label{fig:comparison}
	\vskip-4ex
\end{figure}

For the high-dimensional simulation settings we considered, both the mixture of Gaussians and the mixture of factor analyzers  showed high instability,  including software crashing for memory issues, lack of convergence, and extremely long running times. For these reasons we report a comparison with PCA-KM approach only.  To test the accuracy of the estimated clustering relative to the true clustering, we compute the adjusted Rand index \citep{rand1971objective}.   

We generated data under: [1] Lamb, [2] mixture of sparse factor analyzers (MFA), and [3] mixture of log transformed zero inflated sparse Poisson counts (SpCount)
[1]-[2] have latent dimension 20, while for [3] the data are discrete and highly non-Gaussian within clusters mimicking the data of 
Section \ref{sec:realdata}.
Details  are provided in Section \ref{SM_sec:simulation_details} of the supplementary materials.  

We vary true number of clusters 
$k_{0} \in \{10, 15, 25\}$, 
with the first $\lfloor2k_{0}/3\rfloor$ `main' clusters having the same probability and the remaining ones having together the same probability of a single main cluster.
For example if $k_{0} = 25$, we set 16 main clusters with probability $1/17$ each and 9 minor clusters of equal weights, whose total probability sums to $1/17$.  
This is a highly challenging case, as many methods struggle unless there are a small number of close to equal weight clusters that are well separated. 
The dimension $p$ varies in $p=\{1{,}000,~ 2{,}500\}$  while the sample size $n$ is $n=2{,}000$. 
Data visualization plots using \citet{umap} are in Section \ref{subsec:simulations_sm} of the supplementary materials. For each configuration, we perform 20 independent replications.
We run our sampler for $6{,}000$ iterations discarding the first $1{,}000$ as burn in and taking one draw every five to reduce autocorrelation. 
Prior elicitation follows the default specification of Section \ref{subsec:prior_specifications}. 
On average, $6{,}000$ iterations under these settings took between 40 and 50 minutes on a iMac with 4.2 GHz Quad-Core Intel Core i7 processor and 32GB DDR4 RAM.

Figure \ref{fig:comparison} reports  the distribution of the 20 replicates of the adjusted Rand index and mean estimated number of clusters. 
Our proposed Lamb is uniformly superior in each scenario obtaining high adjusted Rand indices, accurate clustering results, and less variability across replicates.  
In the MFA scenario, Lamb yields relatively lower Rand index for $k_{0} = 25$. 
This is {not unusual} due to model misspecification and the large number of clusters.

The Lamb results do not vary much across the simulation replicates 
because the oracle posterior is quite concentrated at the true clustering.
Since the dimension $p$ is in the thousands, the asymptotic results derived in Section \ref{sec:properties} kicked in resulting in narrow posterior credible intervals. To understand the performance of our proposed method in smaller sample sizes, we include additional simulation results with $n=500$ in Section \ref{sm subsec:smallsamp} of the supplementary materials.

Furthermore, Section \ref{sm subsec:degen_clustering} in the supplementary materials reports two simple simulation experiments showing that the degenerate clustering behavior discussed in Section \ref{sec:pitfall} is evident even in moderate dimensions of $p=20$.

\section{Application to ScRNASeq Cell Line Dataset}
\label{sec:realdata}

In this section, we analyze the GSE81861 cell line dataset \citep{cellline} to illustrate the proposed method. The dataset profiles 630 cells from 7 cell lines using the Fluidigm based single cell RNA-seq protocol \citep{fluidigm}. 
The dataset includes 83 A549 cells, 65 H1437 cells, 55 HCT116 cells, 23 IMR90 cells, 96 K562 cells, 134 GM12878 cells, 
174 H1 cells 
and $57{,}241$ genes. 
The cell types are known and hence the data provide a useful benchmark to assess performance in clustering high-dimensional data.

Following standard practice in single cell data analysis, we apply data pre-processing.
Cells with low read counts are discarded, as we lack reliable gene expression measurements for these cells, and 
data are normalized following \citet{scran_normalization}.  
We remove non-informative genes using M3Drop \citep{m3drop}. 
After this pre-processing phase, we obtain a final dataset with $n=531$ cells and $p=7{,}666$ genes. 

Applying our empirical Bayes approach, we estimate the latent dimension as $\wh{d}=19$.\label{pg:rna_d}
We implement \Lamb using our default prior, collecting $10,000$ iterations after a burn-in of $5,000$ and keeping one draw in five.
As comparison, we apply the two stage procedure of the previous section and the popular Seurat \citep{seurat2018} pipeline which performs  quality  control, normalization, and selects informative genes that exhibit high variation across the cells. \label{pg:highly_variable}

Graphical representations of the different clustering results are shown in Figure \ref{fig:realdata} via UMAP projections \citep{umap}.
Our proposed Lamb, the two stage approach, and Seurat achieve adjusted Rand indices of 0.977, 0.734 and 0.805 when compared to the true cluster-configuration and yield 12, 10, and 8 clusters, respectively. 
Seurat is reasonably accurate but splits the H1 cell-type into two clusters, while the two-stage approach is dramatically worse.

\begin{figure}[ht]
\centering
\includegraphics[width=\linewidth]{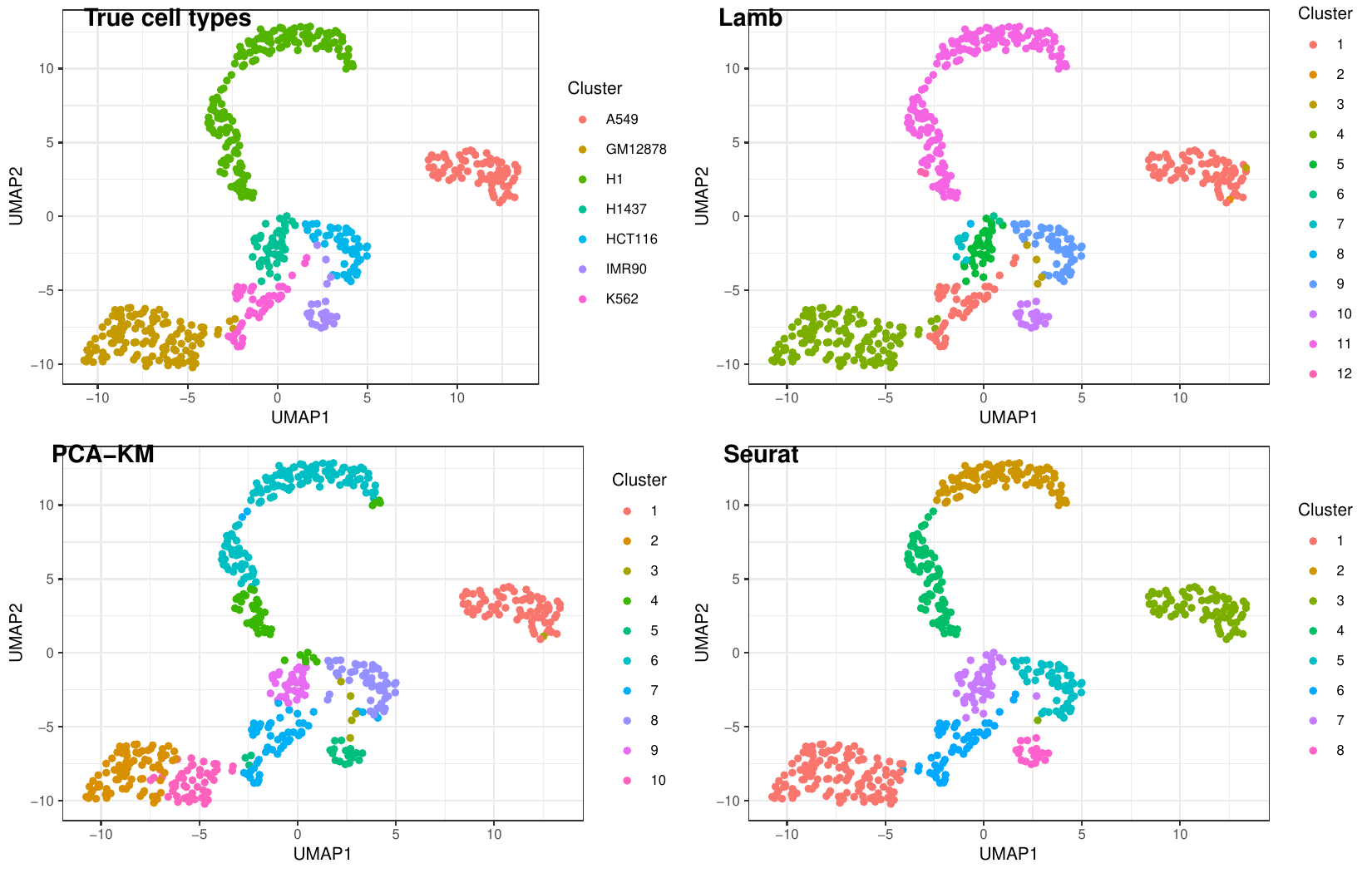}
\vskip -3ex
\caption{UMAP plots of the cell line dataset: Clusterings corresponding to the true cell-types, Lamb estimate, PCA-KM estimate and Seurat estimate are plotted in clockwise manner.
Different panels use different color legends.}
\label{fig:realdata}
\end{figure}

\begin{figure}[ht]
\centering
\includegraphics[trim={0cm 1.5cm .9cm 0.1cm},clip,height=0.5\textwidth,valign=c]{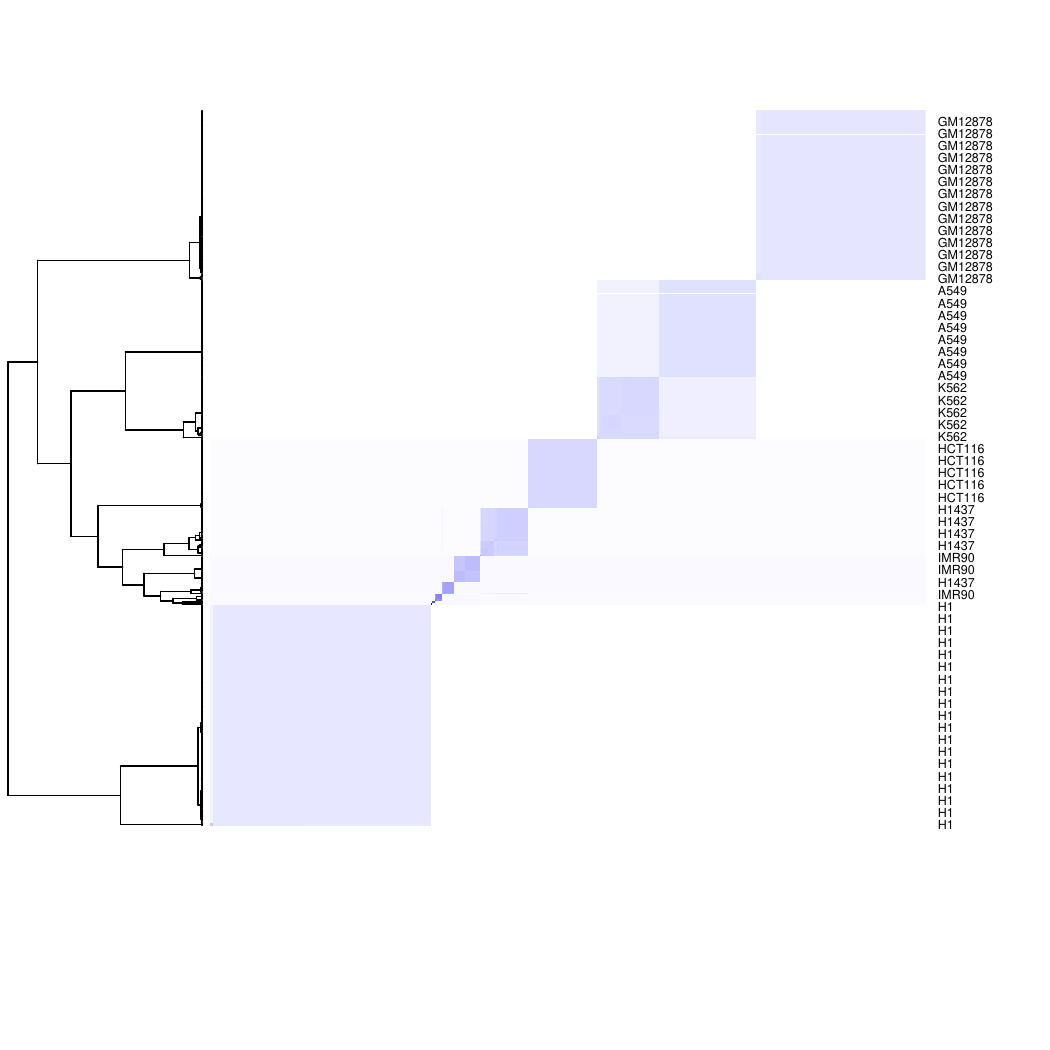}
\includegraphics[trim={0cm 1.5cm .9cm 0.1cm},clip,height=0.4\textwidth,valign=c]{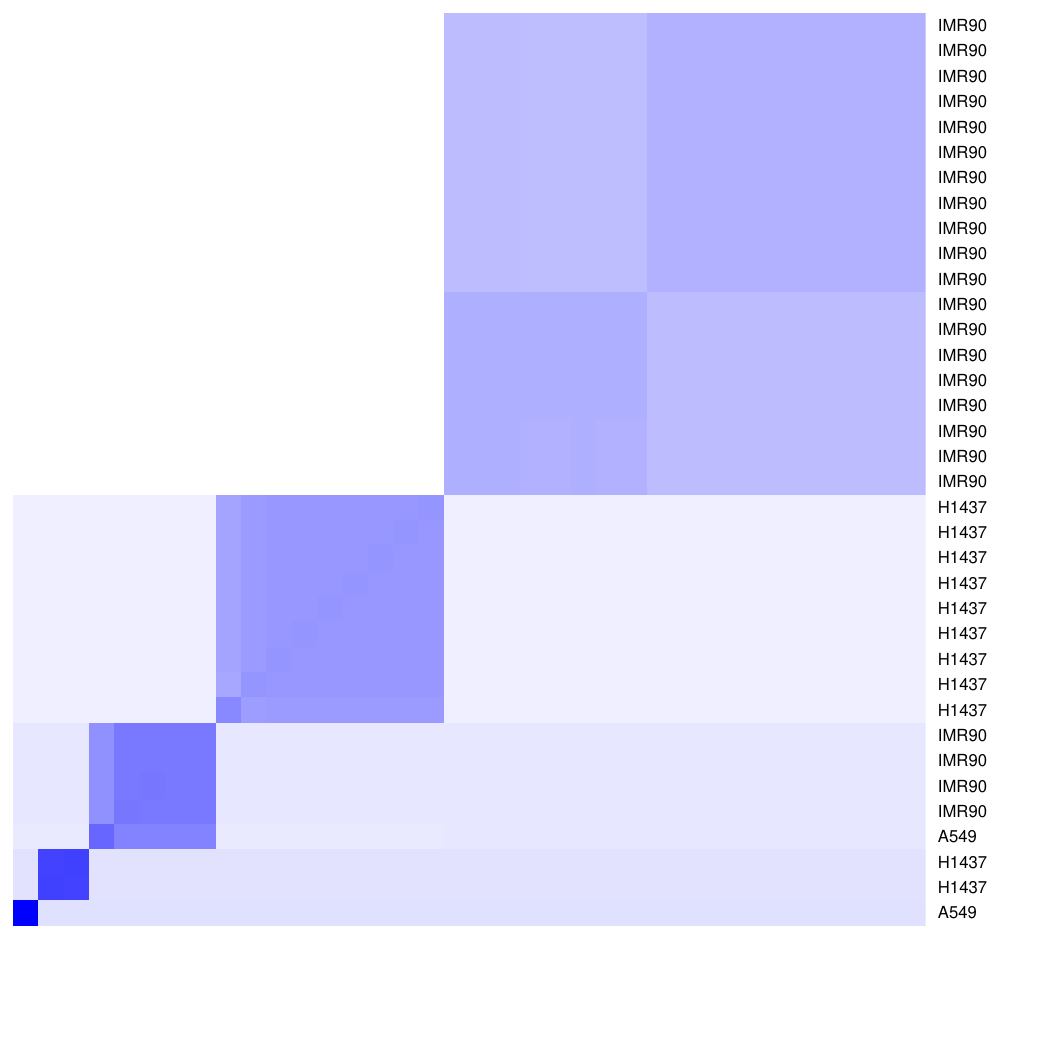}
\vskip -5ex
\caption{Posterior similarity matrix obtained from the Markov chain Monte Carlo samples of the Lamb method: Left panel reports the similarity matrix for the full cell line dataset along with the dendrogram obtained using complete linkage; row names report the true cluster names; right panel  zooms the center of the left panel.}
\label{fig:psmplot}	
\end{figure}

An appealing aspect of our approach is posterior uncertainty quantification. The 95\% credible interval for the adjusted Rand index is $[0.900, 0.985]$ and the posterior probability of having between 11 and 13 clusters is 0.98. This suggests that the posterior distribution is highly concentrated, which is consistent with our simulations.  
The posterior similarity matrix reported in the first panel of Figure \ref{fig:psmplot}---also reporting the related dendrogram obtained by using complete linkage---clearly shows that the majority of the observations have a high posterior probability of being assigned to a specific cluster and negligible probability of being assigned to artifactual clusters.\label{pg:artifactual}
Figure \ref{fig:psmplot} also shows micro clusters leading to over-estimation of the number of cell types.  
Two cells of cluster A549 are put in singleton clusters. 
Similarly cluster IMR90 is divided into two clusters of size 4 and 19 with negligible posterior probability of being merged.  
Finally cluster H1437 is split into four clusters with the main one comprising 35 of 47 observations and the smallest one comprising just one observation.  
Such micro-clusters have negligible impact for practical inference since Lamb does recover the original clustering configurations for most cell-types as reflected by the high adjusted Rand index with the true cell-types.
Single-cell experiments are subject to high technical noise 
\citep{singlecell_tech_noise}
which is not possible to completely remove in pre-processing steps.
Such noise can potentially induce differences between cells that may not have any biological significance,
for example, the cells in IMR90 (split into the clusters 3 and 10, see the top panel of Figure \ref{fig:realdata} for details)  exhibit a substantial amount of variability although they are biologically of the same type.

\section{Discussion}
\label{sec:discussion}

Part of the appeal of Bayesian methods is the intrinsic penalty for model complexity or `Bayesian Ockham razor'  \citep{ockham_razor}, 
which comes through integrating the likelihood over the prior in obtaining the marginal likelihood.  If one adds unnecessary parameters, then the likelihood is integrated over a larger region, which tends to reduce the marginal likelihood.   In clustering problems, one relies on the Bayesian Ockham razor to choose the appropriate compromise between the two extremes of too many clusters and over-fitting and too few clusters and under-fitting.  Often in low-dimensional problems, this razor is effective and one obtains a posterior providing a reasonable representation of uncertainty in clustering data into groups of relatively similar observations.  However, a key contribution of this article is showing that this is fundamentally not the case in high-dimensional problems, and one can obtain nonsensical results using seemingly reasonable priors.  

Perhaps our most interesting result is the degenerate behavior in the $p \to \infty$ case for the true posterior on clusterings, regardless of the true data generating model.
This negative result provided motivation for our latent factor mixture model, which addresses the large $p$ pitfall by clustering on the latent variable level.  
Using a low rank factorization with appropriate shrinkage priors, the method can also handle realistic high-dimensional problems.
Another interesting theoretical result is our notion of a Bayesian oracle for clustering; to our knowledge, there is not a similar concept in the literature.
We show that our proposed Lamb attains the oracle with increasing dimensions.

Several interesting projects stem from the proposed work, which is a first step towards addressing pitfalls of Bayesian approaches to high-dimensional clustering.
One important thread is designing faster MCMC algorithms for massive sample size exploiting parallel and distributed computing; for example, running MCMC for different subsets of the variables in parallel and combining the results.
Some recent works in the 
literature discuss related approaches \citep{yangni2020,song2020distributed} but without considering the pitfalls that arise in high-dimensional data clustering.
Another thread is to develop fast approximate inference algorithms that avoid MCMC, such as variational Bayes.  
In addition, it is of substantial interest to generalize the proposed approach to handle more complex data structures; for example, involving data that are not real-valued vectors and allowing for kernel misspecification \citep{miller2019robust}.
In our settings $d$ and $n$ are fixed and not growing with $p$.
The study of situations in which $p$, $d$ and $n$ jointly increase---at some rate---would be a very interesting theoretical extension of our results.

\section*{Supplementary Materials}
Proofs of additional theoretical results, simulation studies and MCMC convergence diagnostics are provided in the supplementary materials.

\section*{Appendix}

\setcounter{equation}{0}
\renewcommand{\theequation}{A.\arabic{equation}}
\renewcommand{\thelemma}{A.\arabic{lemma}}

\subsection*{Proofs of Section \ref{sec:pitfall}}

\begin{proof}[Theorem \ref{th:negative_result}]
Consider the ratio of posterior probabilities: 
\begin{equation} 
\frac{ \Pi(\Psi \mid \by) }{\Pi(\Psi'\mid \by)  }.
\label{eq:ratio}
\end{equation}
If this ratio converges to zero for all $c_1,\ldots,c_n$ in $\p$-probability as $p \to \infty$, then any partition nested into another partition is more likely \textsl{a posteriori} implying $\Pi(c_1=\cdots = c_n \mid \by) = 1$ in $\p$-probability so that all subjects are grouped in the same cluster with probability one.  Conversely if the ratio  converges to $+\infty$, then $\Pi(c_1\neq \cdots \neq c_n \mid \by) = 1$  in $\p$-probability and each subject is assigned to their own cluster with probability one.  

Without loss of generality, assume that $c_1, \dots, c_n$ define $k_n$ clusters of sizes $n_1, \dots, n_{k_n}$ and that  $c_i'=c_i$ for $c_i \in \{1,\ldots,{k_n}-2\}$ and $c_i'=k_n-1$ for $c_i \in \{k_n-1,k_n\}$, with $n_1', \dots, n'_{k_n'-1}$ the cluster sizes under the partition induced by the $c_i'$. In general, ratio \eqref{eq:ratio} can be expressed as
\begin{equation}
\frac{
\Pi (\Psi)}{
\Pi(\Psi') } \times 
\frac{ \prod_{h=1}^{k_n} \int \prod_{i:c_i=h}  \K{y_i}{\theta} \de P_0(\theta)}{
\prod_{h=1}^{k_n-1} \int \prod_{i:c'_i=h}  \K{y_i}{\theta} \de P_0(\theta) 
}.\label{eq:ratio2}
\end{equation}
The left hand side of \eqref{eq:ratio2} can be expressed as the ratio between the EPPFs. Since by assumption there is a positive prior probability for any partition in $\mathscr P$, this ratio is finite and does not depend on $p$ or the data generating distribution.
Thus, by induction and under the assumptions on the right factor of \eqref{eq:ratio2} we conclude the proof.
\end{proof}

\begin{proof}[Corollary \ref{cor:neg_result2}]
Define $c_1,\ldots,c_n$ and  $c'_1,\ldots,c'_n$ consistently with the proof of Theorem \ref{th:negative_result}. 
Then, consider the ratio of the marginal likelihoods
\begin{equation}
\frac{
\prod_{h=1}^{\kn} \int \prod_{i:c_i=h}  
\mn_p\left(y_i; \mu_{h} ,\Sigma_{h}\right) 
\mn_p(\mu_h; \mu_0, \kappa_0^{-1}\Sigma_{h}) 
IW(\Sigma_{h}; \nu_0, \Lambda_{0}) 
\, \de( \mu_{h},\Sigma_{h})} {
\prod_{h=1}^{\kn-1} \int \prod_{i:c'_i=h} 
\mn_p\left(y_i; \mu_{h} ,\Sigma_{h}\right) 
\mn_p(\mu_h; \mu_0, \kappa_0^{-1}\Sigma_{h}) 
IW(\Sigma_{h}; \nu_0, \Lambda_{0}) 
\, \de( \mu_{h},\Sigma_{h})
}.\label{eq:ratio2_neg}
\end{equation}	
The numerator of \eqref{eq:ratio2_neg} is 
$$\prod_{h=1}^{\kn} \left\{\frac{1}{\pi^{n_{h} p/2}} 
\frac{\Gamma_p(\frac{\nu_0 +n_{h}}{2})}{\Gamma_p(\frac{\nu_0}{2})} 
\left( \frac{\kappa_0}{\kappa_0+n_{h}}\right)^{\frac{p}{2}} \frac{\abs{\Lambda_{0}}^{\frac{\nu_0}{2}}} {\abs{\Lambda_0 + S_{h}^{\Psi} + \frac{n_{h}\kappa_{0}}{\kappa_0+n_{h}}(\bar{y}_{h}^{\Psi} -\mu_0)(\bar{y}_{h}^{\Psi} -\mu_0)\trans}^{\frac{\nu_0+n_{h}}{2}}} \right\},$$
with 
$\bar{y}_h^{\Psi}=n_{h}^{-1}\sum_{i:c_{i}=h}y_{i}$,
$S^{\Psi}_{h}=\sum_{i:c_i=h} \left(y_{i}-\bar{y}^{\Psi}_h\right)\left(y_{i}-\bar{y}^{\Psi}_h\right)\trans$, and $\Gamma_{p}(\cdot)$ being the multivariate gamma function.
Obtaining a corresponding expression for the denominator, the ratio  \eqref{eq:ratio2_neg} becomes
\begin{multline}
\frac{\Gamma_{p}\left(\frac{\nu_{0}+n_{\kn-1} }{2}\right) \Gamma_{p}\left(\frac{\nu_{0}+n_{\kn} }{2}\right)} {\Gamma_{p}\left(\frac{\nu_{0}+n'_{\kn-1} }{2}\right)\Gamma_{p}\left(\frac{\nu_{0} }{2}\right) }
\times \left\{ 
\frac{\kappa_{0} (\kappa_{0}+n'_{\kn-1})}{
(\kappa_{0}+n_{\kn-1})(  \kappa_{0}+n_{\kn} ) } \right\}^{p/2} \\
\times \frac{\abs{\Lambda_{0}}^{\frac{\nu_{0}}{2}} \abs{\Lambda_{0} + S_{\kn-1}^{\Psi'} + \frac{n'_{\kn-1}\kappa_{0}}{\kappa_0+n'_{\kn-1}}(\bar{y}_{\kn-1}^{\Psi'} -\mu_0)(\bar{y}_{\kn-1}^{\Psi'} -\mu_0)\trans}^{\frac{\nu_0+n'_{\kn-1}}{2}} } {\prod_{h=\kn-1}^{\kn}  \abs{\Lambda_0 + S_{h}^{\Psi} + \frac{n_{h}\kappa_{0}}{\kappa_0+n_{h}}(\bar{y}_{h}^{\Psi} -\mu_0)(\bar{y}_{h}^{\Psi} -\mu_0)\trans}^{\frac{\nu_0+n_{h}}{2}} }.
\label{eq:ratio111}
\end{multline}
We first study the limit of the  first factor of  \eqref{eq:ratio111}. 
From Lemma \ref{lemma:multgamma_ratio},
we have
\begin{equation*}
\lim_{p\to\infty}\frac{1}{p} \left\{ \log\frac{\Gamma_{p}\left(\frac{\nu_{0}+n_{\kn} }{2}\right) } {\Gamma_{p}\left(\frac{\nu_{0}+n'_{\kn-1} }{2}\right) }  +  \log \frac{\Gamma_{p}\left(\frac{\nu_{0}+n_{\kn-1} }{2}\right) }{ \Gamma_{p}\left(\frac{\nu_{0} }{2}\right) } \right\} =0.
\end{equation*}
We now study the limit of the remaining part of \eqref{eq:ratio111}.  Note that, if we replace each observation $y_{i}$ with $\wt{y}_{i} =\Lambda_{0}^{-\half} (y_{i}-\mu_{0} )$, assumption \ref{cond_neg} is still valid for $\wt{y}_{i}$'s. 	
Moreover, $\abs{\Lambda_{0}}$ terms get canceled out from \eqref{eq:ratio111}.
Hence, without loss of generality we can assume $\mu_{0}=0$ and $\Lambda_{0}=I_{p}$.	
Without loss of generality we can also assume that $y_{1+\sum_{j=1}^{h-1}n_{j}},\dots, y_{\sum_{j=1}^{h} n_{j} }$ are in cluster $h$. We define 
$$Y^{\Psi}_{(h)}= \left[y_{1+\sum_{j=1}^{h-1}n_{j}},\dots, y_{\sum_{j=1}^{h} n_{j} } \right]\trans,$$ to be the sub-data matrix corresponding to the $h$-th cluster in partition $\Psi$.
Exploiting lower rank factorization results on matrix determinants, we have
\begin{align*}
&\abs{I_{p}+  S_h^{\Psi} +\frac{n_h\kappa_0}{n_h+\kappa_0}\bar{y}_h^{\Psi}\bar{y}_h^{\Psi\trans}}=  
\abs{I_{p}+Y^{\Psi\trans}_{(h)} Y^{\Psi}_{(h)} - \frac{1}{n_{h}+\kappa_{0}} Y^{\Psi\trans}_{(h)} \bone_{n_{h}}\bone_{n_{h}}\trans Y^{\Psi }_{(h)}}\\
=& \abs{ 1- \frac{1}{n_{h}+\kappa_{0}}  \bone_{n_{h}}\trans  Y^{\Psi}_{(h)} \left\{I_{p}+Y^{\Psi\trans}_{(h)} Y^{\Psi}_{(h)} \right\}^{-1}Y^{\Psi\trans}_{(h)} \bone _{n_{h}}  } \abs{I_{p}+Y_{(h)}^{\Psi\trans} Y_{(h)}^{\Psi}},
\end{align*}
{where the symbol $\abs{A}$ or $\abs{a}$ is to be interpreted as the determinant of the matrix $A$ or the absolute value of the scalar $a$, respectively.}
Then, the second factor of  \eqref{eq:ratio111} simplifies to 
\begin{equation*}
\frac{
\left\{ \abs{ 1- \frac{1}{n'_{\kn-1}+\kappa_{0}}  \bone_{n'_{\kn-1}}\trans  Y_{(\kn-1)}^{\Psi'} \left(I_{p}+Y_{(\kn-1)}^{\Psi^{\prime \transp}} Y_{(\kn-1)}^{\Psi' } \right)^{-1}Y_{(\kn-1)}^{\Psi^{\prime \transp}} \bone _{n'_{\kn-1}}  } \abs{I_{p}+Y_{(\kn-1)}^{\Psi^{\prime \transp}} Y_{(\kn-1)}^{\Psi' } } \right\}^{\frac{\nu_{0}+n'_{k_{n}-1}}{2}} } {
\prod_{h=\kn-1}^{\kn} \left\{ \abs{ 1- \frac{1}{n_{h}+\kappa_{0}}  \bone_{n_{h}}\trans  Y_{(h)}^{\Psi} \left( I_{p}+Y_{(h)}^{\Psi\trans} Y_{(h)}^{\Psi } \right)^{-1}Y_{(h)}^{\Psi\trans} \bone _{n_{h}}  } \abs{I_{p}+Y_{(h)}^{\Psi\trans} Y_{(h)}^{\Psi } } \right\} ^{\frac{\nu_{0}+n_{h}}{2}}		 
}.
\end{equation*}
Using Lemma \ref{lemma:submatrix},
$\lim_{p\to\infty} \norm{Y_{(h)} \left\{I_{p}+Y_{(h)}\trans Y_{(h)}\right\}^{-1}Y_{(h)}\trans-I_{n_{h}}}_{2}=0$ in $\p$-probability and 
\begin{equation}
\lim_{p\to\infty}\abs{ 1- \frac{1}{n_{h}+\kappa_{0}}  \bone_{n_{h}}\trans  Y_{(h)} \left\{I_{p}+Y_{(h)}\trans Y_{(h)}\right\}^{-1}Y_{(h)}\trans \bone _{n_{h}}  }=\frac{\kappa_{0}}{\kappa_{0}+n_{h}}\quad \mbox{in $\p$-probability}.
\label{eq:determinant_limit}
\end{equation}
Taking the logarithm of the second and third factor of \eqref{eq:ratio111} and rearranging it using the previous result  
\begin{multline}
\log\frac{\abs{ 1- \frac{1}{n'_{\kn-1}+\kappa_{0}}  \bone_{n'_{\kn-1}}\trans  Y_{(\kn-1)}^{\Psi'} \left\{I_{p}+Y_{(\kn-1)}^{\Psi^{\prime \transp}} Y_{(\kn-1)}^{\Psi' } \right\}^{-1}Y_{(\kn-1)}^{\Psi^{\prime \transp}} \bone _{n'_{\kn-1}}  }^{\frac{\nu_{0}+n'_{\kn-1}}{2}} }{\prod_{h=\kn-1}^{\kn} \abs{ 1- \frac{1}{n_{h}+\kappa_{0}}  \bone_{n_{h}}\trans  Y_{(h)}^{\Psi} \left\{I_{p}+Y_{(h)}^{\Psi\trans} Y_{(h)}^{\Psi } \right\}^{-1}Y_{(h)}^{\Psi\trans} \bone _{n_{h}}  }^{\frac{\nu_{0}+n_{h}}{2}} }\\			
+\log \left\{ 
\frac{\kappa_{0} (\kappa_{0}+n'_{\kn-1})}{
(\kappa_{0}+n_{\kn-1})(  \kappa_{0}+n_{\kn} ) } \right\}^{p/2} 
+ \log \frac{\abs{I_{p}+Y_{(\kn-1)}^{\Psi^{\prime \transp}} Y_{(\kn-1)}^{\Psi' } }^{\frac{\nu_{0}+n'_{\kn-1}}{2}} } {\prod_{h=\kn-1}^{\kn} \abs{I_{p}+Y_{(h)}^{\Psi\trans} Y_{(h)}^{\Psi } } ^{\frac{\nu_{0}+n_{h}}{2}}}.
\label{eq:2nd3rd}
\end{multline}
Since $\nu_{0}=p+c$, in conjunction with \eqref{eq:determinant_limit} we have sum of the limits of the first and second terms in \eqref{eq:2nd3rd} is 0 in $\p$-probability.
We finally study the last summand of  \eqref{eq:determinant_limit} and particularly
\begin{equation}
\lim_{p\to \infty} \frac{1}{p}\log \frac{\abs{I_{p}+Y_{(\kn-1)}^{\Psi^{\prime \transp}} Y_{(\kn-1)}^{\Psi' } }^{\frac{\nu_{0}+n'_{\kn-1}}{2}} } {\prod_{h=\kn-1}^{\kn} \abs{I_{p}+Y_{(h)}^{\Psi\trans} Y_{(h)}^{\Psi } } ^{\frac{\nu_{0}+n_{h}}{2}}}.
\label{eq:3rd}
\end{equation}
Since $\abs{I_{p}+Y_{(h)}^{\Psi\trans} Y_{(h)}^{\Psi } }=\abs{I_{n_{h}}+Y_{(h)}^{\Psi} Y_{(h)}^{\Psi \trans} }$ for any partition $\Psi$, 
following the same arguments of \eqref{ll1} from Lemma \ref{lem:cov_term} in the supplementary materials, we have  
$$
\abs{I_{p}+Y_{(\kn-1)}^{\Psi^{\prime \transp}} Y_{(\kn-1)}^{\Psi' } }= \prod_{h=\kn-1}^{\kn} \abs{I_{p}+Y_{(h)}^{\Psi\trans} Y_{(h)}^{\Psi } } ^{\frac{\nu_{0}+n_{h}}{2}} \times \abs{I_{n_{\kn-1}} -ZZ\trans}
$$ 
where $Z= \{I_{n_{\kn-1}}+ Y_{(\kn-1)}^{\Psi} Y_{(\kn-1)}^{\Psi\trans} \}^{-\half} Y_{(\kn-1)}^{\Psi} Y_{(\kn)}^{\Psi\trans} \{I_{n_{\kn}}+ Y_{(\kn)}^{\Psi} Y_{(\kn)}^{\Psi\trans} \}^{-\half}$, and  \eqref{eq:3rd} reduces to 
$$\frac{n_{\kn}}{2p}\log \abs{I_{n_{\kn-1}}+ Y_{(\kn-1)}^{\Psi} Y_{(\kn-1)}^{\Psi\trans}} +\frac{n_{\kn-1}}{2p}\log \abs{I_{n_{\kn}}+ Y_{(\kn)}^{\Psi} Y_{(\kn)}^{\Psi\trans}}+ \frac{\nu_{0}+n'_{\kn-1}}{2p}\log \abs{I_{n_{\kn-1}} -ZZ\trans}.$$		
From \ref{cond_neg}, it can be deduced that the limits of the first two terms in the last expression are 0 in $\p$-probability as $p\to\infty$. 
Invoking Lemma \ref{lem:cov_term} and the fact that $\nu_{0}=p+c$, we have 
$$\limsup_{p\rightarrow\infty} \frac{\nu_{0}+n'_{\kn-1}}{2p}\log \abs{I_{n_{\kn-1}} -ZZ\trans}<0,$$ 
and henceforth \eqref{eq:3rd} is negative.	This leads to
\begin{equation*}
\limsup_{p\to \infty}\log \frac{
\prod_{h=1}^{\kn} \int \prod_{i:c_i=h}  
\mn_p\left(y_i; \mu_{h} ,\Sigma_{h}\right) 
\mn_p(\mu_h; \mu_0, \kappa_0^{-1}\Sigma_{h}) 
IW(\Sigma_{h}; \nu_0, \Lambda_{0}) 
\, \de( \mu_{h},\Sigma_{h})} {
\prod_{h=1}^{\kn-1} \int \prod_{i:c'_i=h} 
\mn_p\left(y_i; \mu_{h} ,\Sigma_{h}\right) 
\mn_p(\mu_h; \mu_0, \kappa_0^{-1}\Sigma_{h}) 
IW(\Sigma_{h}; \nu_0, \Lambda_{0}) 
\, \de( \mu_{h},\Sigma_{h})
}= 0,
\end{equation*}
and hence for $p \to \infty$ all the data points are cluster together in $\p$-probability  thanks to Theorem \ref{th:negative_result}.
\end{proof}

\begin{proof}[Corollary \ref{cor:neg_result}]	
Define $c_{1},\dots,c_{n}$ and  $c'_{1},\dots,c'_{n}$ consistently with the proof of Theorem \ref{th:negative_result}. 
Then, consider the ratio of the marginal likelihoods
\begin{equation}
\frac{
\int \prod_{h=1}^{k_n} \int \prod_{i:c_i=h}  
\mn_p\left(y_i; \mu_h ,\Sigma\right) 
\mn_p(\mu_h; \mu_0, \kappa_0^{-1}\Sigma) \de \mu_h
IW(\Sigma; \nu_0, \Lambda_0) 
\, \de \Sigma
}{
\int \prod_{h=1}^{k_n-1} \int \prod_{i:c'_i=h} 
\mn_p\left(y_i; \mu_h ,\Sigma\right) 
\mn_p(\mu_h; \mu_0, \kappa_0^{-1}\Sigma) \de \mu_h
IW(\Sigma; \nu_0, \Lambda_0) 
\, \de \Sigma
}.\label{eq:ratio1_neg}
\end{equation}
The numerator of \eqref{eq:ratio1_neg} is 
$$\prod_{h=1}^{k_n} \left(\frac{\kappa_0}{n_h+\kappa_0}\right)^{\frac{p}{2}} 	\abs{\Lambda_0+\sum_{h=1}^{k_n} \left\{S_h^{\Psi} +\frac{n_h\kappa_0}{n_h+\kappa_0}(\bar{y}_h^{\Psi}-\mu_0)(\bar{y}_h^{\Psi}-\mu_0)\trans\right\}} ^{-\frac{\nu_0 + n}{2}} \hspace{-0.2cm}
\pi^{-\frac{np}{2}} \frac{\Gamma_p(\frac{\nu_0 + n}{2})}{\Gamma_p(\nu_0/2)} \abs{\Lambda_0}^{\frac{\nu_0}{2}},$$
where $\bar{y}_h^{\Psi}=\frac{1}{n_h}\sum_{i:c_i=h} y_{i}$ and $S^{\Psi}_h=\sum_{i:c_{i}=h} \left(y_{i}-\bar{y}^{\Psi}_h\right)\left(y_{i}-\bar{y}^{\Psi}_h\right)\trans$.
Hence, obtaining a corresponding expression for the denominator, ratio \eqref{eq:ratio1_neg} becomes 
\begin{align}
\left\{\frac{\kappa_0(\kappa_0+n'_{k_n-1})}{(\kappa_0+n_{k_n-1})(\kappa_0+n_{k_n})} \right\}^{\frac{p}{2}}
\left[
\frac{
\abs{\Lambda_0+\sum_{h=1}^{k_n-1} \left\{S_h^{\Psi'} +\frac{n'_h\kappa_0}{n'_h+\kappa_0}(\bar{y}_h^{\Psi'}-\mu_0)(\bar{y}_h^{\Psi'}-\mu_0)\trans\right\}}
}{
\abs{\Lambda_0+\sum_{h=1}^{k_n} \left\{S_h^{\Psi} +\frac{n_h\kappa_0}{n_h+\kappa_0}(\bar{y}_h^{\Psi}-\mu_0)(\bar{y}_h^{\Psi}-\mu_0)\trans\right\}} 
}\right]^{\frac{\nu_0 + n}{2}}.
\label{eq:marglikcor2}
\end{align}
First note that for $n_{k_n}, n_{k_n-1}\geq1$
\begin{equation}
\frac{\kappa_0(\kappa_0+n'_{k_n-1})}{(\kappa_0+n_{k_n-1})(\kappa_0+n_{k_n})}< 1.\label{eqqq1}
\end{equation}
Similar to Corollary \ref{cor:neg_result2}, we can assume without loss of generality $\mu_0$ to be a $p$-dimensional vector of zero and  $\Lambda_{0}=I_{p}$. 
Note that,
\begin{align}
\sum_{h=1}^{\kn}\left(	S_h^{\Psi} +\frac{n_h\kappa_0}{n_h+\kappa_0}\bar{y}_h^{\Psi}\bar{y}_h^{\Psi\trans} \right) &=\sum_{i=1}^{n} y_{i} y_{i}\trans -\sum_{h=1}^{k} \frac{n_{h}^{2}}{n_{h}+\kappa_{0}} \bar{y}_h^{\Psi}\bar{y}_h^{\Psi \trans}\notag\\
&=\sum_{i=1}^{n} y_{i} y_{i}\trans  -\sum_{h=1}^{k} \frac{1}{n_{h}+\kappa_{0}} \left(\sum_{i:c_{i}=h}y_{i}\right)\left(\sum_{i:c_{i}=h}y_{i}\right)\trans. \label{eq:matrix_determinant_lemma}
\end{align}
Also, without loss of generality we can assume $\left\{y_{1+\sum_{j=1}^{h-1}n_{j}},\dots, y_{\sum_{j=1}^{h} n_{j} } \right\}$ are in cluster $h$ of $\Psi$   similarly to Corollary \ref{cor:neg_result2}.
Then $\sum_{h=1}^{\kn} \left(S_{h}^{\Psi} +\frac{n_h\kappa_0}{n_h+\kappa_0}\bar{y}_h^{\Psi}\bar{y}_h^{\Psi \trans}\right) =Y\trans\left(I_{n}-\Je^{\Psi}_{n}\right)Y$, where 
$\Je^{\Psi}_{n}=$ $\diag\left(\frac{J_{n_{1}}}{n_{1}+\kappa_{0}},\dots,\frac{J_{n_{\kn}}}{n_{\kn}+\kappa_{0}} \right)$ is an $n\times n$ order block diagonal matrix and $J_{r}$ is the $r\times r$ order square matrix with all elements being 1. Clearly, $\Je^{\Psi}_{n}$ is a positive semi-definite matrix of rank $\kn$. 
Henceforth, exploiting the lower rank factorization structure, each determinant in \eqref{eq:marglikcor2} can be simplified as 	
\begin{align*}
\abs{I_{p}+\sum_{h=1}^{\kn} \left( S_h^{\Psi} +\frac{n_h\kappa_0}{n_h+\kappa_0}\bar{y}_h^{\Psi}\bar{y}_h^{\Psi \trans}\right)} & = \abs{I_{p}+Y\trans\left(I_{n}-\Je^{\Psi}_{n}\right)Y}\\
& = 
\abs{I_{p}+Y\trans Y}\abs{I_{n}- \Je^{\Psi\half}_{n}Y(I_{p}+Y\trans Y)^{-1}Y\trans \Je^{\Psi\half}_{n} }.
\end{align*}
Hence \eqref{eq:marglikcor2} reduces to 
\begin{align}
\left\{\frac{\kappa_0(\kappa_0+n'_{k_n-1})}{(\kappa_0+n_{k_n-1})(\kappa_0+n_{k_n})} \right\}^{p/2}
\left\{
\frac{
\abs{I_{n}- \Je^{\Psi'\half}_{n}Y(I_{p}+Y\trans Y)^{-1}Y\trans \Je^{\Psi'\half}_{n} }
}{
\abs{I_{n}- \Je^{\Psi\half}_{n}Y(I_{p}+Y\trans Y)^{-1}Y\trans \Je^{\Psi\half}_{n} } 
}\right\}^{\frac{\nu_0 + n}{2}}.\label{eq3}
\end{align}
From Lemma \ref{lemma:matrixlimit} in the supplementary materials, $\lim_{p\to\infty}\norm{ Y(I_{p}+Y\trans Y)^{-1}Y\trans -I_{n}}_{2}=0$ in $\p$-probability. 
Therefore, from the construction of $\Je^{\Psi}_{n}$,
\begin{equation}
\lim_{p\to\infty}\abs{I_{n}- \Je^{\Psi\half}_{n}Y(I_{p}+Y\trans Y)^{-1}Y\trans \Je^{\Psi\half}_{n} }=\abs{I_{n}- \Je^{\Psi}_{n} }=\prod_{h=1}^{k_{n}} \abs{I_{n_{h}}-\frac{1}{n_{h}+\kappa_{0}}J_{n_{h}}},\label{eq:eq1}
\end{equation}
in $\p$-probability. 	Notably $J_{r}={1}_{r} {1}_{r}\trans $ where ${1}_{r}$ is the $r$-dimensional vector of ones, implying that  $\abs{I_{r}-\frac{1}{n_{r}+\kappa_{0}}J_{r}}=\frac{\kappa_{0}}{n_{r}+\kappa_{0}}$ for any positive integer $r$. 
Substituting this in \eqref{eq:eq1}, we have
\begin{equation*}
\lim_{p\to\infty}\abs{I_{n}- \Je^{\Psi\half}_{n}Y(I_{p}+Y\trans Y)^{-1}Y\trans \Je^{\Psi\half}_{n} }=\prod_{h=1}^{k_{n}}\frac{\kappa_{0}}{n_{h}+\kappa_{0}},\quad \mbox{{in $\p$-probability}}
\end{equation*}
and therefore, 
$$\lim_{p\to\infty} \frac{
\abs{I_{n}- \Je^{\Psi'\half}_{n}Y(I_{p}+Y\trans Y)^{-1}Y\trans \Je^{\Psi'\half}_{n} }
}{
\abs{I_{n}- \Je^{\Psi\half}_{n}Y(I_{p}+Y\trans Y)^{-1}Y\trans \Je^{\Psi\half}_{n} } 
}= \frac{(\kappa_0+n_{k_n-1})(\kappa_0+n_{k_n})}{\kappa_0(\kappa_0+n'_{k_n-1})}\quad  \mbox{in $\p$-probability.} $$
Thus if we take the log of \eqref{eq3} multiplied by $p^{-1}$ and study its limit we have
\begin{align*}
\liminf_{p\rightarrow\infty}&\left\{ \frac{1}{2}\log \frac{\kappa_0(\kappa_0+n'_{k_n-1})} {(\kappa_0+n_{k_n-1})(\kappa_0+n_{k_n})} +\frac{n+\nu_0}{2p}\log \frac{
\abs{I_{n}- \Je^{\Psi'\half}_{n}Y(I_{p}+Y\trans Y)^{-1}Y\trans \Je^{\Psi'\half}_{n} }
}{
\abs{I_{n}- \Je^{\Psi\half}_{n}Y(I_{p}+Y\trans Y)^{-1}Y\trans \Je^{\Psi\half}_{n} } 
} \right\}\\
=& \frac{1}{2} \log \frac{\kappa_0(\kappa_0+n'_{k_n-1})} {(\kappa_0+n_{k_n-1})(\kappa_0+n_{k_n})}\times \left(1-\limsup_{p\rightarrow\infty}\frac{n+\nu_{0}}{p} \right)> 0.
\end{align*}   
Since $n$ is fixed with $p$, the above limit follows from \eqref{eqqq1} and the assumption on $\nu_0$.
Thus we have
$$\liminf_{p\rightarrow\infty}	
\frac{
\int \prod_{h=1}^{k_n} \int \prod_{i:c_i=h}  
\mn_p\left(y_i; \mu_h ,\Sigma\right) 
\mn_p(\mu_h; \mu_0, \kappa_0^{-1}\Sigma) \de \mu_h
IW(\Sigma; \nu_0, \Lambda_0) 
\, \de \Sigma
}{ 
\int \prod_{h=1}^{k_n-1} \int \prod_{i:c'_i=h} 
\mn_p\left(y_i; \mu_h ,\Sigma\right) 
\mn_p(\mu_h; \mu_0, \kappa_0^{-1}\Sigma) \de \mu_h
IW(\Sigma; \nu_0, \Lambda_0) 
\, \de \Sigma
}=\infty,$$
and hence for $p \to \infty$ each data point is clustered separately  in $\p$-probability thanks to Theorem~\ref{th:negative_result}.
\end{proof}

\subsection*{Proofs of Section \ref{sec:properties}}

\begin{proof}[Lemma \ref{lemma:suff_cond}]
Let 
${\boldzeta}^{(p)}_0= (\sqrt{p\log p})^{-1}(\Lambda\trans \Lambda)^{-\half}\Lambda\trans \Lambda_{0}{\boldeta}_0$, 
then
$\Pi(\Psi \mid \boldeta_{0}) = \Pi(\Psi \mid \boldzeta^{(p)}_{0})$. 
Then, 
\begin{equation}
\frac{1}{\sqrt{p\log p}} \norm{\boldzeta^{(p)}_{i}-\boldzeta^{(p)}_{0i}} \leq  \norm{(\Lambda\trans\Lambda)^{-\half}\Lambda }_{2} \times \frac{1}{\sqrt{p\log p}}\norm{\Lambda\eta_{i}-\Lambda_{0}\eta_{0i}} \leq \frac{1}{\sqrt{p\log p}}\norm{\Lambda\eta_{i}-\Lambda_{0}\eta_{0i}}.\label{eq:bound}
\end{equation}	
From \eqref{eq:prior_on_eta} we see that the numerator in the right hand side of \eqref{eq:marginalposterior_eta} can be simplified as 
\begin{equation}
C\times \Pi (\Psi)\times \prod_{h=1}^{k_n} \left(\frac{\kappa_0}{n_h+\kappa_0}\right)^{\frac{d}{2}} \times  \abs{\sum_{h=1}^{k_n}\left\{S^h_{\eta_0}+\frac{n_h}{n_h+1} \bar{\eta}_{0}^{h} \bar{\eta}_{0}^{h\trans}  \right\}}^{-\frac{n}{2}},\label{eq:or_oracle}
\end{equation}	
where $n_h=\sum_{i=1}^n I(c_i=h)$, $\bar{\eta}^h_0=\frac{1}{n_h} \sum_{i:c_i=h}\eta_{0i}$, $S_{\eta_0}^h=\sum_{i:c_i=h}(\eta_{0i} -\bar{\eta}_0^h)(\eta_{0i} -\bar{\eta}_0^h)\trans $ and $C$ is a positive quantity constant across all $\Psi' \in {\mathscr P}$. 
Hence it is clear that $\Pi(\Psi \mid {\eta})$ is a continuous function of ${\eta}$.
Since the function is bounded (being a probability function), the continuity is also uniform.
Also note that, for the particular choice of Gaussian kernel and base measure in \eqref{eq:prior_on_eta}, the oracle partition probability \eqref{eq:marginalposterior_eta}  is unchanged if ${\eta}$ is multiplied by a full-rank square matrix and therefore $\Pi(\Psi\mid{\zeta}_{0}^{(p)})=\Pi(\Psi \mid {\eta}_{0})$. 	
Therefore, for any $\epsilon>0$ there exists $\delta>0$ such that
$\norm{\boldzeta_0^{(p)}-\boldzeta^{(p)}}<\delta$ implies that $\abs{\Pi(\Psi\mid\boldzeta^{(p)})- \Pi(\Psi\mid\boldzeta_{0}^{(p)}) }=\abs{\Pi(\Psi\mid\boldzeta^{(p)})-\Pi(\Psi \mid \boldeta_{0} )}<\epsilon$.
Again,
\begin{multline}
E\left\{\abs{\Pi(\Psi  \mid {\boldzeta^{(p)}}) - \Pi(\Psi \mid \boldeta_{0} )} \bigg{\lvert}~ \by  \right\}=
E\left\{\abs{\Pi(\Psi  \mid {\boldzeta^{(p)}}) - \Pi(\Psi \mid \boldzeta^{(p)}_{0} )} \bigg{\lvert}~ B_{p,\delta}, \by  \right\} \Pi(B_{p,\delta} \mid \by)\\
+E\left\{\abs{\Pi(\Psi  \mid {\boldzeta^{(p)}}) - \Pi(\Psi \mid \boldeta_{0} )} \bigg{\lvert}~ \bar{B}_{p,\delta}, \by  \right\} \Pi(\bar{B}_{p,\delta}\mid \by).
\label{eq:eqn_suff}
\end{multline}
Due to continuity, $\delta$ can be chosen sufficiently small such that the term inside the first expectation in the right hand side of \eqref{eq:eqn_suff} is smaller than arbitrarily small $\epsilon>0$.
Now for any $\delta>0$, the second term in the right hand side of \eqref{eq:eqn_suff} goes to 0 as  $\Pi (\bar{B}_{p,\delta}\mid \by)\to 0$ as $p\to\infty$ by assumption.
Therefore, for arbitrarily small $\epsilon>0$, $E\left\{\abs{\Pi(\Psi  \mid {\boldzeta^{(p)}}) - \Pi(\Psi \mid \boldeta_{0} )}~ \big{\lvert} ~ \by  \right\}<\epsilon$
for large enough $p$.
Hence the proof.
%
\end{proof}

\baselineskip=14pt
\section*{Supplementary Materials}
Proofs of additional theoretical results and simulation studies, and MCMC convergence diagnostics are provided in the supplementary materials.


\section*{Acknowledgments}
This work was partially funded by grants R01-ES027498 and R01-ES028804  from the National Institute of Environmental Health Sciences of the United States Institutes of National Health and by the University of Padova under the STARS Grant.


\baselineskip=14pt
\bibliographystyle{natbib}
\bibliography{refs}


\clearpage\pagebreak\newpage
\newgeometry{textheight=9in, textwidth=6.5in,}
\pagestyle{fancy}
\fancyhf{}
\rhead{\bfseries\thepage}
\lhead{\bfseries SUPPLEMENTARY MATERIALS}

\baselineskip 20pt
\begin{center}
{\LARGE{Supplementary Materials for\\}} 
\papertitle
\end{center}

\setcounter{equation}{0}
\setcounter{page}{1}
\setcounter{table}{1}
\setcounter{figure}{0}
\setcounter{section}{0}
\numberwithin{table}{section}
\renewcommand{\theequation}{S.\arabic{equation}}
\renewcommand{\thesubsection}{S.\arabic{section}.\arabic{subsection}}
\renewcommand{\thesection}{S.\arabic{section}}
\renewcommand{\thepage}{S.\arabic{page}}
\renewcommand{\thetable}{S.\arabic{table}}
\renewcommand{\thefigure}{S.\arabic{figure}}
\baselineskip=15pt

\vspace{0cm}

\authors

\vskip 10mm

Supplementary materials present 
proofs of additional theoretical results, some figures additional to the simulation studies and MCMC convergence diagnostics are provided in the main paper.

\newpage
\baselineskip=16pt
\section{Additional Theoretical Results}	

In the supplementary materials, we denote by $\norm{x}$ the Euclidean norm of a vector $x$ and by $\norm{X}_2$ the spectral norm of a matrix $X$. 
The smallest and largest eigenvalues of the matrix $(X\trans X)^{\frac{1}{2}}$ are denoted by $\smin(X)$ and $\smax(X)$, respectively. For a positive-definite matrix $X$, $\lambda_{\min}(X)$ and $\lambda_{\max}(X)$ denote the smallest and largest eigenvalues, respectively. 

\begin{lemma}
	\label{lemma:multgamma_ratio}
	Let $\Gamma_{p}(\cdot)$ be the multivariate gamma function, $\nu_{0}=p+c$ for some constant $c\geq0$, and $\ell$ and $m$ be (not varying with $p$) non-negative integers.
	Then,		
	$\lim_{p\to\infty}\frac{1}{p}\log \frac{\Gamma_p(\frac{\nu_{0} +\ell}{2})} {\Gamma_p(\frac{\nu_0+m}{2})}=0$.
\end{lemma}
\begin{proof}
	Without loss of generality assuming $\ell>m$, we have
	\begin{equation}
		\frac{\Gamma_p(\frac{\nu_0 +\ell}{2})}{\Gamma_p(\frac{\nu_0+m}{2})}=\prod_{j=1}^p\frac{\Gamma\left(\frac{\nu_0+\ell-j+1}{2} \right) }{\Gamma\left(\frac{\nu_0+m-j+1}{2} \right) }=\frac{\prod_{j=m+1}^\ell \Gamma\left(\frac{\nu_0+j}{2} \right)}{\prod_{j=m}^\ell \Gamma\left(\frac{\nu_0+j-p}{2} \right)}.
		\label{eq:mult_gam_ratio}
	\end{equation}
	Note that the denominator term in the extreme right hand of \eqref{eq:mult_gam_ratio} does not depend on $p$ as $\nu_0-p$ is constant from assumption.
	Applying Stirling's approximation on the numerator we get
	\begin{align*}
		\frac{\Gamma_p(\frac{\nu_0 +\ell}{2})}{\Gamma_p(\frac{\nu_0+m}{2})}&= \frac{1}{\prod_{j=m}^\ell \Gamma\left(\frac{\nu_0+j-p}{2} \right)} \times \prod_{j=m+1}^{\ell} \left\{\sqrt{2\pi\frac{\nu_0+j-1}{2}}\left(\frac{\nu_0+j-1}{2e}\right)^{\frac{\nu_0+j-1}{2}}E_{j}\right\}\\
		&= \frac{1}{\prod_{j=m}^\ell \Gamma\left(\frac{\nu_0+j-p}{2} \right)} \times \prod_{j=m+1}^{\ell} \left\{\sqrt{2\pi e} \left(\frac{\nu_0}{2e} \right) ^{\frac{\nu_0+j+1}{2}}\left(1+\frac{j-1} {\nu_0} \right) ^{\frac{\nu_0+j+1}{2}} E_{j}\right\},
	\end{align*}
	where $E_{j}=O(\log p)$ arising from the Stirling's approximation formulae.
	Using the result $\lim_{x\to \infty}(1+c/x)^{x}=e^c$, it can be seen that
	\begin{multline*}
		\lim_{p\to\infty} \prod_{j=m+1}^{\ell} \left\{\sqrt{2\pi e} \left(\frac{\nu_0}{2e} \right) ^{\frac{\nu_0+j+1}{2}}\left(1+\frac{j-1} {\nu_0} \right) ^{\frac{\nu_0+j+1}{2}} \right\}\\
		= (2\pi e)^{\frac{\ell-m}{2}}\times \left(\frac{\nu_0}{2e} \right) ^{\frac{1}{4}(\ell-m) (2\nu_0+m+\ell+3)}\times e^{\frac{1}{2}(\ell-m)(\ell+m-1)},
	\end{multline*}
	which is a finite quantity.
	Hence the proof. 
\end{proof}

\begin{lemma}
	\label{lemma:matrixlimit}
	For any $n\times p$ order matrix $Y$ satisfying \ref{cond_neg}, $\lim_{p\to\infty}\norm{ Y(I_{p}+Y\trans Y)^{-1}Y\trans -I_{n}}_{2}=0$ in $\p$-probability.
\end{lemma}
\begin{proof}
	Letting $Y=UDV$, the singular value decomposition of $Y$, we have $Y(I_{p}+Y\trans Y)^{-1}Y\trans =U\diag\left(\frac{d_{1}^{2}}{1+d_{1}^{2}},\dots, \frac{d_{n}^{2}}{1+d_{n}^{2}}\right)U\trans $ where $d_{1},\dots,d_{n}$ are the singular values of $Y$ in descending order. 
	From \ref{cond_neg} we have $\liminf_{p\to\infty}\frac{1}{p}d_{i}^{2}>0$, which further implies that $\liminf_{p\to\infty}\frac{d_{i}}{1+d_{i}}\to1$ for all $i=1,\dots,n$. 
	As $\frac{d_{i}}{1+d_{i}}\leq1$, $\lim_{p\to\infty}\norm{ Y(I_{p}+Y\trans Y)^{-1}Y\trans -I_{n}}_{2}=0$ in $\p$-probability.
\end{proof}

\begin{lemma}
	\label{lemma:submatrix}
	Let $\wt{Y}$ be an $\wt{n}\times p$ order matrix, formed by arbitrarily selecting $\wt{n}$ rows from $Y=[y_{1},\dots,y_{n}]\trans$ where $1 \leq \wt{n}\leq n$. If $Y$ satisfies \ref{cond_neg}, then $\lim_{p\to\infty}\norm{ \wt{Y}(I_{p}+\wt{Y}\trans \wt{Y})^{-1}\wt{Y}\trans -I_{\wt{n}}}_{2}=0$ in $\p$-probability.
\end{lemma}
\begin{proof}
	Letting $Y=UDV$ the singular value decomposition of $Y$, we have $\wt{Y}=\wt{U}DV$ where $\wt{U}$ is formed by the corresponding rows of $Y$ which were used to form $\wt{Y}$.
	Using \citet[Lemma 1.1(iii) from the Suppplementary section]{pati2014}, we have $\smin(\wt{Y}\trans)\geq \smin(V\trans) \smin(D\trans)  \smin(\wt{U}\trans)=\smin(Y\trans) $.
	Since $\smin(\wt{U}\trans)=\smin({U}\trans)=1$, $\smin(\wt{Y}\trans)\geq \smin(Y\trans) $.
	Therefore, $\wt{Y}$ also satisfies \ref{cond_neg} if we substitute $Y=\wt{Y}$.
	Consequently applying Lemma \ref{lemma:matrixlimit}, we conclude the proof.
\end{proof}

\begin{lemma}
	\label{lem:cov_term}
	Let $Y$ be an $n\times p$ order matrix satisfying \ref{cond_neg}.
	Let $Y_{i}=[y_{j_{i,1}},\dots,  y_{j_{i,n_{i}}} ]\trans$, $i=1,2$ be an arbitrary partiton of the data-matrix into two sub-matrices such that $n_{1}+n_{2}=n$.
	Then $\limsup_{p\rightarrow\infty} \smax(Z)<1$ where $Z=(I_{n_{1}}+Y_{1}Y_{1}\trans)^{-\half}Y_{1}Y_{2}\trans (I_{n_{2}}+Y_{2}Y_{2}\trans)^{-\half}$ in $\p$-probability.
\end{lemma}
\begin{proof}
	From \ref{cond_neg} we have $\norm{YY\trans}_{2}=O(p)$ and $\liminf \lambda_{\min}(YY\trans)/p>0$, which implies that
	\begin{equation}
		0<\liminf_{p\rightarrow\infty} \abs{(I_{n}+YY\trans)/p}\leq \limsup_{p\rightarrow\infty} \abs{(I_{n}+YY\trans)/p} =O(1)\text{ in }\p\text{-probability}. \label{eq:lim_on_det}
	\end{equation} 
	Following the proof of Lemma \ref{lemma:submatrix}, we see that $Y_{i}$ also satisfies \ref{cond_neg}, and therefore \eqref{eq:lim_on_det} also holds if $Y$ is replaced with $Y_{i}$ for $i=1,2$.
	Noting that $I_{n}+YY\trans= \begin{bmatrix}
		I_{n_{1}}+ Y_{1} Y_{1}\trans & Y_{1}Y_{2}\trans\\
		Y_{2}Y_{1}\trans & I_{n_{2}}+ Y_{2} Y_{2}\trans
	\end{bmatrix}$
	and using matrix factorization results, we have
	\begin{equation}
		\abs{\frac{1}{p} (I_{n}+YY\trans)}= \abs{\frac{1}{p} (I_{n_{1}}+Y_{1}Y_{1}\trans)} \abs{\frac{1}{p} (I_{n_{2}}+Y_{2}Y_{2}\trans)} 
		\abs{I_{n_1}-ZZ\trans}. \label{ll1}
	\end{equation}
	Again in $\p$-probability,
	\begin{equation}
		\limsup_{p\rightarrow\infty}	\smax^{2}(Z)\leq \limsup_{p\rightarrow\infty} \norm{ Y_{1}\trans (I_{n_{1}}+Y_{1}Y_{1}\trans)^{-1}Y_{1}}_{2} \norm{ Y_{2}\trans (I_{n_{2}}+Y_{2}Y_{2}\trans)^{-1}Y_{2}}_{2}\leq 1. \label{ll}
	\end{equation}
	For \eqref{eq:lim_on_det} to hold, all the terms in the RHS of \eqref{ll1} must be bounded away from 0. 
	As $\abs{I_{n_{1}}-ZZ\trans}=\prod_{j=1}^{n_{1}}\abs{1- s_{j}^{2}(Z)}$, the inequality on \eqref{ll} must be strict.
	Thus, we conclude the proof.
\end{proof}

\begin{lemma}
	\label{th:DL_eig_bound}
	For prior \eqref{eq:dir_laplace_prior} $\lim_{p\rightarrow\infty} \frac{1}{p} \lambda_{\min}(\Lambda\trans\Lambda)=\lim_{p\rightarrow\infty} \frac{1}{p}\lambda_{\max}(\Lambda\trans\Lambda)=v_1$ for some $v_1>0$ $\Pi$-a.s.
\end{lemma}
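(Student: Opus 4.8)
The plan is to prove the stronger entrywise statement that $\tfrac1p\Lambda^T\Lambda\to 8a(a+1)\,I_d$ as $p\to\infty$, $\Pi$-a.s., from which the lemma follows with $v_1=8a(a+1)>0$ (so $v_1=16$ for the default $a=1$), since $\lambda_{\min}$ and $\lambda_{\max}$ are continuous functions of the matrix entries. To make ``$\Pi$-a.s.'' meaningful across the family of priors \eqref{eq:dir_laplace_prior} (one per $p$), I would couple them on a single probability space: take mutually independent arrays $g_{jh}\simiid\mathrm{Ga}(a,1)$, $\psi_{jh}\simiid\mathrm{Exp}(\text{rate }1/2)$, $Z_{jh}\simiid\mn(0,1)$ for $j\ge1$, $1\le h\le d$, together with $\tau_p\sim\mathrm{Ga}(pda,1/2)$, and set $G_p=\sum_{j\le p,\,h\le d}g_{jh}$, $\phi_{jh}^{(p)}=g_{jh}/G_p$, and $\lambda_{jh}^{(p)}=\tau_p\,\phi_{jh}^{(p)}\sqrt{\psi_{jh}}\,Z_{jh}$. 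This reproduces \eqref{eq:dir_laplace_prior}: the Gamma-normalization representation of the Dirichlet gives $\mathrm{vec}(\phi^{(p)})\sim\mathrm{Dir}(a,\dots,a)$, and the Gaussian scale mixture $\sqrt{\psi}Z$ of an $\mathrm{Exp}(1/2)$ variance is exactly a double exponential with the required scale (this is precisely the augmentation used in Step~1 of the sampler).

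With this representation, for each $1\le h,h'\le d$,
\begin{equation*}
\frac1p\big(\Lambda^T\Lambda\big)_{hh'}=\frac{\tau_p^2}{G_p^2}\cdot\frac1p\sum_{j=1}^p W_j^{(h,h')},\qquad W_j^{(h,h')}=g_{jh}g_{jh'}\sqrt{\psi_{jh}\psi_{jh'}}\,Z_{jh}Z_{jh'}.
\end{equation*}
The argument then has three ingredients. First, $\tau_p\sim\mathrm{Ga}(pda,1/2)$ and $G_p\sim\mathrm{Ga}(pda,1)$ have means $2pda$ and $pda$ and fourth central moments of order $p^2$, so $\mathbb{E}\big[(\tau_p/2pda-1)^4\big]=O(p^{-2})$ and likewise for $G_p$; Markov's inequality plus the Borel--Cantelli lemma (which needs no independence across $p$) gives $\tau_p/2pda\to1$ and $G_p/pda\to1$ a.s., hence $\tau_p^2/G_p^2\to4$ a.s. Second, for fixed $(h,h')$ the variables $\{W_j^{(h,h')}:j\ge1\}$ are i.i.d.\ (they involve only the $j$-th coordinate variables) with finite first moment, so Kolmogorov's strong law gives $\tfrac1p\sum_{j\le p}W_j^{(h,h')}\to\mathbb{E}\,W_1^{(h,h')}$ a.s. Third, by independence of the $g$, $\psi$, $Z$ blocks, $\mathbb{E}\,W_1^{(h,h)}=\mathbb{E}[g_{1h}^2]\,\mathbb{E}[\psi_{1h}]\,\mathbb{E}[Z_{1h}^2]=a(a+1)\cdot2\cdot1=2a(a+1)$, while for $h\ne h'$ the factor $\mathbb{E}[Z_{1h}]\mathbb{E}[Z_{1h'}]=0$ makes $\mathbb{E}\,W_1^{(h,h')}=0$. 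Intersecting the three a.s.\ events, $\tfrac1p(\Lambda^T\Lambda)_{hh'}\to 4\cdot 2a(a+1)\,\mathbbm{1}\{h=h'\}=8a(a+1)\,\mathbbm{1}\{h=h'\}$, i.e.\ $\tfrac1p\Lambda^T\Lambda\to 8a(a+1)I_d$, which finishes the proof.

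I expect the only genuinely delicate point to be the measure-theoretic one: interpreting ``$\Pi$-a.s.'' via the above coupling and keeping the double-array bookkeeping honest, namely that the per-$j$ summands form one fixed i.i.d.\ sequence while all the $p$-dependence is confined to the prefactor $\tau_p^2/G_p^2$, which is deterministic in the limit. If one prefers to avoid the coupling altogether, the same conclusion can be reached by a direct second-moment computation: using the closed-form Dirichlet and Gamma moments one checks that $\mathbb{E}\big[\tfrac1p(\Lambda^T\Lambda)_{hh'}\big]=8a(a+1)\,\mathbbm{1}\{h=h'\}$ \emph{exactly} for every $p$, and that $\mathrm{Var}\big[\tfrac1p(\Lambda^T\Lambda)_{hh'}\big]=O(1/p)$ because the leading $O(1)$ terms of $4\,\mathbb{E}[\phi_{jh}^2\phi_{j'h}^2]\,\mathbb{E}[\tau^4]$ and $(\mathbb{E}\lambda_{jh}^2)^2$ cancel in $\mathrm{Cov}(\lambda_{jh}^2,\lambda_{j'h}^2)$; a fourth-moment bound and Borel--Cantelli then upgrade convergence in $\Pi$-probability to $\Pi$-a.s.
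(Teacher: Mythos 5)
Your proof is correct and follows essentially the same route as the paper's: represent the Dirichlet weights as normalized i.i.d.\ Gammas so that $\tfrac1p\Lambda^T\Lambda$ factors into a global scale ratio (handled by a law of large numbers / Borel--Cantelli for the Gamma sums) times a $p$-average of a fixed i.i.d.\ array whose entrywise SLLN limit is $v_1 I_d$; the paper keeps the Laplace variable $e_{jh}\sim\mathrm{DE}(1)$ intact rather than splitting it as $\sqrt{\psi_{jh}}Z_{jh}$, and leaves $v_1=\mathrm{Var}(e_{jh}\gamma_{jh})$ implicit, but this evaluates to the same $8a(a+1)$ you obtain. Your treatment of the coupling and of the almost-sure convergence of $\tau_p/G_p$ is in fact more careful than the paper's.
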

\begin{proof}
	From \citet[Eqn (9) and Section 2.4]{dir_laplace} we get that $\Lambda\trans \Lambda=\tau^2  T\trans T$ where the $(i,j)$-th element of $T$ is $t_{ij}=e_{ij}\phi_{ij}$ with $e_{ij}\overset{iid}{\sim}\mbox{DE}(1)$ where $\mbox{DE}(b)$ is the double exponential distribution with median 0 and variance $2b^{2}$.
	Additionally $\phi\sim \mbox{Dir}(a,\ldots,a)$ and $\tau \sim \mbox{Ga}(pda,\sfrac{1}{2} )$.  

		Now by the strong law of large numbers $\norm{\frac{1}{p}\tilde{T} \trans\tilde{T}-v_1 I_d } _F\rightarrow0$ as $p\to\infty$ where $\norm{\cdot}_F$ is the Frobenius norm of a matrix and $v_1=\text{Var}(e_{ij} \gamma_{ij})$. Hence, for any $i=1,\dots,p$ $\lim_{p\rightarrow\infty} \lambda_i( \tilde{T} \trans\tilde{T})p^{-1}=v_1$. Also $\lim_{p\rightarrow\infty} \tau/(pd)=E(\tau_{ij})$ which implies that $\lim_{p\rightarrow\infty}\left(\tau/\Gamma \right)^2 =1$ $\Pi$-a.s.
		Hence the proof.
	\end{proof}
	

	\section{Proof of Theorem \ref{th:main_result} and Associated Results}
	\label{sm subsec:consistency_theorem}
	To prove Theorem \ref{th:main_result}, we consider an adaptation of Theorem 6.39 in \citet{ghosal_book} where instead of having an increasing sample size, we assume an increasing data dimension with fixed sample size.
	This notion is consistent with the idea that  more and more variables are measured on each study subject. 
	We introduce the following notation.
	Let $\vartheta=(\Lambda, {\boldeta},\sigma)$ with ${  \boldeta}=\left[\eta_1,\dots,\eta_n\right]\trans$ and $\vartheta \in \Theta_p$.
	Let $\ptheta$ and $\p$ be the joint distributions of the data $y_1,\dots,y_n$ given $\vartheta$ and $\vartheta_{0}$,  respectively, with $\vartheta_ 0 = (\Lambda_{0},{ \boldeta}_{0},\sigma_{0})$.  
	We also denote the expectation of a function $g$ with respect to $\p$ and $\ptheta$ by $\p g$ and $\ptheta g$ respectively.
	Let $p_{0}^{p}$ and $p_{\vartheta}^{p}$ be the  densities of $\p$ and $\ptheta$ with respect to the Lebesgue measure. Finally, define the Kullback-Leibler (KL) divergence and the $r$-th order positive KL-variation between $p_{0}^p$ and $p_\vartheta^p$, respectively, as $KL(\p,\ptheta)= \int \log \frac{p_{0}^p}{p_\vartheta^p} \de\p$ and $V_{r}^+(\p,\ptheta)=\int \left\{  \left(\log \frac{p_{0}^p}{p_\vartheta^p} -KL\right) ^{+}\right\} ^{r}  \de\p$,
	where $f^{+}$ denotes the positive part of a function $f$.
	
	\begin{theorem}\label{th:2}
		If for some $r \geq 2$, $c > 0$ there exist measurable sets $B_p \subset \Theta_p$ with $\liminf \Pi(B_p)>0$, 
		\begin{enumerate}[label={(\Roman*)}]
			\item\label{cond1} 
			$\sup_{\vartheta\in B_p} \frac{1}{p} KL(\p,\ptheta) \leq c$ and $\sup_{\vartheta\in B_p} \frac{1}{p^r} V_r^+(\p,\ptheta) \to 0$.
			\item\label{cond2} For sets $ \tilde{\Theta}_p \subset \Theta_p$ there exists a sequence of test functions $\phi_p$ such that  $\phi_p\to0$ $\p$-a.s. and $\int_{\tilde{\Theta}_p} \ptheta (1-\phi_p)\de\Pi(\vartheta)\leq e^{-Cp} $ for some $C>0$.
			\item\label{cond3} Letting $A_p=\bigg\{\vartheta\in\Theta_p:\frac{1}{p}\int\Big(\log \frac{p_{0}^p}{p_\vartheta^p}-KL(\p,\ptheta)\Big)\de\tilde{\Pi}_p(\vartheta) <\tilde\epsilon \bigg\}$, 
			with $\truncpi$ the renormalized restriction of $\Pi$ to set $B_p$, for any $\tilde\epsilon>0$, $\mathbbm{1} (\bar{A}_{p}) \to 0 ~ \p$-a.s.
		\end{enumerate}		
		Then $\Pi( \tilde{\Theta}_p\mid  \by) \to 0$ $\p$-a.s.
	\end{theorem}	
	
	Condition \ref{cond1} ensures that the assumed model is not too far from the true data-generating model. 
	Condition \ref{cond2} controls the variability of the log-likelihood around its mean. 
	In the Lamb model, the number of parameters grows with $p$ and hence 
	the assumption on $V_r^+$ is instrumental. 
	The conditions on $\phi_{p}$ ensure the existence of a sequence of consistent test functions for $H_{0}:\mathbb{P}=\p$ in which type-II error diminishes to 0 exponentially fast in the critical region. 
	Condition \ref{cond3} is a technical condition required to bound the numerator of $\Pi( \tilde{\Theta}_{p}\mid  \by)$.	
	The proof of this theorem follows along the lines of the proof of Theorem 6.39 of \citet{ghosal_book}.
	
	Theorem \ref{th:2} is a general result stating sufficient conditions for posterior consistency as $p\to \infty$. 
	We 	use this theorem to prove Theorem \ref{th:main_result}.
	
	\vspace{.2in}

	\begin{proof}[Theorem \ref{th:main_result}]
		We verify the conditions \ref{cond1}-\ref{cond3} from Theorem \ref{th:2}.
		Theorems \ref{th:KL_support} and \ref{th:ghoshal_cond} jointly imply that 
		for the Lamb model there exist a sequence of sets $B_p$ such that conditions
		\ref{cond1} and \ref{cond3} are satisfied for any $c>0$.
		Theorem \ref{th:test_function} ensures the existence of a sequence of test functions satisfying \ref{cond2}, and finally Theorem \ref{th:last_cond} proves \ref{cond3}.
		Hence the proof.		
	\end{proof}

	{
	}
	
	\begin{theorem}
		\label{th:KL_support}
		For any $\epsilon>0$ define $B_p^\epsilon=\left\{\Theta: p^{-1}KL(\p, \ptheta) \leq \epsilon   \right\}$. Then, under the settings of Section \ref{sec:properties}, $\liminf \Pi(B_p^\epsilon)>0$.
	\end{theorem}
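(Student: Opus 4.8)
The plan is to reduce the statement to an estimate on one explicit neighbourhood of the truth and then to lower bound its prior mass uniformly in $p$. First I would put $KL(\p,\ptheta)$ in closed form: under \ref{ass0} the measure $\p$ is the product over $i=1,\dots,n$ of $\mn_p(\Lambda_0\eta_{0i},\sigma_0^2I_p)$, and for $\vartheta=(\Lambda,\boldsymbol{\eta},\sigma)$ the case (i) Lamb likelihood makes $\ptheta$ the product of $\mn_p(\Lambda\eta_i,\sigma^2I_p)$, so the Gaussian KL identity gives
\begin{equation*}
\frac1p KL(\p,\ptheta)=\sum_{i=1}^n\left[\frac12\left(\frac{\sigma_0^2}{\sigma^2}-1-\log\frac{\sigma_0^2}{\sigma^2}\right)+\frac{1}{2p\sigma^2}\norm{\Lambda\eta_i-\Lambda_0\eta_{0i}}^2\right].
\end{equation*}
Hence $p^{-1}KL$ is controlled as soon as $\sigma$ is close to $\sigma_0$ and each $p^{-1}\norm{\Lambda\eta_i-\Lambda_0\eta_{0i}}^2$ is small.

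Next I would exhibit a subset $\tilde B_p\subseteq B_p^\epsilon$ with $\liminf_p\Pi(\tilde B_p)>0$, using that $\Lambda$, $\boldsymbol{\eta}$ and $\sigma$ are a priori independent so $\Pi(\tilde B_p)$ factorises. The variance term only needs $\abs{\sigma-\sigma_0}<\delta$, an event of fixed positive prior probability independent of $p$ since the prior on $\sigma$ charges every neighbourhood of $\sigma_0$ inside $(\sigma_L,\sigma_U)$. For the mean term, bound $\norm{\Lambda\eta_i-\Lambda_0\eta_{0i}}^2\le 2\norm{\eta_i}^2\lambda_{\max}(\Lambda^T\Lambda)+2\,\eta_{0i}^T\Lambda_0^T\Lambda_0\eta_{0i}$; divided by $p$ the second summand tends to the finite constant $2\,\eta_{0i}^TM\eta_{0i}$ by \ref{ass1} and \ref{ass4}, while on $\{p^{-1}\lambda_{\max}(\Lambda^T\Lambda)\le 2v_1\}\cap\{\max_{i\le n}\norm{\eta_i}\le C\}$ the first is $\le 4C^2v_1p$. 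Taking $\tilde B_p$ to be the intersection of these three events gives $p^{-1}KL(\p,\ptheta)\le c$ on $\tilde B_p$ for a constant $c$ depending only on $n,\sigma_L,v_1,C,\lambda_{\max}(M)$ and $\sup_i\norm{\eta_{0i}}$; by Lemma \ref{th:DL_eig_bound} the $\Lambda$-event has prior probability tending to $1$, and the $\boldsymbol{\eta}$-event has prior probability arbitrarily close to $1$ for $C$ large because the marginalised prior on each $\eta_i$ is proper. This already gives the statement for one (large enough) value of $\epsilon$, which is in fact all that condition \ref{cond1} of Theorem \ref{th:2} requires.

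The step I expect to be the main obstacle is upgrading this to an \emph{arbitrarily small} $\epsilon$. Shrinking $\delta$ removes the variance term, but for $\Lambda$ drawn from \eqref{eq:dir_laplace_prior} the quantity $p^{-1}\norm{\Lambda\eta_i-\Lambda_0\eta_{0i}}^2$ stays of order $\eta_{0i}^TM\eta_{0i}$ even after optimising over $\eta_i$, because $\Lambda_0\eta_{0i}$ generically lies almost orthogonally to the $d$-dimensional column space of $\Lambda$. Making it small forces the principal angles between the column spaces of $\Lambda$ and $\Lambda_0$ to be small and $\eta_i$ to sit near the corresponding change-of-basis image of $\eta_{0i}$; the crux is then to show this region still carries prior mass bounded away from zero as $p\to\infty$. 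I would attack this through the reparametrisation $\boldsymbol{\zeta}^{(p)}=p^{-1/2}(\Lambda^T\Lambda)^{1/2}\boldsymbol{\eta}$ of Lemma \ref{lemma:suff_cond}, which absorbs the Gram matrix and reduces the question to a small-ball estimate for the Dirichlet-Laplace loadings around a fixed target matrix having the same limiting Gram matrix $v_1I_d$. Controlling that small-ball probability — which needs the full hierarchical structure of \eqref{eq:dir_laplace_prior}, not merely the Gram-matrix limit supplied by Lemma \ref{th:DL_eig_bound} — is where I expect the real difficulty to lie.
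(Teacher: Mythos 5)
Your reduction is sound as far as it goes: the closed-form expression for $p^{-1}KL(\p,\ptheta)$ matches the paper's, the factorisation of the prior mass into a $\sigma$-event and a mean-event is exactly the paper's first step, and your final paragraph correctly locates the crux --- for prior-typical $\Lambda$ the vector $\Lambda_0\eta_{0i}$ is nearly orthogonal to the column space of $\Lambda$, so $p^{-1}\norm{\Lambda\eta_i-\Lambda_0\eta_{0i}}^2$ cannot be made small merely by choosing $\eta_i$ well. The problem is that you stop there. Theorem \ref{th:KL_support} asserts $\liminf\Pi(B_p^\epsilon)>0$ for \emph{every} $\epsilon>0$, and your set $\tilde B_p$ only certifies this for one large constant of order $n\sigma_L^{-2}\bigl(C^2v_1+\sup_i\eta_{0i}^TM\eta_{0i}\bigr)$. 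Your fallback claim that a single large $\epsilon$ ``is all that condition \ref{cond1} of Theorem \ref{th:2} requires'' is not justified: in the evidence-lower-bound argument underlying Theorem \ref{th:2} (Theorem 6.39 of the Ghosal--van der Vaart framework, whose earlier draft in this paper explicitly required $C>c$) the KL constant must be dominated by the type-II-error exponent of the tests in \ref{cond2}, and Theorem \ref{th:main_result} accordingly demands that \ref{cond1} hold ``for any $c>0$''. So the small-$\epsilon$ case you defer is not an optional refinement; it is the content of the theorem, and your proposal leaves it unproved.

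For comparison, the paper closes this gap not by a small-ball estimate for the Dirichlet--Laplace prior \eqref{eq:dir_laplace_prior} around a fixed target loadings matrix, but by the orthogonal decomposition $\norm{\Lambda\eta_i-\Lambda_0\eta_{0i}}^2=\norm{\Lambda(\eta_i-(\Lambda^T\Lambda)^{-1}\Lambda^T\Lambda_0\eta_{0i})}^2+\eta_{0i}^T(\Lambda_0^T\Lambda_0-\Lambda_0^T\Lambda(\Lambda^T\Lambda)^{-1}\Lambda^T\Lambda_0)\eta_{0i}$. The projection-residual term divided by $p$ is argued to vanish using \ref{ass1}, Lemma \ref{th:DL_eig_bound} and cited results of Vershynin and of Pati et al., after which only the first term must be brought below $\sigma_L p\epsilon/2$; since $(\Lambda^T\Lambda)^{-1}\Lambda^T\Lambda_0$ is $O(1)$, this becomes a fixed-radius ball condition on $\eta_i$ alone, which the proper, fully supported mixture prior on $\eta_i$ meets with probability bounded away from zero (the ``permanence of the KL property'' step). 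Note that your near-orthogonality objection is aimed precisely at the step where the paper asserts $p^{-1}\lambda_i\bigl(\tilde\Lambda_0^T\Lambda(\Lambda^T\Lambda)^{-1}\Lambda^T\tilde\Lambda_0\bigr)\to1$; to complete your own route you would have to either reprove that alignment statement or replace it with a genuine lower bound on the prior mass of the set of aligned loadings, and until one of these is supplied the proof is incomplete.
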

	\begin{proof}
		Let $P_{0}$ and $P$ be $p$-variate multivariate normal distributions with $P\overset{}{=}\mn_p(\mu,\Sigma)$ and $P_{0}\overset{}{=}\mn_p(\mu_{0},\Sigma_{0})$. Then their Kullback-Leibler divergence is $KL(P_{0}, P) = \frac{1}{2} \{ \log\frac{\abs{\Sigma}}{\abs{\Sigma_{0}}}+ \tr\left(\Sigma^{-1}\Sigma_{0} \right) +(\mu-\mu_{0})\trans\Sigma^{-1}(\mu-\mu_{0})-p \}$,		
		which, under  the settings of Section \ref{sec:properties}, simplifies to
		\begin{equation}
			KL(\p , \ptheta) = \frac{1}{2} \left\{ np\log\frac{\sigma^2}{\sigma_{0}^2}+ np \left(\frac{\sigma_{0}^2}{\sigma^2}-1 \right) + \frac{1}{\sigma^2}\sum_{i=1}^{n} \norm{\mu_{i}-\mu_{0i}}^{2} \right\},
		\end{equation}
		where $\mu_i=\Lambda\eta_i$ and $\mu_{0i}=\Lambda_{0}\eta_{0i}$.
		Now, 
		\begin{align*}
			\Pi\left\{ p^{-1}KL(\p , \ptheta)< \epsilon \right\} & =    \Pi \left\{
			n\log\frac{\sigma^2}{\sigma_{0}^2}+ n \left(\frac{\sigma_{0}^2}{\sigma^2}-1 \right) + \frac{1}{p \sigma^2}\sum_{i=1}^{n} \norm{\mu_{i}-\mu_{0i}}^{2}  < \epsilon
			\right\}\\
			& \geq \Pi \left\{\log\frac{\sigma^2}{\sigma_{0}^2}+  \left(\frac{\sigma_{0}^2}{\sigma^2}-1 \right)\leq \frac{\epsilon}{2n},\frac{1}{\sigma^2}\sum_{i=1}^{n} \norm{\mu_{i}-\mu_{0i}}^{2} < \frac{p\epsilon}{2} \right\}.
		\end{align*}
		Note that for any $x>0$, $\log x\leq x-1$ and therefore  $\log\frac{\sigma^2}{\sigma_{0}^2}+  \left(\frac{\sigma_{0}^2}{\sigma^2}-1\right)\leq \left(\frac{\sigma_{0}}{\sigma}-\frac{\sigma}{\sigma_{0}} \right)^2$ implying that
		\begin{align*}
			\Pi\left\{ p^{-1} KL(\p , \ptheta) < \epsilon \right\} &
			\geq 
			\Pi \left\{
			\left(\frac{\sigma_{0}}{\sigma}-\frac{\sigma}{\sigma_{0}} \right)^2\leq \frac{\epsilon}{2n} ,  \frac{1}{\sigma^2}\sum_{i=1}^{n} \norm{\mu_{i}-\mu_{0i}}^{2}< \frac{p\epsilon}{2}
			\right\} \\
			&\geq \Pi\left\{\left(\frac{\sigma_{0}}{\sigma}-\frac{\sigma}{\sigma_{0}} \right)^2\leq\frac{\epsilon}{2n} \right\} \Pi\left( \sum_{i=1}^{n} \norm{\mu_{i}-\mu_{0i}}^{2}< \sigma_L \frac{p\epsilon}{2} \right) ,
		\end{align*}
		where the second inequality holds thanks to condition \ref{ass3}. 
		The first factor above is positive under our proposed prior on $\sigma$. Now consider the second factor and note that for each $i = 1, \dots, n$, 
		$\norm{\mu_{i}-\mu_{0i}}^{2}= \norm{\Lambda (\eta_i- (\Lambda\trans\Lambda)^{-1}\Lambda\trans\Lambda_{0}\eta_{0i} )}^2 + \eta_{0i}\trans (\Lambda_{0}\trans\Lambda_{0}- \Lambda_{0}\trans\Lambda(\Lambda\trans\Lambda)^{-1}\Lambda\trans\Lambda_{0} )\eta_{0i}$.		
		By the triangle inequality
		\begin{equation}
			\frac{1}{p}\norm{ \Lambda_{0}\trans\Lambda_{0}- \Lambda_{0}\trans\Lambda(\Lambda\trans\Lambda)^{-1}\Lambda\trans\Lambda_{0} }_2\leq  \norm{\frac{1}{p}\Lambda_{0}\trans\Lambda_{0}- M}_2  +\norm{M-\frac{1}{p} \Lambda_{0}\trans\Lambda(\Lambda\trans\Lambda)^{-1}\Lambda\trans\Lambda_{0}}_2. \label{eq:norm_bound}
		\end{equation}
		The first term on the right hand side of \eqref{eq:norm_bound}  goes to 0 as $p\to\infty$ by \ref{ass1}. Let us define the matrix $B= (\Lambda\trans\Lambda)^{-\sfrac{1}{ 2}}\Lambda\trans$, 
		with $\norm {B}_2=1$ and $\tilde\Lambda_{0}=\Lambda_{0} M^{-\sfrac{1}{2}}$ where $M^{\sfrac{1}{2}}$ is the Cholesky factor of $M$. From \citet[Theorem 5.39]{vershynin_2012} it follows that for any $0<\epsilon<1$ and large enough $p$, $1 -\epsilon \leq \norm {\frac{1}{\sqrt{p}}\tilde\Lambda_{0}}_2 \leq 1 +\epsilon$.
		Again, from Lemma 1.1 of the Supplement section of \citet{pati2014} we have that $1 -\epsilon\leq \norm {\frac{1}{\sqrt{p}}B\tilde\Lambda_{0}}_2 \leq 1 +\epsilon,\quad 
		1 -\epsilon \leq\frac{1}{\sqrt{p}} s_{\min} (B\tilde\Lambda_{0}) \leq 1 +\epsilon$.  
		Therefore $\lim_{p\rightarrow\infty}\lambda_i( \tilde\Lambda_{0}\trans\Lambda(\Lambda\trans\Lambda)^{-1}\Lambda\trans\tilde\Lambda_{0})p^{-1}=1$ for all $i=1,\dots,d$. Now $\norm{M-\frac{1}{p} \Lambda_{0}\trans\Lambda(\Lambda\trans\Lambda)^{-1}\Lambda\trans\Lambda_{0}}_2 =\norm{M}_2 \norm{I_d-\frac{1}{p} \tilde\Lambda_{0}\trans\Lambda(\Lambda\trans\Lambda)^{-1}\Lambda\trans\tilde\Lambda_{0}}_2$ and therefore the second term on the right hand side of \eqref{eq:norm_bound} goes to 0 as $p\to\infty$.  Subsequently we have
		$\lim_{p\rightarrow\infty} \frac{1}{p}\eta_{0i}\trans (\Lambda_{0}\trans\Lambda_{0}- \Lambda_{0}\trans\Lambda(\Lambda\trans\Lambda)^{-1}\Lambda\trans\Lambda_{0} )\eta_{0i}=0$ for all $i=1,\dots,n$. Now \ref{ass1} and Lemma \ref{th:DL_eig_bound} jointly imply that 
		$\norm {(\Lambda\trans\Lambda)^{-1}\Lambda\trans\Lambda_{0}}_2=O(1)$ $\Pi$-a.s. Therefore, for standard normal priors on the latent variables, 
		$\liminf_{p\rightarrow\infty} \Pi(  \sum_{i=1}^{n} \norm{\mu_i-\mu_{0i}}^{2}< \sigma_{L} p\epsilon) > 0$.
		From the permanence of KL-property of mixture priors  \citep[Proposition 6.28]{ghosal_book} we can conclude that the right hand side is also positive.	\end{proof}

	\begin{theorem}
		\label{th:ghoshal_cond}
		On the set $B_p^\epsilon$ defined in Theorem \ref{th:KL_support},  we have $V_r^+(\p,\ptheta)=o(p^r)$ for $r=2$.
	\end{theorem}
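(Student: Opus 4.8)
The plan is to write the log-density ratio explicitly, recognize it after centering as a sum of independent centered $\chi^2$ and Gaussian contributions, and bound its second moment uniformly on $B_p^\epsilon$. First I would expand, for each $i$,
\[
\log\frac{\mn_p(y_i;\mu_{0i},\sigma_0^2 I_p)}{\mn_p(y_i;\mu_{i},\sigma^2 I_p)}
= \frac{p}{2}\log\frac{\sigma^2}{\sigma_0^2} - \frac{\norm{y_i-\mu_{0i}}^2}{2\sigma_0^2} + \frac{\norm{y_i-\mu_{i}}^2}{2\sigma^2},
\]
with $\mu_i=\Lambda\eta_i$ and $\mu_{0i}=\Lambda_0\eta_{0i}$. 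Under $\p$ one writes $y_i=\mu_{0i}+\sigma_0 z_i$ with $z_i\simiid\mn_p(0,I_p)$; summing over $i$ and subtracting the mean — which reproduces exactly the $KL$ expression used in the proof of Theorem \ref{th:KL_support} — gives
\[
W:=\log\frac{p_0^p}{p_\vartheta^p}-KL(\p,\ptheta)=\sum_{i=1}^n\Big\{a\big(\norm{z_i}^2-p\big)+v_i^{T}z_i\Big\},\qquad a=\tfrac12\Big(\tfrac{\sigma_0^2}{\sigma^2}-1\Big),\quad v_i=\tfrac{\sigma_0}{\sigma^2}(\mu_{0i}-\mu_i).
\]

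Next I would use $V_2^+(\p,\ptheta)=\p\big((W^+)^2\big)\le \p(W^2)=\text{Var}_{\p}(W)$. The $n$ summands are independent, and for each $i$ the variables $\norm{z_i}^2-p$ and $v_i^{T}z_i$ are uncorrelated because the relevant third mixed moments of a standard Gaussian vanish; since $\text{Var}(\norm{z_i}^2)=2p$ and $\text{Var}(v_i^{T}z_i)=\norm{v_i}^2$,
\[
\text{Var}_{\p}(W)=\sum_{i=1}^n\big(2a^2 p+\norm{v_i}^2\big)=2a^2 np+\frac{\sigma_0^2}{\sigma^4}\sum_{i=1}^n\norm{\mu_{0i}-\mu_i}^2 .
\]

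Finally I would control both terms uniformly over $\vartheta\in B_p^\epsilon$. On $B_p^\epsilon$ the two nonnegative parts of $p^{-1}KL(\p,\ptheta)$, namely $\tfrac n2\big(\log(\sigma^2/\sigma_0^2)+\sigma_0^2/\sigma^2-1\big)$ and $\tfrac1{2\sigma^2 p}\sum_i\norm{\mu_{0i}-\mu_i}^2$, are each at most $\epsilon$: the first bound, together with condition \ref{ass3}, confines $\sigma$ to a fixed compact subset of $(0,\infty)$, so $|a|\le c_0$ for a constant $c_0$; the second gives $\sum_i\norm{\mu_{0i}-\mu_i}^2\le 2\sigma^2\epsilon p=O(p)$. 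Hence $\sup_{\vartheta\in B_p^\epsilon}\text{Var}_{\p}(W)=O(p)$, and therefore $V_2^+(\p,\ptheta)=O(p)=o(p^2)=o(p^r)$ for $r=2$.

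The computation itself is routine; the only points needing attention are verifying $\text{Cov}(\norm{z_i}^2,v_i^{T}z_i)=0$ and checking that every bound holds \emph{uniformly} over $B_p^\epsilon$ — which is exactly what the two-sided control of $\sigma$ and the $O(p)$ bound on $\sum_i\norm{\mu_{0i}-\mu_i}^2$, both read directly off the definition of $B_p^\epsilon$, provide. An entirely analogous argument, replacing the variance bound by the moment estimates $\p\big(|\norm{z_i}^2-p|^r\big)=O(p^{r/2})$ and $\p\big(|v_i^{T}z_i|^r\big)=O(\norm{v_i}^r)=O(p^{r/2})$, would give $V_r^+=O(p^{r/2})=o(p^r)$ for every $r\ge2$, although only $r=2$ is invoked later.
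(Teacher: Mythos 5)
Your argument is correct, and its core — bounding $V_2^+$ by the variance of the centered log-likelihood ratio, exploiting independence across $i$ and Gaussianity, and arriving at a bound of the form $c_1 p\,(1-\sigma_0^2/\sigma^2)^2 + c_2\sum_i\norm{\mu_i-\mu_{0i}}^2$ — is the same as in the paper, though you get there more cleanly via the vectorized decomposition $W=\sum_i\{a(\norm{z_i}^2-p)+v_i^Tz_i\}$ with the zero covariance observation, whereas the paper grinds through coordinate-wise second-moment computations. Where you genuinely diverge is the final step. The paper controls $\sum_i\norm{\mu_i-\mu_{0i}}^2$ by expanding it as $\eta_0^T\Lambda_0^T\Lambda_0\eta_0+\eta^T\Lambda^T\Lambda\eta-2\eta_0^T\Lambda_0^T\Lambda\eta$ and invoking conditions \ref{ass1}, \ref{ass4}, the prior eigenvalue Lemma \ref{th:DL_eig_bound} for $\Lambda^T\Lambda$, and the boundedness of $\norm{\eta}$ on $B_p^\epsilon$; you instead read the $O(p)$ bound directly off the definition of $B_p^\epsilon$, using that both nonnegative components of the Gaussian KL are separately at most $\epsilon p$ (the same device the paper itself uses later in the proof of Theorem \ref{th:last_cond}, display \eqref{eq:B_p_set}). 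Your route also derives compactness of the range of $\sigma$ from the first KL component together with \ref{ass3}, rather than treating \ref{ass3} as directly bounding the parameter $\sigma$. What your approach buys is economy and transparency of uniformity: the bound is manifestly uniform over $\vartheta\in B_p^\epsilon$, as required by condition \ref{cond1} of Theorem \ref{th:2}, without appealing to the almost-sure prior eigenvalue behavior or to the somewhat informal ``$\norm{\eta}$ is bounded on $B_p^\epsilon$'' step; the paper's route, by contrast, ties the bound back to the structural assumptions on $\Lambda_0$, $\eta_0$ and the Dirichlet--Laplace prior, machinery it needs elsewhere anyway. Your closing remark on general $r\ge 2$ is a correct aside, and only $r=2$ is needed.
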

	\begin{proof}
		For $r=2$, $V_r^+(\p,\ptheta) \leq \int \log^2 \frac{p_{0}^p}{p_\vartheta^p} d\p  - \left\{\int  \log\frac{p_{0}^p}{p_\vartheta^p} d\p \right\}^2$.		
		Now conditionally on  $\vartheta \in \vartheta$, the observations $y_1,\dots,y_n$ are independent. Therefore,
		\begin{equation}
			V_r^+(\p,\ptheta) \leq\sum_{j=1}^n \left[\int \left\{\log\frac{p_{0j}(y_j)}{p_{\vartheta_j}(y_j)}\right\}^{2} p_{0j}(y_j) \de y_j - \left\{\int  \log\frac{p_{0j}(y_j)}{p_{\vartheta_j}(y_j)} p_{0j}(y_j)\de y_j\right\}^2 \right]  \label{eq:KL_var_sum}
		\end{equation}
		where $p_{0j}(y_j)=\prod_{i=1}^p \mn(y_{ji}; \mu_{0ji}, \sigma_{0}^2 )$ and $p_{\vartheta_j}(y_j)=\prod_{i=1}^p \mn(y_{ji}; \mu_{ji}, \sigma^2 )$ with $\mu_{0j}= $ and $\mu_{j}= \Lambda\eta_{j}$. We first show the result for a particular term inside the summation of \eqref{eq:KL_var_sum}. Since $\norm{\eta_{0i}}=O(1)$ and $n$ is fixed, the result will readily follow afterwards. For simplicity, we drop the suffix $j$ from the terms of \eqref{eq:KL_var_sum} henceforth.
		Consider,
		\begin{multline*}
			\left\{\log\frac{p_{0}(y_i)}{p_\vartheta(y_i)}\right\}^{2}  =\left[\log\frac{\sigma}{\sigma_{0}}-\frac{1}{2}\left\{\left( \frac{y_i-\mu_{0i}}{\sigma_{0}}\right)^2-\left(\frac{y_i-\mu_{i}}{\sigma} \right)^2\right\} \right]^2\\
			=\frac{1}{4}\left\{\left( \frac{y_i-\mu_{0i}}{\sigma_{0}}\right)^2 -\left(\frac{y_i-\mu_{i}}{\sigma} \right)^2\right\}^2 +\log^2 \frac{\sigma}{\sigma_{0}}- \left\{\left( \frac{y_i-\mu_{0i}}{\sigma_{0}}\right)^2 -\left(\frac{y_i-\mu_{i}}{\sigma} \right)^2\right\}  \log\frac{\sigma}{\sigma_{0}}.
		\end{multline*}
		Note that,
		\begin{multline*}
			\left\{\left( \frac{y_i-\mu_{0i}}{\sigma_{0}}\right)^2 -\left(\frac{y_i-\mu_{i}}{\sigma} \right)^2\right\}^2 =\left\{z_i^2\left(1-\frac{\sigma_{0}^2}{\sigma^2}\right) -2z_i(\mu_{0i}-\mu_i)\frac{\sigma_{0}}{\sigma}+ \left( \frac{\mu_i-\mu_{0i}}{\sigma}\right)^2\right\}^2\\
			=z_i^4\left(1-\frac{\sigma_{0}^2}{\sigma^2}\right)^2+4z_i^2\sigma_{0}^2\left(\frac{\mu_{0i}-\mu_i}{\sigma}\right)^2 + \left( \frac{\mu_i-\mu_{0i}}{\sigma}\right)^4 - 2z_i^3\left(1-\frac{\sigma_{0}^2}{\sigma^2}\right)\frac{\sigma_{0}}{\sigma}(\mu_{0i}-\mu_i)\\- 2z_i\sigma_{0} \left(\frac{\mu_{0i}-\mu_i}{\sigma} \right)^3  +2z_i^2\left(\frac{\mu_{0i}-\mu_i}{\sigma} \right)^2\left(1-\frac{\sigma_{0}^2}{\sigma^2}\right)
		\end{multline*}
		where $z_i=(y_i-\mu_{0i})/\sigma_{0}$ and $z_i\simiid \mn(0,1)$. Therefore,
		\begin{equation*}
			E_{y_i}\left\{\left( \frac{y_i-\mu_{0i}}{\sigma_{0}}\right)^2 -\left(\frac{y_i-\mu_{i}}{\sigma} \right)^2\right\}=\left(1-\frac{\sigma_{0}^2}{\sigma^2}\right)+ \left( \frac{\mu_i-\mu_{0i}}{\sigma}\right)^2 \text{ and}
		\end{equation*}
		\begin{multline*}
			E_{y_i}\left\{\left( \frac{y_i-\mu_{0i}}{\sigma_{0}}\right)^2 -\left(\frac{y_i-\mu_{i}}{\sigma} \right)^2\right\}^2=3\left(1-\frac{\sigma_{0}^2}{\sigma^2}\right)^2+4\sigma_{0}^2\left(\frac{\mu_{0i}-\mu_i}{\sigma}\right)^2+ \left( \frac{\mu_i-\mu_{0i}}{\sigma}\right)^4 \\
			+2\left(\frac{\mu_{0i}-\mu_i}{\sigma} \right)^2\left(1-\frac{\sigma_{0}^2}{\sigma^2}\right).
		\end{multline*}
		Hence,
		\begin{multline*}
			\int \left\{ \log\frac{p_{0}(y_i)}{p_\vartheta(y_i)}\right\}^2p_{0}(y_i)\de y_i=\left(\frac{\mu_{0i}-\mu_i}{\sigma}\right)^2 \times \left\{\sigma_{0}^2+\frac{1}{2}\left(1-\frac{\sigma_{0}^2}{\sigma^2}\right)-\log\frac{\sigma}{\sigma_{0}} \right\}\\-\log\frac{\sigma}{\sigma_{0}} \left(1-\frac{\sigma_{0}^2}{\sigma^2}\right)+\frac{1}{4}\left(\frac{\mu_{0i}-\mu_i}{\sigma}\right)^4+\frac{3}{4}\left(1-\frac{\sigma_{0}^2}{\sigma^2}\right)^2+\log^2\frac{\sigma}{\sigma_{0}}
		\end{multline*}
		\begin{multline*}
			\left\{ \int  \log\frac{p_{0}(y_i)}{p_\vartheta(y_i)}p_{0}(y_i)\de y_i\right\}^2= \left\{\log \frac{\sigma}{\sigma_{0}} + \frac{\sigma_{0}^2 + (\mu_{0i} - \mu_i)^2}{2 \sigma^2} - \frac{1}{2}\right\}^2=\log^2 \frac{\sigma}{\sigma_{0}}+\frac{1}{4} \left(1-\frac{\sigma_{0}^2}{\sigma^2}\right)^2\\
			+\frac{1}{4}\left(\frac{\mu_{0i}-\mu_i}{\sigma}\right)^4+\left(\frac{\mu_{0i}-\mu_i}{\sigma}\right)^2 \times \left\{\log\frac{\sigma}{\sigma_{0}} -\frac{1}{2}\left(1-\frac{\sigma_{0}^2}{\sigma^2}\right)\right\}-\log\frac{\sigma}{\sigma_{0}} \left(1-\frac{\sigma_{0}^2}{\sigma^2}\right),
		\end{multline*}
		leading to
		\begin{multline}
			V_r^+(\p,\ptheta)\leq \sum_{i=1}^p\left[ \int \left\{ \log\frac{p_{0}(y_i)}{p_\vartheta(y_i)}\right\}^2p_{0}(y_i)\de y_i-   \left\{ \int  \log\frac{p_{0}(y_i)}{p_\vartheta(y_i)}p_{0}(y_i)\de y_i\right\}^2\right]\\=\frac{p}{2}\left(1-\frac{\sigma_{0}^2}{\sigma^2}\right)^2+ \left\{\sigma_{0}^2-2\log\frac{\sigma}{\sigma_{0}} +\left(1-\frac{\sigma_{0}^2}{\sigma^2}\right)\right\}\times \sum_{i=1}^p \left(\frac{\mu_{0i}-\mu_i}{\sigma}\right)^2.  \label{eq:KL_var_pow}
		\end{multline}
		Note that
		\begin{align}
			\sum_{i=1}^p (\mu_{0i}-\mu_i)^2=\sum_{i=1}^p \left( \lambda_{0i}\trans\eta_{0}-\lambda_{i}\trans\eta \right)^2=\eta_{0}\trans \Lambda_{0}\trans\Lambda_{0}\eta_{0}+\eta\trans \Lambda\trans\Lambda\eta-2\eta_{0}\trans \Lambda_{0}\trans\Lambda\eta. \label{eq:KL_var1}
		\end{align}
		Now $\eta_{0}\trans \Lambda_{0}\trans\Lambda_{0}\eta_{0}\leq \norm{\Lambda_{0}}_2^2  \norm{\eta_{0}}^2$ and therefore, by conditions \ref{ass1} and \ref{ass4}, $\eta_{0}\trans \Lambda_{0}\trans\Lambda_{0}\eta_{0}=O(p) $.
		Also from Lemma \ref{th:DL_eig_bound}, $\frac{1}{p}\norm{\Lambda}_2^2 \leq c$ for large enough $p$ and some $c>0$  and therefore $\eta\trans\Lambda\trans\Lambda\eta \leq \norm{\Lambda}_2^2 \norm{\eta}^2= \norm{\eta}^2 O(p)$. From the proof of Theorem \ref{th:KL_support} we can see that in the set $B_p^\epsilon$, $\norm{\eta}$ is bounded.
		We have shown that the highest powers in \eqref{eq:KL_var1} and thus in \eqref{eq:KL_var_pow} are almost surely bounded  by $p$ for large enough $p$. Hence the proof. 
	\end{proof}

	\begin{theorem}
		\label{th:test_function}
		Let us define the test function 
		$\phi_{p}=\mathbbm{1}\left\{\abs{\frac{1}{\sqrt{np}\sigma_{0}} \norm{ \sum_{i=1}^n (y_{i}-\Lambda_{0}\eta_{0i})}-1}   > \tau \right\}$
		to test the following hypothesis
		$H_{0}: y_{1},\dots, y_{n}\sim \p$ versus $H_1:H_{0}$ is false where $\tau$ is a positive real number. 
		Define the set 
		$\tilde{\Theta}_p=\bar{B}_{p,\delta }$. 
		Then there exists a constant $C>0$ such that $\phi_{p}\rightarrow0$ $\p$-a.s. and $\int_{\tilde{\Theta}_p} \ptheta(1-\phi_{p})\de\Pi(\vartheta)\leq e^{-Cp}$.
	\end{theorem}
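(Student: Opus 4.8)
The plan is to reduce both assertions to concentration of $\chi^2$ variables with $np$ degrees of freedom; since $n$ is fixed, $p\to\infty$ plays the role of the usual large-sample limit, and I read the statistic as the Euclidean norm of the stacked residuals, $R_p^\vartheta:=\big(\sum_{i=1}^n\norm{Y_i-\Lambda_0\eta_{0i}}^2\big)^{1/2}/(\sqrt{np}\,\sigma_0)$, consistently with the normalising constant and with the definition of $B_{p,\delta}$. For the first claim, under $\p$ we have $Y_i-\Lambda_0\eta_{0i}=\sigma_0Z_i$ with $Z_1,\dots,Z_n\simiid\mn_p(0,I_p)$, so $R_p^{\vartheta_0}=\big\{(np)^{-1}\sum_{i=1}^n\norm{Z_i}^2\big\}^{1/2}$ with $\sum_{i=1}^n\norm{Z_i}^2\sim\chi^2_{np}$; by the strong law of large numbers $(np)^{-1}\sum_{i=1}^n\norm{Z_i}^2\to1$ $\p$-a.s., so $R_p^{\vartheta_0}\to1$ $\p$-a.s.\ and $\phi_p=0$ eventually, for any fixed $\tau_{np}=\tau>0$ and, more sharply, for any $\tau_{np}$ with $\sum_p\p(|R_p^{\vartheta_0}-1|>\tau_{np})<\infty$; by the Laurent--Massart bound $\p(|R_p^{\vartheta_0}-1|>\tau_{np})$ decays like $e^{-c\,np\,\tau_{np}^2}$, summable once $\tau_{np}\gtrsim\sqrt{(\log p)/p}$, which leaves room in the choice of $\tau_{np}$ for the second part.

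For the type-II bound, fix $\vartheta=(\Lambda,\boldsymbol{\eta},\sigma)\in\bar{B}_{p,\delta}=\tilde{\Theta}_p$ and write, under $\ptheta$, $Y_i-\Lambda_0\eta_{0i}=d_i+\sigma Z_i$ with $d_i=\Lambda\eta_i-\Lambda_0\eta_{0i}$, $Z_i\simiid\mn_p(0,I_p)$. Then $\sum_{i=1}^n\norm{Y_i-\Lambda_0\eta_{0i}}^2$ is distributed as $\sigma^2$ times a non-central $\chi^2_{np}$ with noncentrality $\lambda_\vartheta=D^2/\sigma^2$, where $D^2:=\sum_{i=1}^n\norm{d_i}^2$, so $E_{\vartheta}\big[(R_p^\vartheta)^2\big]=\sigma^2/\sigma_0^2+D^2/(np\sigma_0^2)$. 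Because $\vartheta\in\bar{B}_{p,\delta}$ there is an index $i$ with $\norm{d_i}^2\ge\delta^2p$, hence $D^2\ge\delta^2p$ and $E_{\vartheta}\big[(R_p^\vartheta)^2\big]\ge\sigma^2/\sigma_0^2+\delta^2/(n\sigma_0^2)$. I would then show this mean is separated from $1$ by a fixed amount: for $\sigma^2\ge\sigma_0^2$ it is $\ge1+\delta^2/(n\sigma_0^2)$; for $\sigma^2$ well below $\sigma_0^2$ it is pinned strictly below $1$, and here one uses that $\phi_p$ is two-sided and that $\sigma\ge\sigma_L$ by \ref{ass3}; the residual ``absorbing'' configurations, in which $\sigma^2\approx\sigma_0^2-D^2/(np)$ makes the mean land near $1$, are treated separately, either by noting they form a thin slab in $\sigma$ of prior mass $O(\tau_{np})$ or by intersecting $\bar{B}_{p,\delta}$ with the Kullback--Leibler support set $B_p^\epsilon$ of Theorem \ref{th:KL_support} whose complement is irrelevant. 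Choosing $\tau_{np}$ eventually below a fixed fraction of this separation, $\{\phi_p=0\}=\{|R_p^\vartheta-1|\le\tau_{np}\}$ forces the scaled non-central $\chi^2_{np}$ to deviate from its mean by a constant fraction; Laurent--Massart/Bernstein tail bounds for quadratic forms in Gaussians, uniform in $\lambda_\vartheta$ and in $\sigma\in[\sigma_L,\sigma_U]$, then give $\ptheta(1-\phi_p)\le e^{-C'p}$ with $C'=C'(\delta,\sigma_L,\sigma_U,\sigma_0)>0$, whence $\int_{\bar{B}_{p,\delta}}\ptheta(1-\phi_p)\,\de\Pi(\vartheta)\le\Pi(\bar{B}_{p,\delta})\,e^{-C'p}\le e^{-Cp}$ for any $C<C'$ and $p$ large.

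The step I expect to be the main obstacle is exactly this separation-from-$1$ claim when the assumed residual scale $\sigma^2$ falls below the true $\sigma_0^2$: such a $\sigma$ can absorb part of the mean deviation $D^2/(np)\ge\delta^2/n$, so the two-sided statistic straddles $1$ and $\phi_p$ becomes uninformative on that region. Making the bound airtight requires combining the two-sided test, the lower bound $\sigma\ge\sigma_L$, and a restriction to a Kullback--Leibler support set as in Theorem \ref{th:KL_support}, and then establishing the $\chi^2_{np}$ concentration \emph{uniformly} in the noncentrality $\lambda_\vartheta$ and in $\sigma$ over the resulting set; everything else is routine because $n$ is held fixed.
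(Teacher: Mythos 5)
You have read the test statistic incorrectly: the paper's $\phi_p$ is built on $\norm{\sum_{i=1}^n(Y_i-\Lambda_0\eta_{0i})}$, the Euclidean norm of the \emph{single $p$-vector obtained by summing} the residuals, not on the stacked-residual norm $\bigl(\sum_i\norm{Y_i-\Lambda_0\eta_{0i}}^2\bigr)^{1/2}$. For the null part this is immaterial: under $\p$ the summed residual equals $\sqrt{n}\sigma_0\omega$ with $\omega\sim\mn_p(0,I_p)$, so the statistic is $\norm{\omega}/\sqrt{p}$ and your Laurent--Massart plus Borel--Cantelli argument is interchangeable with the paper's appeal to Gaussian norm concentration (Rudelson--Vershynin). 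For the alternative it matters a great deal: the relevant noncentrality for the paper's statistic is $\norm{\sum_i(\Lambda\eta_i-\Lambda_0\eta_{0i})}^2$, and on $\bar{B}_{p,\delta}$ only some individual $\norm{\Lambda\eta_i-\Lambda_0\eta_{0i}}$ is guaranteed to exceed $\delta\sqrt{p}$ --- the summed shifts can cancel --- so your lower bound $D^2\ge\delta^2 p$ does not transfer to the statistic actually being tested. The paper instead proceeds by a triangle-inequality reduction, bounding $\ptheta(1-\phi_p)$ by the probability that $\norm{\omega}/\sqrt{p}$ falls below the threshold $C_p=\frac{\sigma_0}{\sigma}\bigl(\frac{1}{\sqrt{np}\sigma_0}\sum_i\norm{\Lambda\eta_i-\Lambda_0\eta_{0i}}+1+\tau_{np}\bigr)$, which is a genuinely different, one-sided reduction from your mean-separation argument.

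Second, the obstacle you yourself flag --- that $\sigma^2<\sigma_0^2$ can absorb the mean shift and return the statistic to a neighbourhood of $1$ --- is real, and neither of your proposed remedies closes it at the required rate. The ``thin slab of prior mass $O(\tau_{np})$'' route cannot deliver $e^{-Cp}$: the type-I summability you invoke forces $\tau_{np}\gtrsim\sqrt{(\log p)/p}$, which is polynomial, not exponential; and intersecting with the Kullback--Leibler set $B_p^\epsilon$ changes the set over which the bound is claimed, whereas the theorem fixes $\tilde{\Theta}_p=\bar{B}_{p,\delta}$. So the decisive step of the type-II bound --- a separation from $1$ that is uniform over all $(\Lambda,\boldsymbol{\eta},\sigma)\in\bar{B}_{p,\delta}$ with $\sigma\in[\sigma_L,\sigma_U]$ --- is not established, and your proposal remains an outline with its central inequality missing. (For fairness: the paper's own treatment of this step is itself delicate, as it ends by bounding $\Pr(\bar{\omega}<C_p)$ for a threshold with $\liminf C_p>0$ and $\bar{\omega}\sim\mn(0,1/p)$, a lower-tail event whose probability does not decay; so your diagnosis identifies a genuine difficulty in the argument rather than merely a defect of your own route.)
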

	\begin{proof}
		Let us define $\mu_i=\Lambda\eta_i$ and $\mu_{0i}=\Lambda\eta_{0i}$. Then under $H_{0}$, $\frac{1}{\sqrt{n}\sigma_{0}} \sum_{i=1}^n(y_i-\Lambda\eta_{0i}) \sim \mn_p(0,I_p )$ and therefore $\frac{1}{\sqrt{np}\sigma_{0}} \sum_{i=1}^n(y_i-\Lambda\eta_{0i}) \overset{d}{=} \omega/\sqrt{p}$ where $\omega\sim \mn_p(0,I_p)$. Then from \citet[Theorem 2.1]{Rudelson13} for some $c>0$ and any $\tau_{np}>0$  $\p\phi_{p}=\Pr\left(\, \abs{\frac{1}{\sqrt{p}}\norm{ \omega}-1 }>\tau_{np} \right) \leq 2\exp\left(-pc\tau^2_{np} \right)$.
		Since $\sum_{p=1}^\infty \p\phi_{p}<\infty$, by Borel-Cantelli lemma $\phi_p\to0$ $\p$-a.s.
		
		Notably when $H_{0}$ is not true i.e. under $\ptheta$, $Y_i\overset{d}{=}\sigma\varphi_i+\Lambda\eta_i$ where $\varphi_i\simiid\mn_p(0,I_p)$ for some $\vartheta\neq(\Lambda_{0},\eta_{0},\sigma_{0})$ and therefore under $\ptheta$
		\begin{align}
			&\ptheta(1-\phi_p)\leq \Pr\left\{\frac{1}{\sqrt{pn}\sigma_{0}}\norm{\sum_{i=1}^n(\sigma\varphi_i+\Lambda\eta_i-\Lambda_{0}\eta_{0i})}<1+\tau_{np} \right\} \notag\\
			\leq& \Pr\left\{\frac{1}{\sqrt{pn}\sigma_{0}}\sum_{i=1}^n\norm{\Lambda\eta_i-\Lambda_{0}\eta_{0i}}-1-\tau_{np}-\frac{\sigma}{\sigma_{0}}\leq \frac{\sigma}{\sigma_{0}}\left(\frac{1}{\sqrt{np}}\sum_{i=1}^n\norm{\varphi_i} -1\right)  \right\}.\label{eq:prob}
		\end{align}
		Notably for $\vartheta\in\tilde \Theta_{p}$, $\frac{1}{\sqrt{pn}\sigma_{0}}\sum_{i=1}^n\norm{\Lambda\eta_i-\Lambda_{0}\eta_{0i}}$ is unbounded above for increasing $p$ and $\frac{\sigma}{\sigma_{0}}$ is bounded thanks to \ref{ass3}.
		Letting $C_p=\frac{1}{\sqrt{pn}\sigma_{0}}\sum_{i=1}^n\norm{\Lambda\eta_i-\Lambda_{0}\eta_{0i}}-1-\tau_{np}-\frac{\sigma}{\sigma_{0}}$ we have $\liminf_{p\rightarrow\infty}C_p>0$. Therefore, from \citet[Theorem 2.1]{Rudelson13},  we have for $\vartheta\in\tilde \Theta_{p}$,
		$\ptheta(1-\phi_{p})\leq 2\exp\left(-pnc C^2_{p} \right)$. Hence the proof.
	\end{proof}

	\begin{theorem}
		\label{th:last_cond}
		Let $\truncpi$ be the renormalized restriction of $\Pi$ to the set $B_p^\epsilon$ defined in Theorem \ref{th:KL_support}. Then	$\mathbbm{1}\{\bar{A}_p\} \to 0 ~ \p$-a.s.
	\end{theorem}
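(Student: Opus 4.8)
The plan is to reduce the statement to a strong law of large numbers for the (data-dependent, since $\truncpi$ — the prior restricted to $B_p^\epsilon$ — is deterministic) quantity
\[
W_p \;:=\; \frac{1}{p}\int\Big(\log\frac{p_0^p}{p_\vartheta^p}-KL(\p,\ptheta)\Big)\de\truncpi(\vartheta),
\]
namely $W_p\to0$ $\p$-a.s.; since $\bar A_p=\{W_p\geq\tilde\epsilon\}$, this gives $\mathbbm{1}\{\bar A_p\}\to0$ $\p$-a.s.\ at once. The key device is an exact decomposition of the integrand: under \ref{ass0} write $y_i=\Lambda_0\eta_{0i}+\sigma_0\varepsilon_i$ with $\varepsilon_i\simiid\mn_p(0,I_p)$ and substitute into the Gaussian log-density ratio and into its mean $KL(\p,\ptheta)$ (the latter is the closed form already used in the proof of Theorem \ref{th:KL_support}). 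The deterministic pieces — $\frac{np}{2}\log(\sigma^2/\sigma_0^2)$ and $\frac{1}{2\sigma^2}\sum_i\norm{\Lambda_0\eta_{0i}-\Lambda\eta_i}^2$ — cancel, leaving
\[
\log\frac{p_0^p}{p_\vartheta^p}-KL(\p,\ptheta)\;=\;\frac12\Big(\frac{\sigma_0^2}{\sigma^2}-1\Big)\Big(\sum_{i=1}^n\norm{\varepsilon_i}^2-np\Big)\;+\;\frac{\sigma_0}{\sigma^2}\sum_{i=1}^n\varepsilon_i^{T}\big(\Lambda_0\eta_{0i}-\Lambda\eta_i\big).
\]
Averaging over $\truncpi$ (Fubini) yields $W_p=\tfrac12\bar\gamma_p\cdot\tfrac1p\big(\sum_i\norm{\varepsilon_i}^2-np\big)+\tfrac1p\sum_i\varepsilon_i^{T}v_{i,p}$, where the scalar $\bar\gamma_p=\int(\sigma_0^2\sigma^{-2}-1)\de\truncpi$ and the vectors $v_{i,p}=\int\sigma_0\sigma^{-2}(\Lambda_0\eta_{0i}-\Lambda\eta_i)\de\truncpi\in\R^p$ are deterministic.

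First I would handle the $\chi^2$ term. On $B_p^\epsilon$ the inequality $KL(\p,\ptheta)\leq p\epsilon$, together with the elementary fact $\log(\sigma^2/\sigma_0^2)+(\sigma_0^2/\sigma^2-1)\geq0$, confines $\sigma$ to a compact interval $[\sigma_-,\sigma_+]\subset(0,\infty)$ depending only on $\epsilon,n,\sigma_0$ and not on $p$; hence $|\bar\gamma_p|$ is bounded uniformly in $p$. Since $\tfrac1p\big(\sum_i\norm{\varepsilon_i}^2-np\big)$ is $p^{-1}$ times a sum of $np$ i.i.d.\ centered sub-exponential ($\chi^2_1-1$) random variables, a Bernstein bound gives $\p\big(\big|\tfrac1p(\sum_i\norm{\varepsilon_i}^2-np)\big|\geq t\big)\leq 2e^{-c_t p}$, summable in $p$, so Borel--Cantelli yields $\tfrac1p\big(\sum_i\norm{\varepsilon_i}^2-np\big)\to0$ $\p$-a.s., and the first summand of $W_p$ vanishes a.s.

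Next the Gaussian term. Conditionally on the deterministic $v_{i,p}$, the scalars $\varepsilon_i^{T}v_{i,p}\sim\mn(0,\norm{v_{i,p}}^2)$ are independent across $i$, so $\tfrac1p\sum_i\varepsilon_i^{T}v_{i,p}\sim\mn\big(0,\,p^{-2}\sum_i\norm{v_{i,p}}^2\big)$. By the triangle inequality for vector integrals followed by Jensen, $\norm{v_{i,p}}^2\leq\int\sigma_0^2\sigma^{-4}\norm{\Lambda_0\eta_{0i}-\Lambda\eta_i}^2\de\truncpi$; and on $B_p^\epsilon$ the bound $KL(\p,\ptheta)\leq p\epsilon$ forces $\sum_i\norm{\Lambda_0\eta_{0i}-\Lambda\eta_i}^2\leq 2\epsilon\sigma^2 p$, with $\sigma^{2},\sigma^{-2}$ bounded there, so $\sum_i\norm{v_{i,p}}^2=O(p)$. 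Thus the Gaussian term has variance $O(1/p)$, whence $\p\big(\big|\tfrac1p\sum_i\varepsilon_i^{T}v_{i,p}\big|\geq\tilde\epsilon\big)\leq 2e^{-c'p}$, again summable; Borel--Cantelli kills this term $\p$-a.s.\ as well. Combining the two parts, $W_p\to0$ $\p$-a.s., which is the claim.

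The algebra in the displayed identity and the reuse of the $O(p)$ bounds from the proofs of Theorems \ref{th:KL_support}--\ref{th:ghoshal_cond} are routine. The only genuinely delicate point — and the real obstacle — is the uniform control, over the growing-dimension support $B_p^\epsilon$ of $\truncpi$, of the $\vartheta$-dependent scalars $\sigma^{2},\sigma^{-2}$ and $\sum_i\norm{\Lambda_0\eta_{0i}-\Lambda\eta_i}^2$: it is precisely because membership in $B_p^\epsilon$ automatically pins $\sigma$ to a fixed compact subinterval of $(0,\infty)$ and forces $\sum_i\norm{\Lambda_0\eta_{0i}-\Lambda\eta_i}^2=O(p)$ that both pieces of $W_p$ acquire the $O(1/p)$ variance scale needed for Borel--Cantelli-summable tail bounds. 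If one were content with convergence in probability (hence with the slightly weaker conclusion $\Pi(\tilde{\Theta}_p\mid\by)\to0$ in $\p$-probability in Theorem \ref{th:2}), a one-line second-moment estimate using $V_2(\p,\ptheta)=O(p)$ on $B_p^\epsilon$ — established in the proof of Theorem \ref{th:ghoshal_cond} — would already suffice, and the Gaussian decomposition above is exactly what upgrades this to almost-sure convergence.
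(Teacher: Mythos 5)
Your proposal is correct and follows essentially the same route as the paper's proof: the identical decomposition of $\log(p_0^p/p_\vartheta^p)-KL(\p,\ptheta)$ into a centered $\chi^2$ term and a Gaussian linear term after substituting $y_i=\mu_{0i}+\sigma_0\varepsilon_i$, the same use of the $B_p^\epsilon$ constraint (plus Jensen) to get $\sum_i\norm{v_{i,p}}^2=O(p)$, and the same exponential tail bounds combined with Borel--Cantelli. The only cosmetic differences are that you derive compactness of the $\sigma$-range from the KL bound rather than from \ref{ass3}, and you keep the joint integral $\int\sigma_0^2\sigma^{-4}\norm{\mu_i-\mu_{0i}}^2\de\truncpi$ rather than factorizing it, which is if anything slightly cleaner since $\truncpi$ need not preserve prior independence.
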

	\begin{proof}
		If we can show that $\sum_{p=1}^\infty \p(\bar{A}_p)<\infty$, then by Borel-Cantelli lemma $\p[\limsup \bar{A}_p]=0$ and henceforth $\mathbbm{1}\{\bar{A}_p\} \to 0 ~ \p$-a.s. Now
		\begin{align*}
			\p(\bar{A}_p) 
			=\p\left[\frac{1}{p}\int \hspace{-0.1cm} \sum_{i=1}^n\left\{
			\frac{1}{\sigma^2}\norm{y_i\hspace{-0.1cm} - \hspace{-0.1cm}\mu_i}^2 -
			\frac{1}{\sigma_{0}^2} \norm{y_i\hspace{-0.1cm} - \hspace{-0.1cm}\mu_{0i}}^2 -
			\frac{1}{\sigma^2}\norm{\mu_i\hspace{-0.1cm} - \hspace{-0.1cm}\mu_{0i}}^2  -
			p\left(\frac{\sigma^2_{0}}{\sigma^2}\hspace{-0.1cm} - \hspace{-0.1cm}1\right) \right\}\de\truncpi >2\epsilon\right].
		\end{align*}
		Notably under $\p$, $Y_i\overset{d}{=}\sigma_{0}\varphi_i+\mu_{0i}$ where $\varphi_i\simiid\mn_p(0,I_p)$. Therefore	
		\begin{align}
			\p(\bar{A}_p)
			=&\Pr\left[\frac{1}{p}\int\sum_{i=1}^n\left\{\left(\frac{\sigma^2_{0}}{\sigma^2}-1\right)(\norm{\varphi_i}^2-p) +2 \frac{\sigma_{0}}{\sigma^2}\varphi_i\trans(\mu_i-\mu_{0i})   \right\}\de\truncpi>2\tilde\epsilon\right]\nonumber\\
			\leq&\Pr\left[\frac{1}{p}\sum_{i=1}^n (\norm{\varphi_i}^2-p) \int\left(\frac{\sigma^2_{0}}{\sigma^2}-1\right) \de\truncpi>\tilde\epsilon\right]+\notag \\
			& \phantom{ABCABCABCABCABC} \Pr\left[\frac{2}{p}\int\sum_{i=1}^n\left\{ \frac{\sigma_{0}}{\sigma^2}\varphi_i\trans(\mu_i-\mu_{0i})   \right\}\de\truncpi>\tilde\epsilon\right].\label{eq:bclemma_parts}
		\end{align}
		Let us consider the first term of \eqref{eq:bclemma_parts}. Notably 
		\begin{equation}
			\Pr\left[\frac{1}{p}\sum_{i=1}^n (\norm{\varphi_i}^2-p) \int\left(\frac{\sigma^2_{0}}{\sigma^2}-1\right) \de\truncpi>\tilde\epsilon\right]\leq \Pr\left[\frac{1}{p}\abs{\sum_{i=1}^n (\norm{\varphi_i}^2-p)}  \int\abs{\frac{\sigma^2_{0}}{\sigma^2}-1} \de\truncpi>\tilde\epsilon\right].\label{eq:1stpart}
		\end{equation}
		From \ref{ass3} we have that $\sigma$ lies in a compact interval. Hence the integral in the right hand side of \eqref{eq:1stpart} is bounded above by some positive constant, say $C_{\sigma,1}$. Therefore,
		\begin{align*}
			\Pr\left[\frac{1}{p}\sum_{i=1}^n (\norm{\varphi_i}^2-p) \int\left(\frac{\sigma^2_{0}}{\sigma^2}-1\right) \de\truncpi>\tilde\epsilon\right]
			\leq \Pr\left[\frac{1}{p}\abs{\sum_{i=1}^n (\norm{\varphi_i}^2-p)} >\frac{\tilde\epsilon}{C_{\sigma,1}}\right]\leq 2 e^{-pC_{\sigma,2}}
		\end{align*}
		for some positive constant $C_{\sigma,2}>0$. The second inequality in the above equation follows from \citet[Theorem 2.1]{Rudelson13}. Clearly
		\begin{equation}
			\sum_{p=1}^\infty \Pr\left[\frac{1}{p}\sum_{i=1}^n (\norm{\varphi_i}^2-p) \int\left(\frac{\sigma^2_{0}}{\sigma^2}-1\right) \de\truncpi>\tilde\epsilon\right]<\infty.\label{eq:1stpart_sum}
		\end{equation}
		Now we consider the second term of \eqref{eq:bclemma_parts}. As $\varphi_i=(\varphi_{i1},\dots,\varphi_{ip})\trans$ (similarly $\mu_i$ and $\mu_{0i}$ are also $p$-dimensional vectors) we can write
		\begin{align*}
			\Pr\left[\frac{2}{p}\int\sum_{i=1}^n\left\{ \frac{\sigma_{0}}{\sigma^2}\varphi_i\trans(\mu_i-\mu_{0i})   \right\}\de\truncpi>\tilde\epsilon\right]&=\Pr\left[\frac{2}{p}\sum_{i=1}^n\sum_{j=1}^p\varphi_{ij}\int \left\{ \frac{\sigma_{0}}{\sigma^2}(\mu_{ij}-\mu_{0ij})   \right\}\de\truncpi>\tilde\epsilon \right]\\
			&\leq \exp\left[-\frac{p^2\tilde\epsilon^2}{4\sigma_{0}^2\sum_{i=1}^n\sum_{j=1}^pE^2_{\truncpi} \left\{ \frac{1}{\sigma^2}(\mu_{ij}-\mu_{0ij})   \right\}}\right],
		\end{align*}
		where $E_{\truncpi}$ denotes the expectation with respect to the probability measure $\truncpi$. The above inequality follows from sub-Gaussian concentration bounds. Now
		\begin{align}
			\sum_{i=1}^n\sum_{j=1}^p E^2_{\truncpi} \left\{ \frac{1}{\sigma^2}(\mu_{ij}-\mu_{0ij}) \right\} &\leq \sum_{i=1}^n  E_{\truncpi}\frac{1}{\sigma^4}  \norm{ \mu_{i}-\mu_{0i} }^2\mbox{ (by Jensen's inequality)}\nonumber\\
			=&E_{\truncpi}\frac{1}{\sigma^4} \sum_{i=1}^n  \times E_{\truncpi}  \norm{ \mu_{i}-\mu_{0i} }^2. \label{eq:equality}
		\end{align}
		Since we consider independent priors on $\sigma,\Lambda$ and $\eta_i$, \eqref{eq:equality} follows from its preceding step. Note that on the set $B_p^\epsilon$ 	
		\vskip-4ex	
		\begin{equation}
			n\log\frac{\sigma^2}{\sigma_{0}^2}+ n \left(\frac{\sigma_{0}^2}{\sigma^2}-1 \right) + \frac{1}{p\sigma^2}\sum_{i=1}^{n} \norm{\mu_i-\mu_{0i}}^2 <2\epsilon.\label{eq:B_p_set}
		\end{equation}
		\vskip-1ex
		\noindent From the inequality $\log x<x-1$ we see that $n\log\frac{\sigma^2}{\sigma_{0}^2}+ n \left(\frac{\sigma_{0}^2}{\sigma^2}-1 \right)>0$. Therefore for $\vartheta\in B_p^\epsilon$, in conjunction of \eqref{eq:B_p_set} and \ref{ass3} we have
		$\frac{1}{p}\sum_{i=1}^{n} \norm{\mu_i-\mu_{0i}}^2 <2\epsilon\sigma_U^2\Rightarrow \frac{1}{p} \sum_{i=1}^n  E_{\truncpi}  \norm{ \mu_{i}-\mu_{0i} }^2<2\epsilon\sigma_U^2$. Also thanks to  \ref{ass3} $E_{\truncpi}\frac{1}{\sigma^4}$ is bounded above. Hence the term in \eqref{eq:equality} is bounded above and consequently
		\vskip-2ex
		\begin{equation}
			\sum_{p=1}^\infty\Pr\left[\frac{2}{p}\int\sum_{i=1}^n\left\{ \frac{\sigma_{0}}{\sigma^2}\varphi_i\trans(\mu_i-\mu_{0i})   \right\}\de\truncpi>\tilde\epsilon\right]<\infty. \label{eq:2ndpart_sum}
		\end{equation}
		\vskip-.5ex
		\noindent Combining \eqref{eq:1stpart_sum} and \eqref{eq:2ndpart_sum} we conclude that
		$\sum_{p=1}^\infty\p(\bar{A}_p)<\infty.$ Hence the proof.	
	\end{proof}
	
	\section{Details on Simulation Studies}
	\label{SM_sec:simulation_details}
	In this section, we discuss the data-generation strategies in three simulation scenarios: [1] Lamb, [2] mixture of sparse factor analyzers (MFA), and [3] mixture of log transformed zero inflated Poisson counts (SpCount) considered in Section \ref{sec:simulation} of the main manuscript.
	The observed $p$-dimensional data are $y_{1},\dots,y_{n}$,
	$k_{0}$ is the number mixture components in the simulation truth
	and $\pi_{1},\dots, \pi_{k_{0}}$ are the mixture probabilities attached to each cluster such that $\sum_{h=1}^{k_{0}} \pi_{h}=1$.
	\begin{description}	[leftmargin=0pt]
		\item[Lamb] We let the observed data 
		\vskip-3ex
		\begin{equation*}
			y_{i}= \Lambda \eta_{i} +\epsilon_{i}, \quad \eta_{i}\simiid \textstyle{ \sum_{h=1}^{k_{0}} \pi_{h} \mn_{d} ( \mu_{h},  \Delta_{h} )},\quad \epsilon_{i} \simiid\mn_{p}(0, \Sigma),
		\end{equation*}
		\vskip-2ex
		
		where $\Lambda$ is a $p\times d$ order sparse matrix with many entries equal to zero, 
		$\mu_{h} \in \R^{d} $, $\Delta_{h}$ is a  $d\times d$ positive definite matrix, for all $h=1,\dots, k_{0}$
		and $\Sigma$ is a $p\times p$ order diagonal matrix with positive entries.
		
		\item[Mixture of sparse factor analyzers (MFA)] We let the observed data 
		\vskip-3ex
		\begin{equation*}
			y_{i} \simiid \textstyle{ \sum_{h=1}^{k_{0}} \pi_{h} \mn_{p} (  \mu_{h},  \Lambda_{h}\Lambda_{h}\trans +\Sigma_{h})},
		\end{equation*}
		\vskip-2ex
		where $\Lambda_{h}$ is a $p\times d$ order sparse matrix with many entries equal to zero, 
		$\Sigma_{h}$ is a $p\times p$ diagonal matrix with positive entries
		and $\mu_{h} \in \R^{p} $, for all $h=1,\dots, k_{0}$.
		
		\item[Mixture of log transformed zero inflated sparse Poisson counts (SpCount)] 
		Let $\{\ell_{1},\dots, \ell_{p} \}$ be a random permutation of $1,\dots,p$, $r=\lfloor p/k_{0} \rfloor$ and 
		define the set $S_{h}=\{\ell_{(h-1)\times r+1},\dots, \ell_{h\times r}\}$ for all $h=1,\dots,k_{0}$.
		Thus $\{S_{1},\dots, S_{k_{0}}\}$ can be regarded as a random partition of  $\{1,\dots,p\}$ where each partition has $r$ elements.
		Additionally fix $k_{0}$ positive constants $\lambda_{1},\dots, \lambda_{k_{0}}$, and let 
		\vskip-4ex	
		\begin{align*}
			&w_{ij} \mid c_{i}=h \simiid \begin{cases}
				\mathrm{Pois}(\lambda_{h})+ \mn(0,1) \text{ for all } j \in S_{h},\\
				0\text{ with probability 1 for all } j \notin S_{h},
			\end{cases}\\
			& \Pr(c_{i}=h) =\pi_{h} \text{ for all } h=1,\dots, k_{0}.
		\end{align*}
		\vskip-3ex
		where $\mathrm{Pois}(\lambda)$ is the Poisson distribution with mean $\lambda$ and set $y_{ij}=\log(w_{ij}+1)$ for all $j=1,\dots,p$ and $i=1,\dots,n$.
		Thus the observed data $y_{i}$'s are highly non-Gaussian within each cluster.
	\end{description}
	
	
	
	\section{Additional Simulation Studies}
	\label{sm sec:simstudy}
	
	\subsection{Illustration of the Degenerate Clustering Behaviour}
	\label{sm subsec:degen_clustering}

	To show the degenerate clustering behavior discussed in Section \ref{sec:pitfall} we performed two simple simulation experiments under the settings of Corollaries \ref{cor:neg_result2} and  \ref{cor:neg_result}.

	In the first experiment, we  generate  data from a five-component mixture model. 
	Specifically, we assumed five well-separated Gaussians with equal proportions. 
	The location vector for the $h$-th component is $\theta_{h} \bone_p$ with $\bone_{p}$ a $p$-dimensional vector of ones, $\theta_h \in \R$ and the values of $\theta_h$ ranging from -10 to +10.  Each mixture component has identity covariance matrix.  We fix $n=10$ and $p=20$.  
	The left panel of Figure \ref{fig:simpleboxplots} displays the distribution of the posterior median number of clusters in 100 replicates for a standard DP location mixture with hyperparameter specification satisfying Corollary~\ref{cor:neg_result2} and proposed Lamb. For the DPM, we use the implementation in the \texttt{BNPmix} package \citeplatex{bnpmix}.
	Despite coming from a five-component mixture model, the data are grouped into a single cluster for most of the simulation replicates under the DPM specification, consistent with the limiting behavior described by Corollary \ref{cor:neg_result2}.
	
	In the second experiment,  we assume  a  single $p$-variate normal distribution with mean zero, and identity covariance. 
	As before, we fix $n=10$ and $p=20$.  
	The results obtained assuming a DP mixture with the  hyperparameter specification satisfying Corollary \ref{coroll2} and the proposed Lamb are reported in the right panel of Figure \ref{fig:simpleboxplots}.  
	These results clearly show that the limiting behavior described by Corollary \ref{coroll2} is evident already for the moderate $p=20$.
	Notably, the proposed Lamb avoids these pitfalls and is associated to  a median number of clusters that is centered around the true values.

	\begin{figure}[h] 
		\centering
		\includegraphics[width=0.9\linewidth]{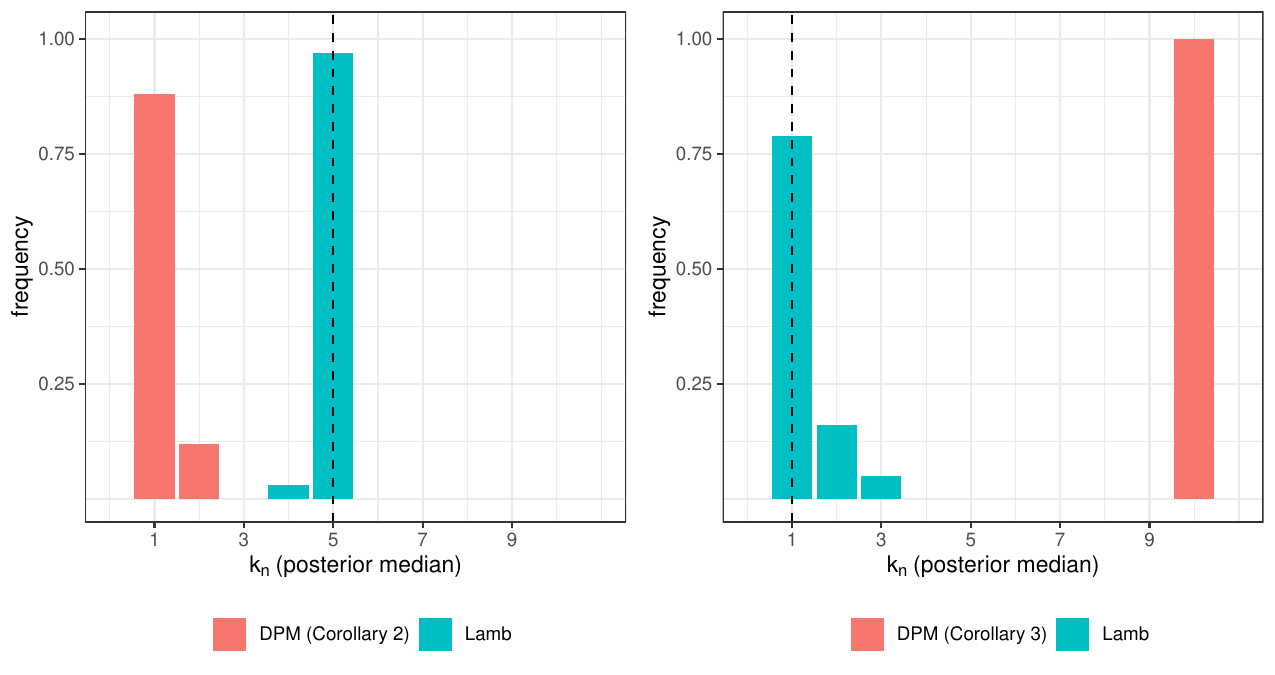}
		\caption{Empirical distribution of the posterior median number of clusters in 100 replicates under the first (left) and second (right) simulation experiment. 
			DPM hyperparameter specification 		satisfies Corollary \ref{cor:neg_result2} (left)  and Corollary \ref{cor:neg_result} (right). Vertical dashed lines represent the true number of clusters.}
		\label{fig:simpleboxplots}
	\end{figure}
	

	\vspace*{-.25in}
	\subsection{Recovering the Latent Space}
	\label{subsec:simulation_latent}

	To empirically illustrate  the robustness of  assumptions \ref{ass0} and \ref{ass2} used to prove the theory of Section \ref{sec:properties}, we perform a simple simulation study. 
	These conditions ensure that the data contain increasing information on the latent factors as $p$ increases.  
	Increasing $p$ means that we observe additional $y_{ij}$ variables for each subject. 
	Each of these variables can have very small correlation with the latent factor $\eta_{0i}$ and there will still be a build-up of information.  
	
	To see this, we generate random $y_{i}$ for $i=1, \dots, 4$ and $p\in\{20,200,1000\}$. 
	Data are generated as 
	$y_{i}=\Lambda_{0} \eta_{0i} +\epsilon_{i}$ where the factor loadings $\Lambda_0$'s are generated according to
	\vskip-1.5ex
	\begin{equation*}
		\lambda_{0jh}\simiid \pi \delta_{\{0\} }+ (1-\pi)\delta_{\{0.5\} },
	\end{equation*}
	\vskip-1ex \noindent
	where $\delta_{\{a\} }$ denotes a Dirac's delta mass at value $a$. 
	The true latent factors are simulated as $\eta_{0i,j}\sim\mn(i+j-1,0.05^{2})$ where $\eta_{0i}=(\eta_{0i1},\dots,\eta_{0id})\trans$.
	We consider two error distributions ensuring low signal-to-noise ratio, and specifically
	$\epsilon_{ij}\sim\mn(0, 25)$  and  $\epsilon_{ij}\sim t_{3}$ where $t_{3}$ denotes a central $t$ distribution with 3 degrees of freedom.
	We set $\pi=0.2$ and the latent dimension $d=2$.
	
	To examine the level of recovery, for the $m$-th MCMC iteration, we regress the true factors  with their current value in the $m$-th iteration. Specifically we stack all $\eta_{0 ij}$ across $i=1,\dots,n$ in a vector and use it as response variable, while using as predictor the vector containing all $\eta_{ij}^{(m)}$ of the $m$-th iteration. 
	We do this for each iteration after the burn-in.
	Clearly, the latent factors are non-identifiable due to the well known rotational ambiguity and thus  they can be learned only up to some non-singular matrix multiplication. 
	Hence, to quantify the accuracy in recovering  the latent space, we consider the  coefficient of determination $R^{2}$ of each fitted regression which is invariant of such identifiability issues. 
	Figure \ref{fig:increasinginfo} reports the results. 
	For both error distributions under consideration, as $p$ grows the posterior distributions of the coefficients of determination concentrate near one implying that with more variables we improve on the learning of the latent space even with low signal-to-noise ratios.
	
	\begin{figure}[H]
		\centering
		\includegraphics[width=0.8\linewidth]{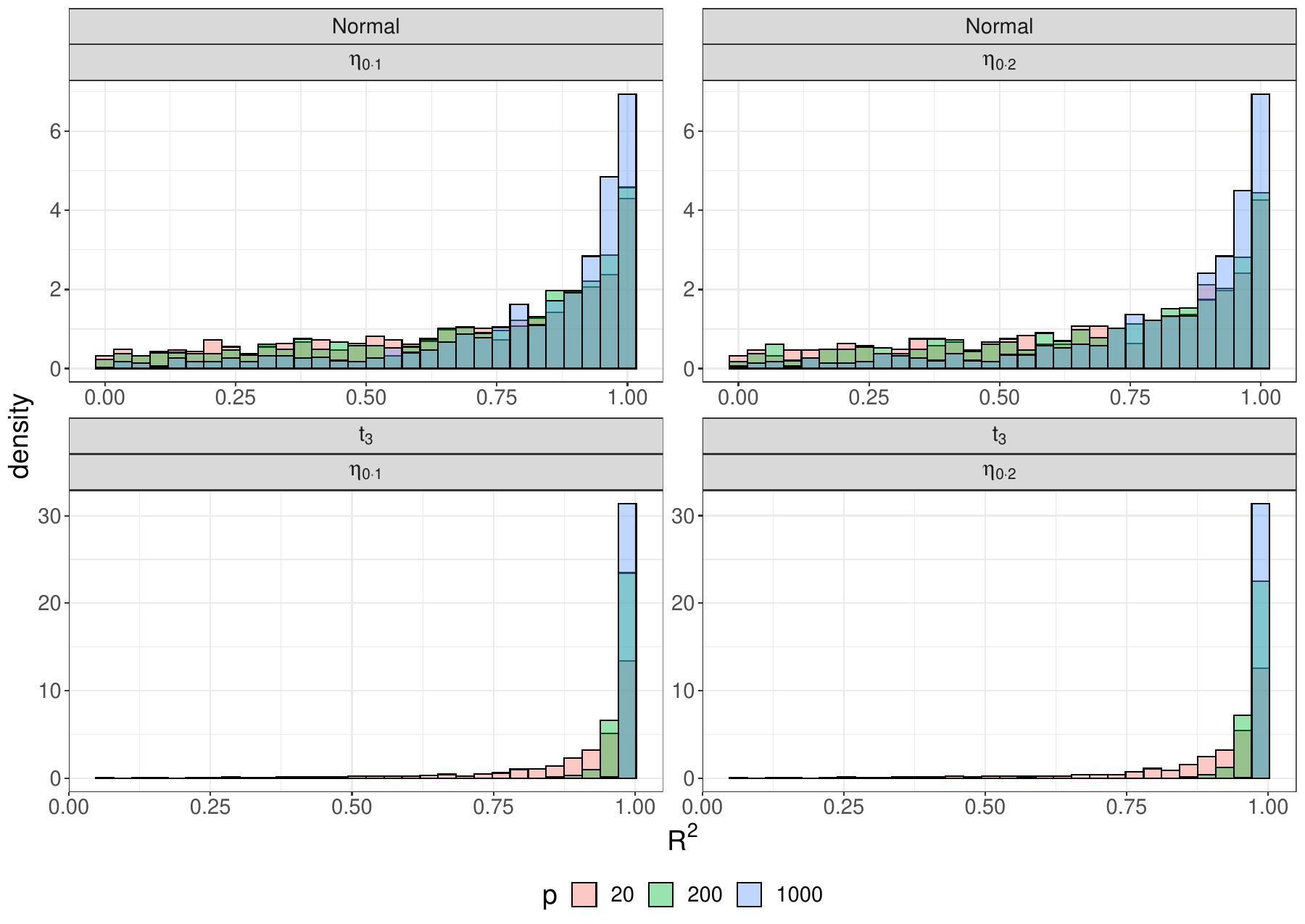}
		\vspace*{-.2in}
		\caption{Posterior distributions of the  coefficient of determination $R^{2}$ of the linear regressions of the true latent factors on the associated posterior samples
			for two error distributions ($\mn(0,25)$, in the upper quadrants and $t_{3}$, in the bottom quadrants). 
			The different dimensions $p$ are denoted by the different colors.}
		\label{fig:increasinginfo}
	\end{figure}
	
	\subsection{Small Sample Studies}
	\label{sm subsec:smallsamp}
	\vspace*{-.1in}	
	In this section, we do additional simulation studies.
	We consider the same setups considered in Section \ref{sec:simulation} of the main paper but take the sample size $n=500$.
	The true number of clusters is fixed to $k_{0}\in\{10,15\}$. 
	The results depicted in 	Figure \ref{sm fig:comparison}  are overall consistent with those reported in Section \ref{sec:simulation}.
	\vspace*{-.4in}	
	\begin{figure}[H]
		\centering
		\includegraphics[width=.5\textwidth]{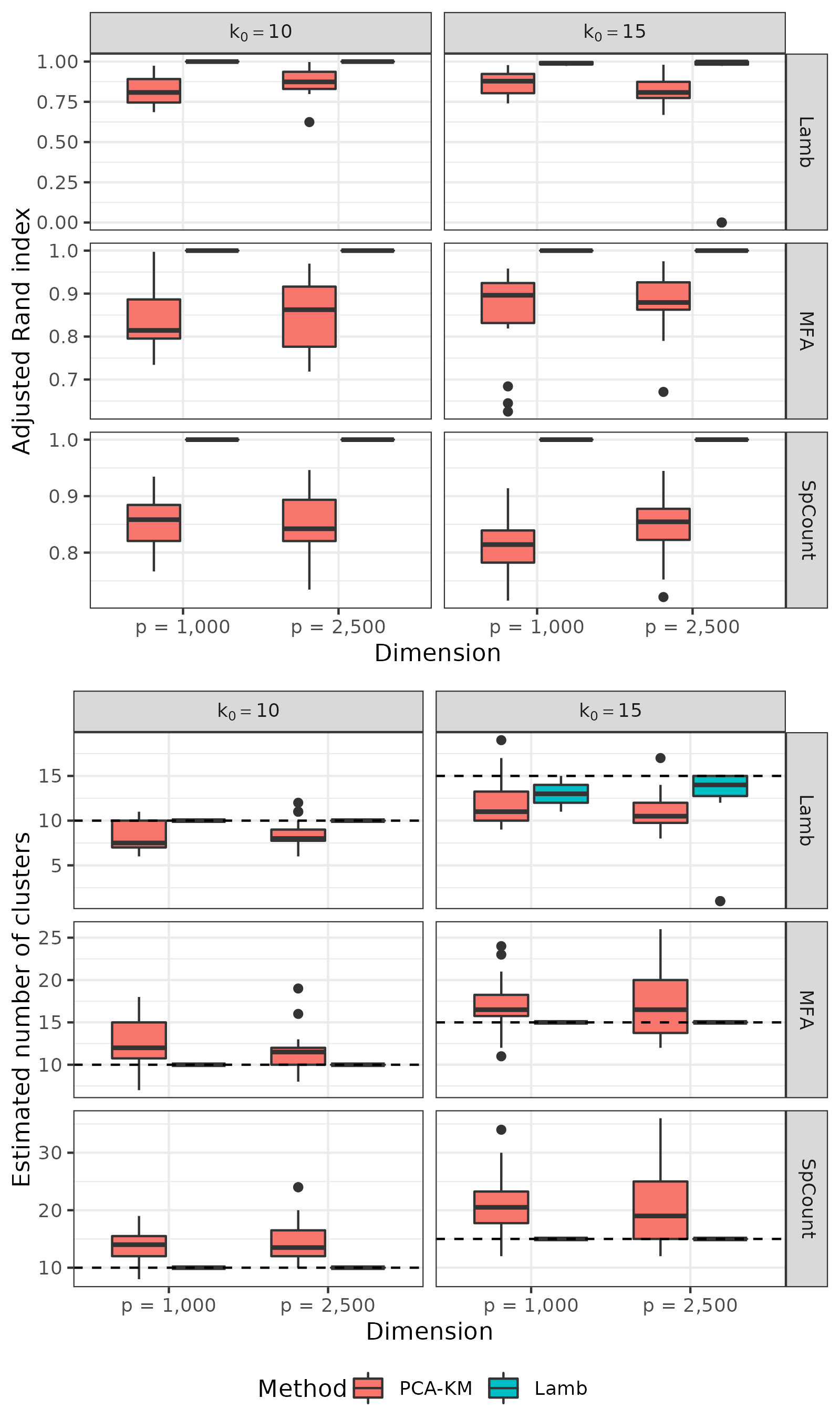}
		\vspace*{-.2in}
		\caption{{Comparison between our proposed Lamb and the two-stage PCA-KM approach:}
			Distributions of the adjusted Rand indices (upper plot) and estimated number of clusters (lower plot) in 20 replicated experiments. 
			Horizontal dashed lines denote the true number of clusters. 
			The {true simulation} scenarios, reported in each row, are labeled as Lamb for the model of Section \ref{sec:lamb}, MFA for mixture of factor analyzers and SpCount for the $\log$ transformed zero inflated sparse Poisson counts.}
		\label{sm fig:comparison}
	\end{figure}

	\vspace*{-.5in}
	\subsection{Figures Associated to Section \ref{sec:simulation}}
	\label{subsec:simulations_sm}			
		\vspace*{-.1in}	
	Figures \ref{sm fig: k10_p1000}-\ref{sm fig: k25_p2500} report the UMAP \citeplatex{umap} plots of the simulated datasets of Section \ref{sec:simulation},
	corresponding to the replicate with median adjusted Rand index \citeplatex{rand1971objective}. 
	In each figure, the upper and lower panels show the true clustering and the estimated clustering obtained by the Lamb model, respectively.
	Each figure's caption specifies the true number of clusters ($k_{0}$) and the dimension ($p$).

	\begin{figure}
		\centering
		\includegraphics[height=.225\textheight] {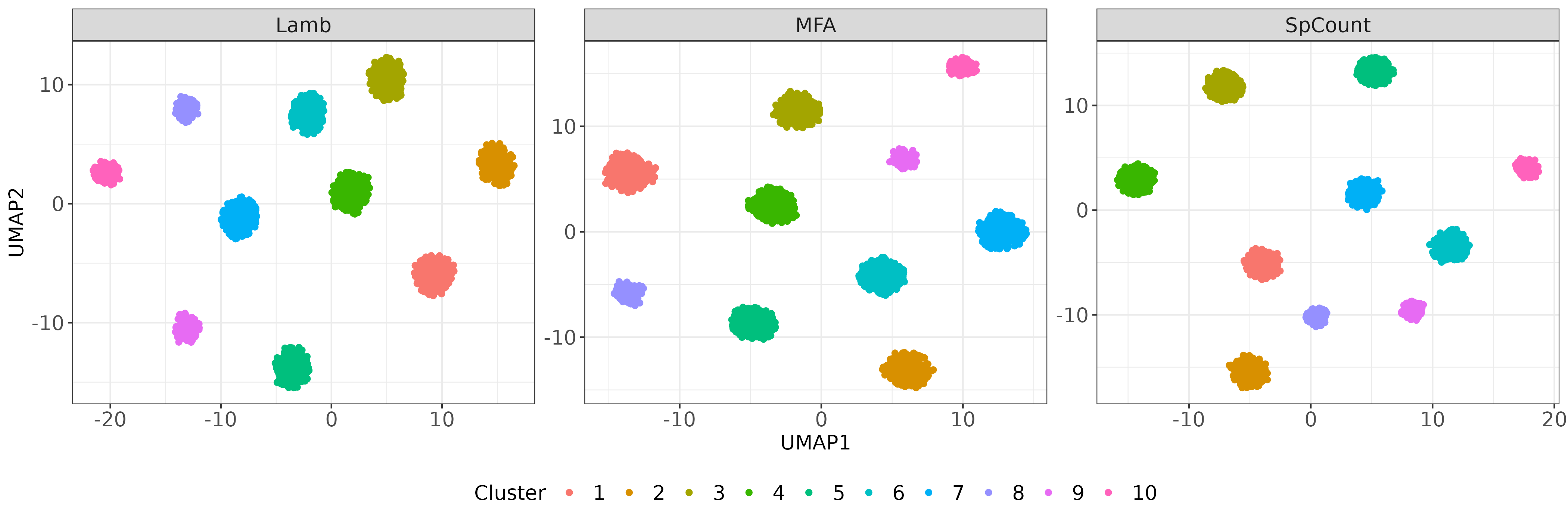}
		\includegraphics[height=.225\textheight] {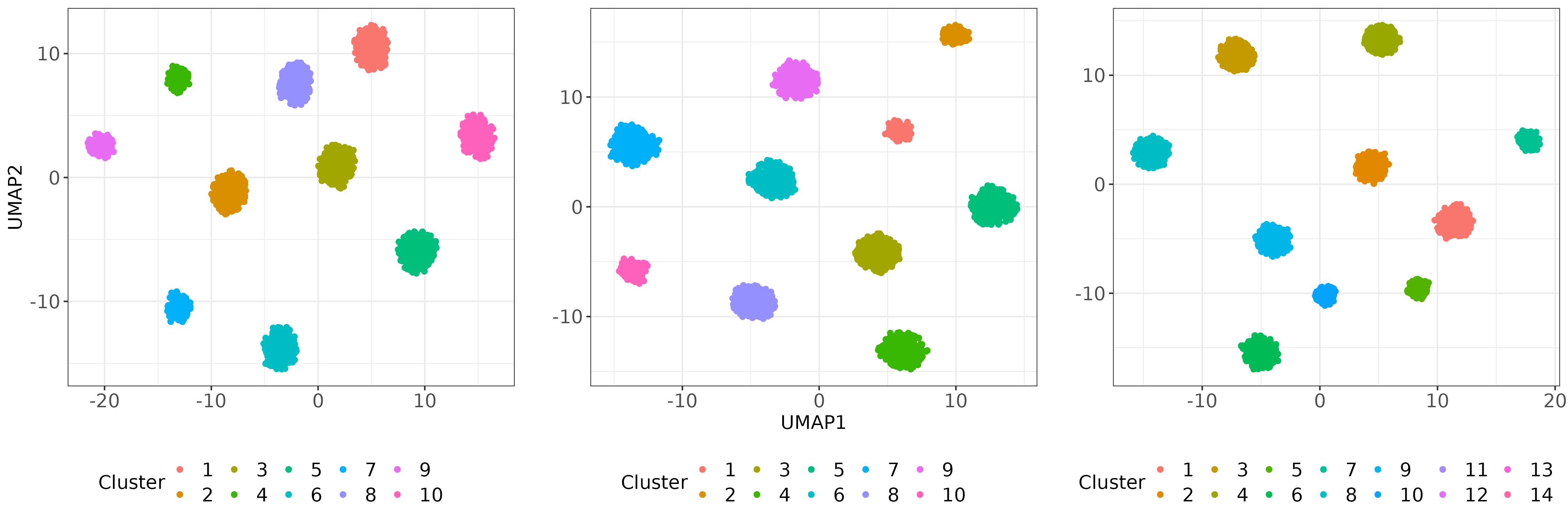}
		\caption{$k_{0}=10$, $p=1,000$.}
		\label{sm fig: k10_p1000}
	%
		\includegraphics[height=.225\textheight] {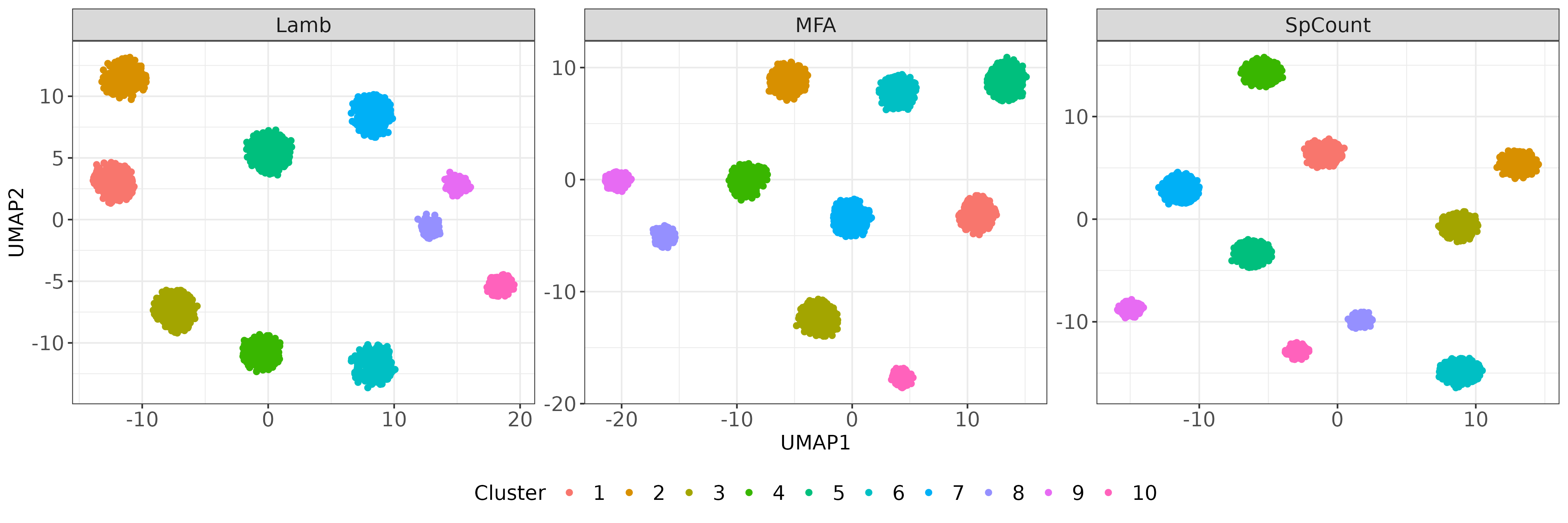}
		\includegraphics[height=.225\textheight] {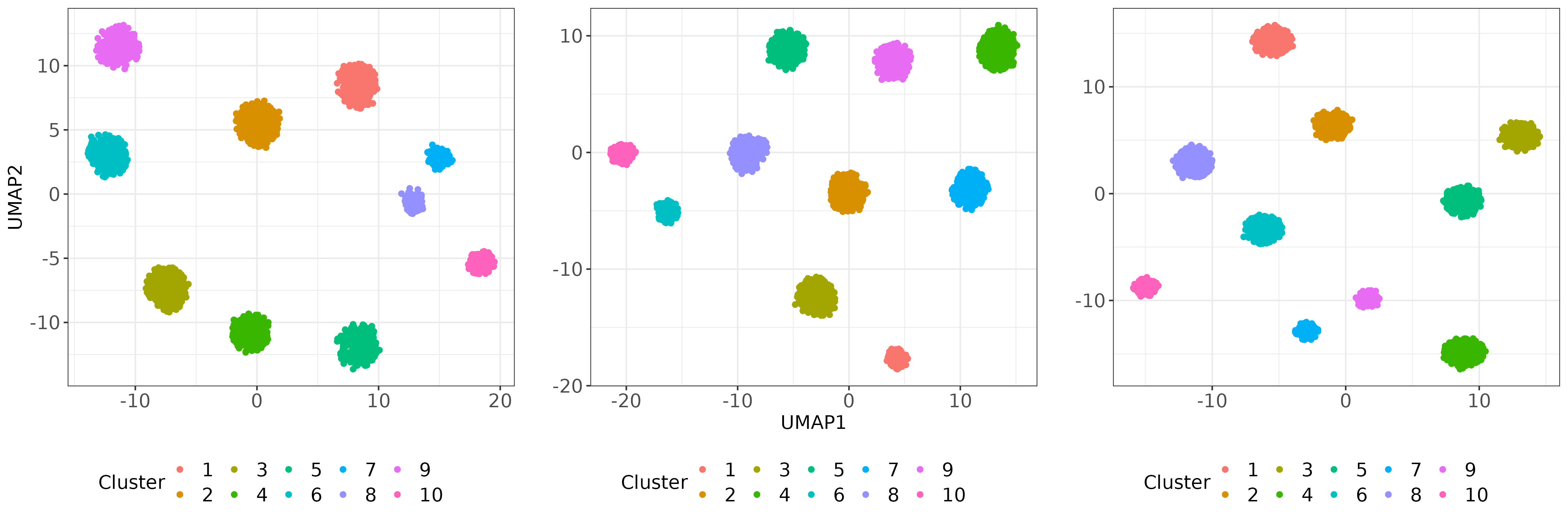}
		\caption{$k_{0}=10$, $p=2,500$
		}
		\label{sm fig: k10_p2500}
	\end{figure}
	
	\begin{figure}
		\includegraphics[height=.225\textheight] {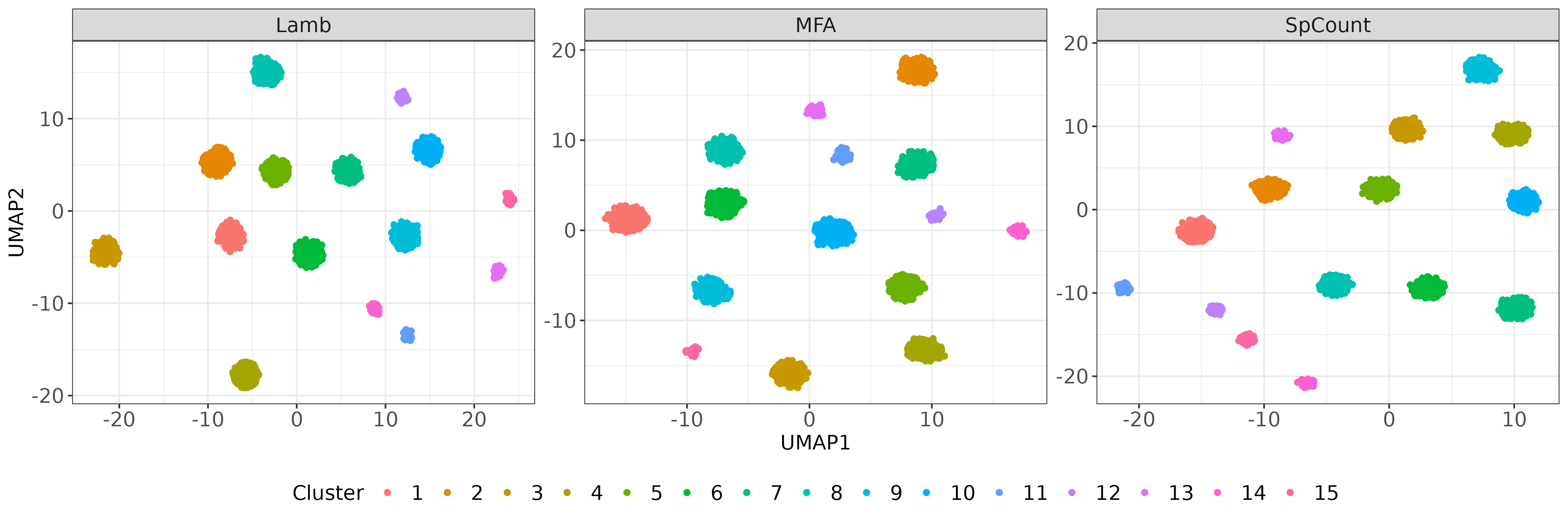}
		\includegraphics[height=.225\textheight] {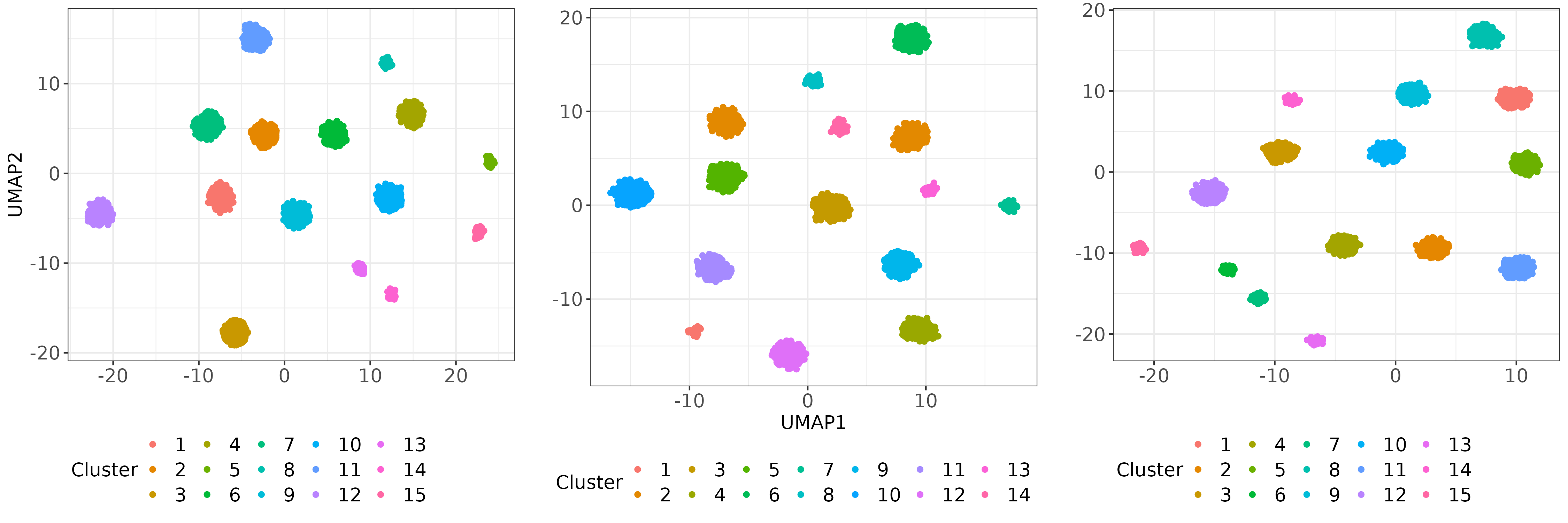}
		\caption{$k_{0}=15$, $p=1,000$ }
		\label{sm fig: k15_p1000}
	%
		\includegraphics[height=.225\textheight] {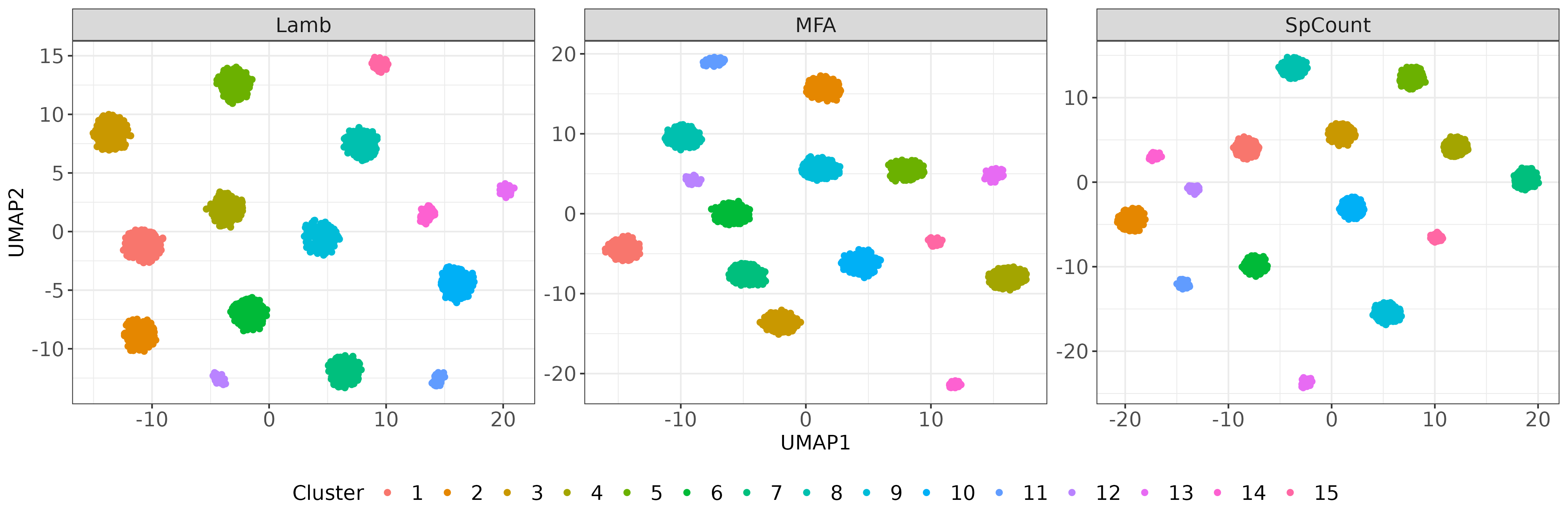}
		\includegraphics[height=.225\textheight] {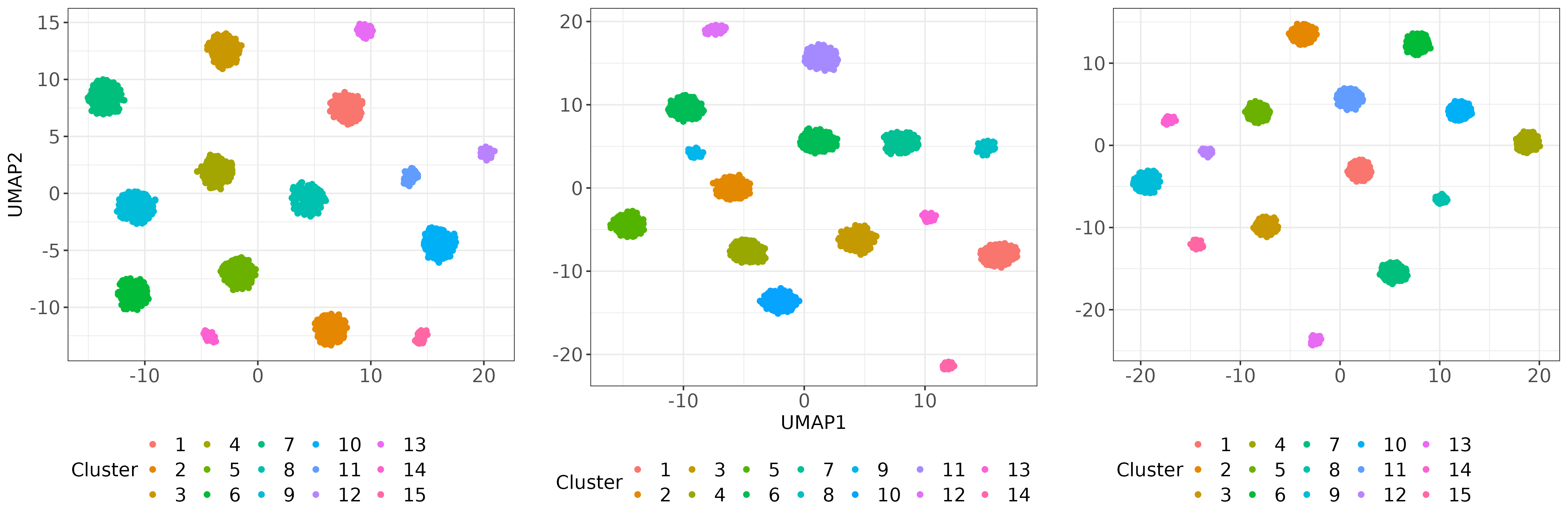}
		\caption{$k_{0}=15$, $p=2,500$ }
		\label{sm fig: k15_p2500}
	\end{figure}
	
	\begin{figure}
		\includegraphics[height=.225\textheight] {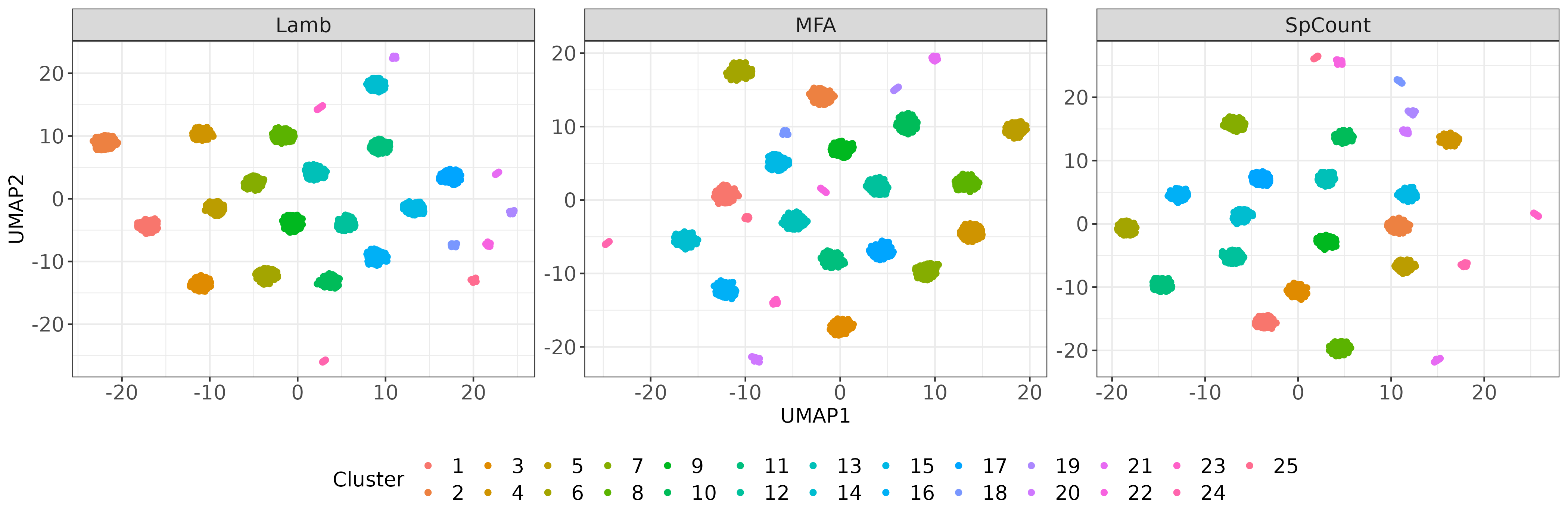}
		\includegraphics[height=.225\textheight] {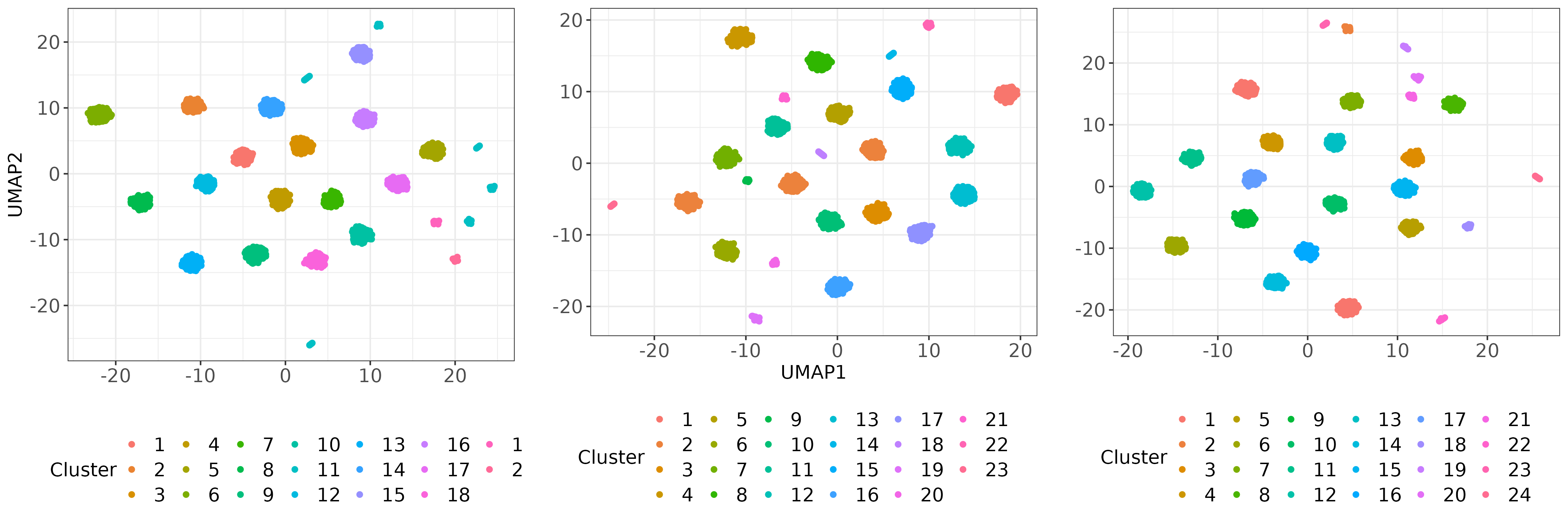}
		\caption{$k_{0}=25$, $p=1,000$ }
		\label{sm fig: k25_p1000}
	%
	%
		\centering
		\includegraphics[height=.225\textheight] {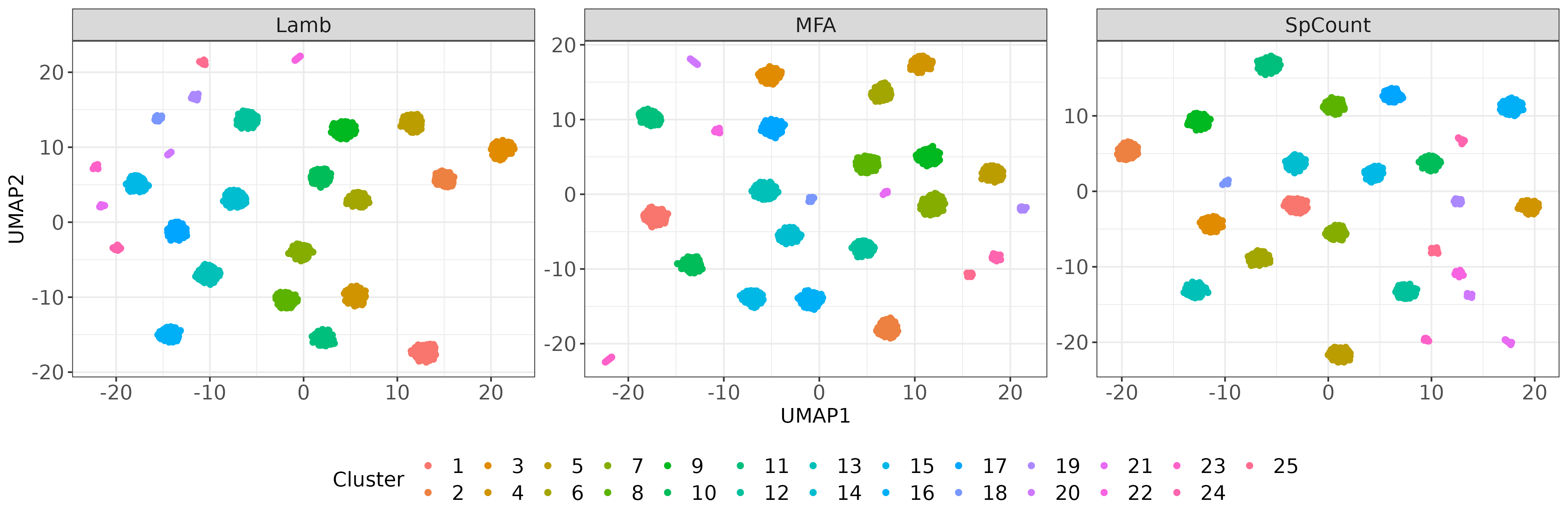}
		\includegraphics[height=.225\textheight] {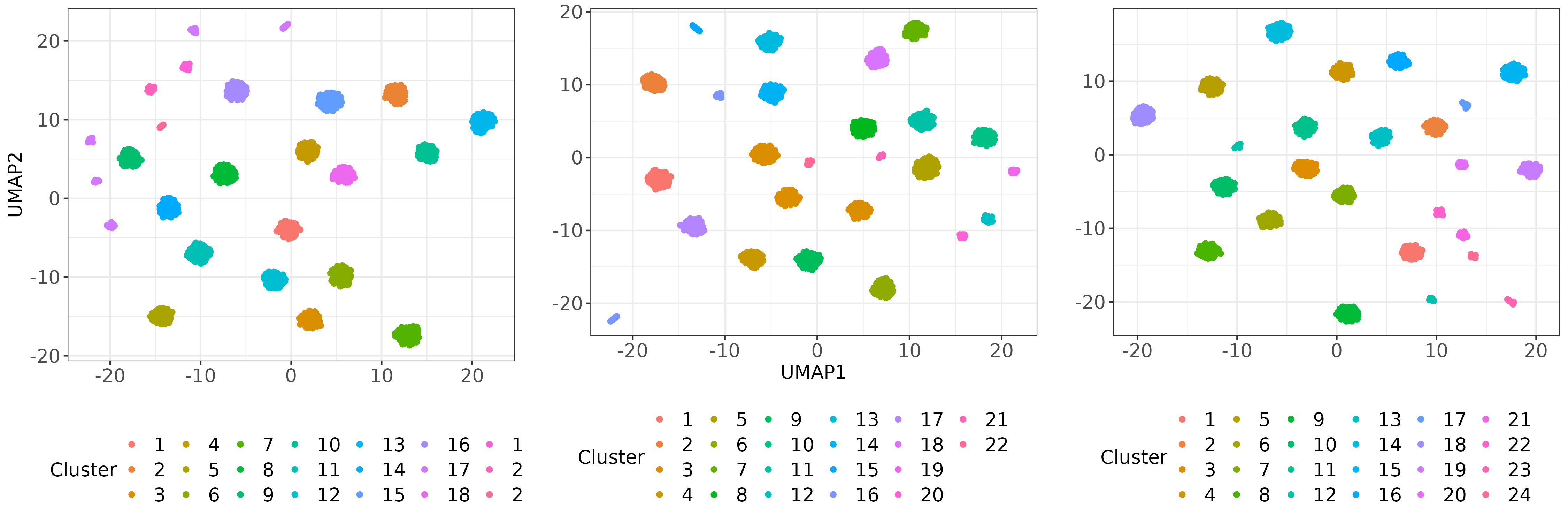}
		\caption{$k_{0}=25$, $p=2,500$ }
		\label{sm fig: k25_p2500}
	\end{figure}	

	\newpage

	\section{MCMC Convergence Diagnostics in the Cell Line Application}
	\label{sm sec:mcmc_diagnose}
	
	In this section, we provide convergence diagnostics of the MCMC sampler discussed
	in Section \ref{sec:posteriorcomputation}. 
	Note that most of the variables that are sampled are latent objects and not identifiable.
	Hence we compute the $\log$-likelihood of $y_{1:n}\mid \Lambda,\eta, \Sigma$ across the MCMC samples.
	On these $\log$-likelihoods, 
	we show traceplots and Geweke convergence diagnostics \citeplatex{geweke_mcmc}
	as implemented in the \texttt{coda} \texttt{R} package \citeplatex{coda_package}.
	The results are shown in Figure \ref{fig:gewekeplot} and they indicate evidence towards good mixing.

	\begin{figure}[ht]
		\centering
		\includegraphics[trim={0cm .5cm 1cm, 2cm},clip, height=0.35\textheight]{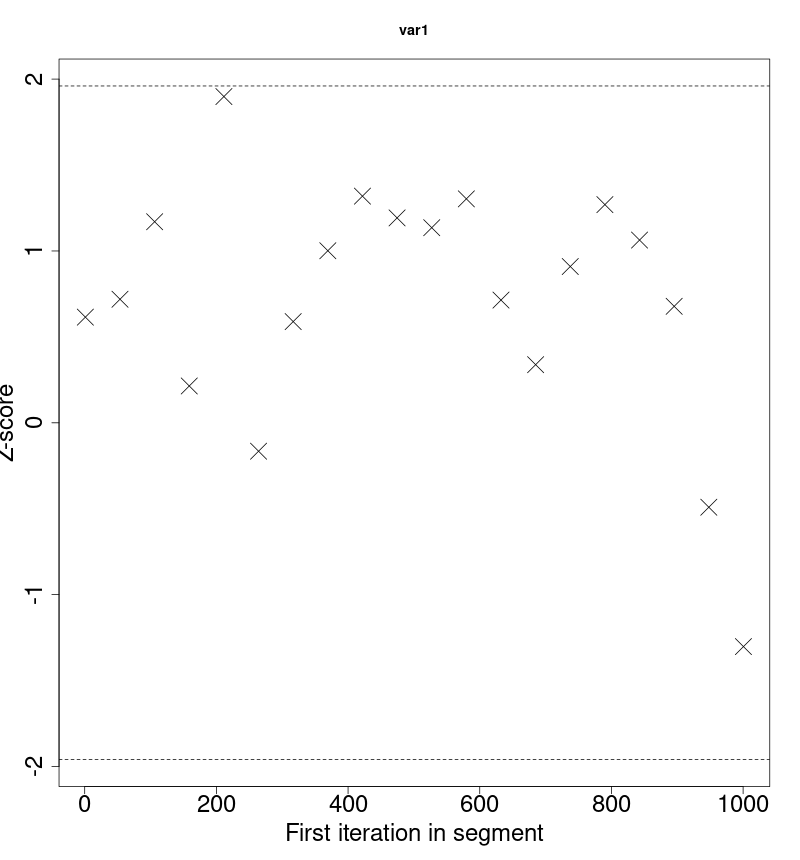}~~~
		\includegraphics[trim={0cm .5cm 1cm, 2cm},clip, height=0.35\textheight]{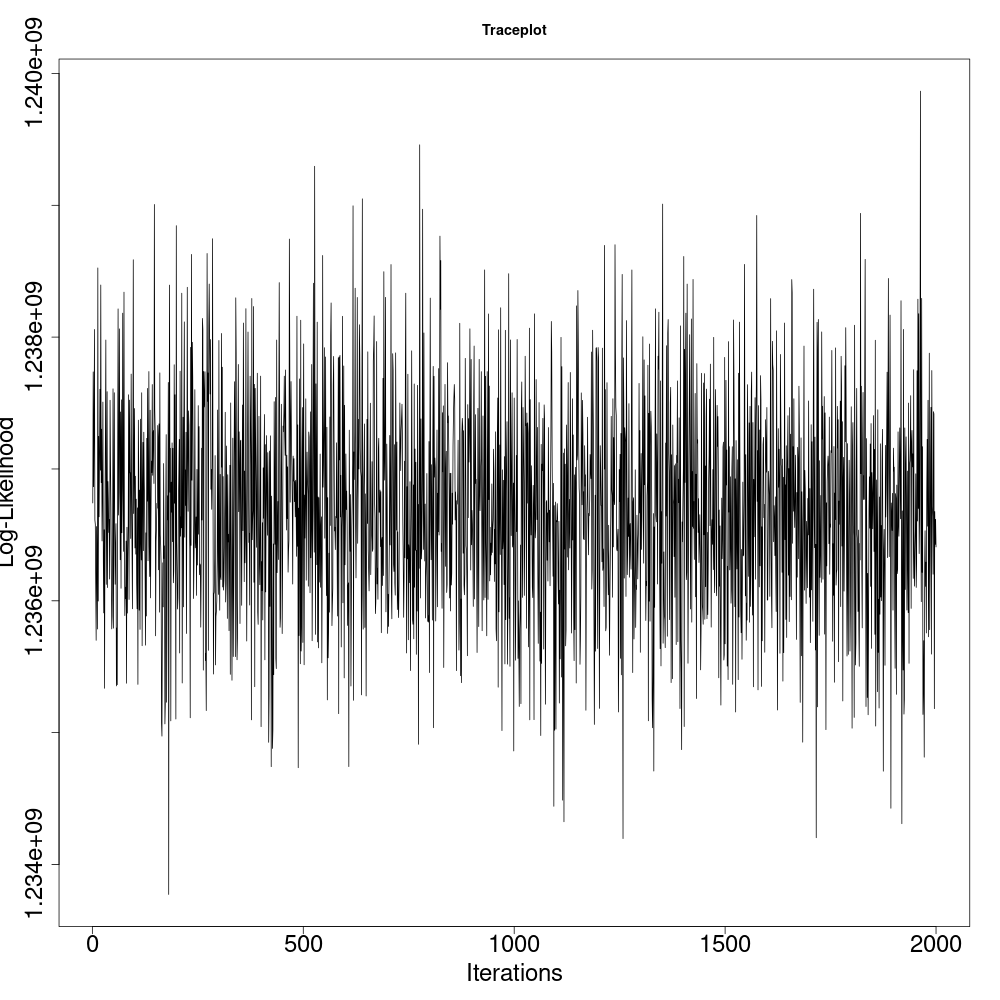}
		\caption{MCMC Convergence Diagnostics in the Cell Line Application: The joint $\log$-likelihoods of the $y_{1:n}\mid \Lambda,\eta, \Sigma$ are first calculated across the MCMC iterations.
			The Geweke convergence diagnostic on the $\log$-likelihoods is shown in the left panel and
			their traceplot in the right panel.}
		\label{fig:gewekeplot}
	\end{figure}

\bibliographystylelatex{natbib}
\bibliographylatex{refs}
\end{document}